\newcommand{\Comments}{1}
\definecolor{gray}{gray}{0.5}
\definecolor{lightred}{rgb}{1,0.6,0.6}
\definecolor{darkgreen}{rgb}{0,0.5,0}
\definecolor{myorange}{rgb}{0.8,0.7,0.5}
\definecolor{darkblue}{rgb}{0.0,0.0,0.2}
\newcommand{\mynote}[2]{\ifnum\Comments=0\textcolor{#1}{#2}\fi}
\newcommand{\mytodo}[2]{\ifnum\Comments=0%
  \todo[linecolor=#1!80!black,backgroundcolor=#1,bordercolor=#1!80!black]{#2}\fi}
\newcommand{\maneesha}[1]{{\mynote{myorange}{[MP: #1]}}}
\newcommand{\btw}[1]{}
\newtheorem{theorem}{Theorem}
\newtheorem{claim}{Claim}
\newtheorem{lemma}{Lemma}
\newtheorem{fact}{Fact}
\newtheorem{proposition}{Proposition}
\newtheorem{corollary}{Corollary}
\newtheorem{observation}{Observation}
\newtheorem{construction}{Construction}
\theoremstyle{definition}\newtheorem{definition}{Definition}
\theoremstyle{definition}\newtheorem{axiom}{Axiom}
\newcommand{\ones}{\mathds{1}}  
\newcommand{\reals}{\mathbb{R}}
\newcommand{\A}{\mathcal{A}}
\newcommand{\R}{\mathcal{R}}
\newcommand{\Y}{\mathcal{Y}}
\renewcommand{\bar}[1]{\overline{#1}}
\newcommand{\valtrades}{\mathsf{ValTrades}}
\newcommand{\valhist}{\mathsf{ValHist}}
\newcommand{\hsum}{\mathsf{sum}}
\newcommand{\fee}{\mathsf{fee}}
\newcommand{\nodominatedtrades}{\textsc{NoDominatedTrades}\xspace}
\newcommand{\pathindependence}{\textsc{PathIndependence}\xspace}
\newcommand{\strongpathindependence}{\textsc{StrongPathIndependence}\xspace}
\newcommand{\liquidation}{\textsc{Liquidation}\xspace}
\newcommand{\demandresponsiveness}{\textsc{DemandResponsiveness}\xspace}
\newcommand{\boundedreserves}{\textsc{BoundedReserves}\xspace}
\newcommand{\worstcaseloss}{\textsc{WorstCaseLoss}\xspace}
\renewcommand{\vec}[1]{{\mathbf{#1}}}
\renewcommand{\r}{\vec{r}}
\newcommand{\p}{\vec{p}}
\newcommand{\q}{\vec{q}}
\newcommand{\qZ}{\vec{q}_0}
\newcommand{\qh}{\vec{q}_h}
\newcommand{\0}{\vec{0}}
\title{An Axiomatic Characterization of CFMMs and Equivalence to Prediction Markets}
\author{Rafael Frongillo, Maneesha Papireddygari and Bo Waggoner}
\begin{abstract}
  Constant-function market makers (CFMMs), such as Uniswap, are automated exchanges offering trades among a set of assets.
  We study their technical relationship to another class of automated market makers, cost-function prediction markets.
  We first introduce axioms for market makers and show that CFMMs with concave potential functions characterize ``good'' market makers according to these axioms.
  We then show that every such CFMM on $n$ assets is equivalent to a cost-function prediction market for events with $n$ outcomes.
  Our construction directly converts a CFMM into a prediction market and vice versa.
  
  Conceptually, our results show that desirable market-making axioms are equivalent to desirable information-elicitation axioms, i.e., markets are good at facilitating trade if and only if they are good at revealing beliefs.
  For example, we show that every CFMM implicitly defines a \emph{proper scoring rule} for eliciting beliefs; the scoring rule for Uniswap is unusual, but known. 
  From a technical standpoint, our results show how tools for prediction markets and CFMMs can interoperate.
  We illustrate this interoperability by showing how liquidity strategies from both literatures transfer to the other, yielding new market designs.
\end{abstract}
\begin{document}

\maketitle

\break

\section{Introduction}
\label{sec:introduction}
\maneesha{H}

A prediction market is designed to elicit forecasts, such as for elections or sporting events.
Participants trade shares in $n$ securities, each tied to one of the $n$ possible outcomes, paying out \$1 if that outcome occurs;
the prices form a probability distribution over the outcomes representing an aggregate belief.
To resolve thin market problems, \citet{hanson2003combinatorial} and \citet{chen2007utility} propose introducing \emph{automated market makers}, mechanisms that offers to buy or sell any number of securities, with prices determined by the past history of transactions.
A significant amount of research has investigated the design of such markets based on convex \emph{cost functions} $C$ \cite{abernethy2013efficient,frongillo2018axiomatic}.
To acquire a bundle of securities $\r\in \reals^n$, the trader pays $C(\q+\r) -C(\q)$ cash to the market maker, where $\q \in \reals^n$ is the total numbers of securities sold so far.
A prominent example is the log market scoring rule (LMSR) due to \citet{hanson2003combinatorial}, given by $C(\q) = b \log \sum_{i=1}^n e^{q_i/b}$ for a liquidity parameter $b>0$.

More recently, inspired by blockchain applications, there has been significant theoretical and practical interest in the design of decentralized financial markets \cite{capponi2021adoption,angeris2022constant}.
These markets allow trade between $n$ assets without a fixed unit of exchange (``cash'').
Interestingly, like prediction markets, many decentralized exchanges also employ automated market makers, but for different reasons: automated market makers tend to have lower on-chain implementation costs than traditional order books~\citep[\S 4.1]{jensen2021introduction}.
The dominant paradigm is the \emph{constant-function market maker (CFMM)} based on a concave \emph{potential function} $\varphi$.
Here, if $\q \in \reals^n$ is the amount of market maker holdings or ``reserves'' in the $n$ assets, then a trade $\r\in\reals^n$ is accepted if $\varphi(\q + \r) = \varphi(\q)$, i.e., it keeps the potential function constant.
One of the most popular CFMMs is Uniswap where $\varphi(\q)= \sqrt{q_1 q_2}$.

Despite clear similarities between these two settings, prior work has considered prediction markets to be only analogous to CFMMs, but not technically related.%
\footnote{To give two examples, \citet[\S 3.2]{angeris2020improved} argue that the two market makers can diverge, and \citet{bichuch2022axioms} state the following: ``In a separate context, [automated market makers] for prediction markets were first proposed in \citep{hanson2003combinatorial}. Such a structure is fundamentally different from the constant function market makers considered herein insofar as a prediction market includes a terminal time at which bets are realized.'' }
Indeed, there are obvious differences between the settings: prediction markets rely on ``cash'', while CFMMs do not require a special unit of account; prediction markets only offer certain types of outcome-dependent securities (and can manufacture as many of these as they want), while CFMMs deal in arbitrary assets in limited supply; and so on.
Furthermore, the \emph{goals} of the two types of markets are very different: prediction markets seek to elicit information, whereas CFMMs seek to facilitate trade.

Nonetheless, we show a tight technical equivalence between prediction markets and CFMMs.
We give straightforward reductions to transform a cost function $C$ for a prediction market into a potential function $\varphi$ for a CFMM, and vice versa.
The reductions give one-to-one correspondence between trades placed in both markets, while preserving the relevant market properties.
An important implication is that markets designed to elicit information are also good at facilitating trade, and vice versa.
We illustrate our results with several examples; among them,
the prediction market equivalent to Uniswap, a new homogeneous CFMM from LMSR (Figure~\ref{fig:intro-figure-lmsr}), and a close connection between Brier score and a hybrid CFMM.
We hope our results can inspire new market designs and insights in both literatures.

\begin{figure}[t]
  \includegraphics[width = 1.8in]{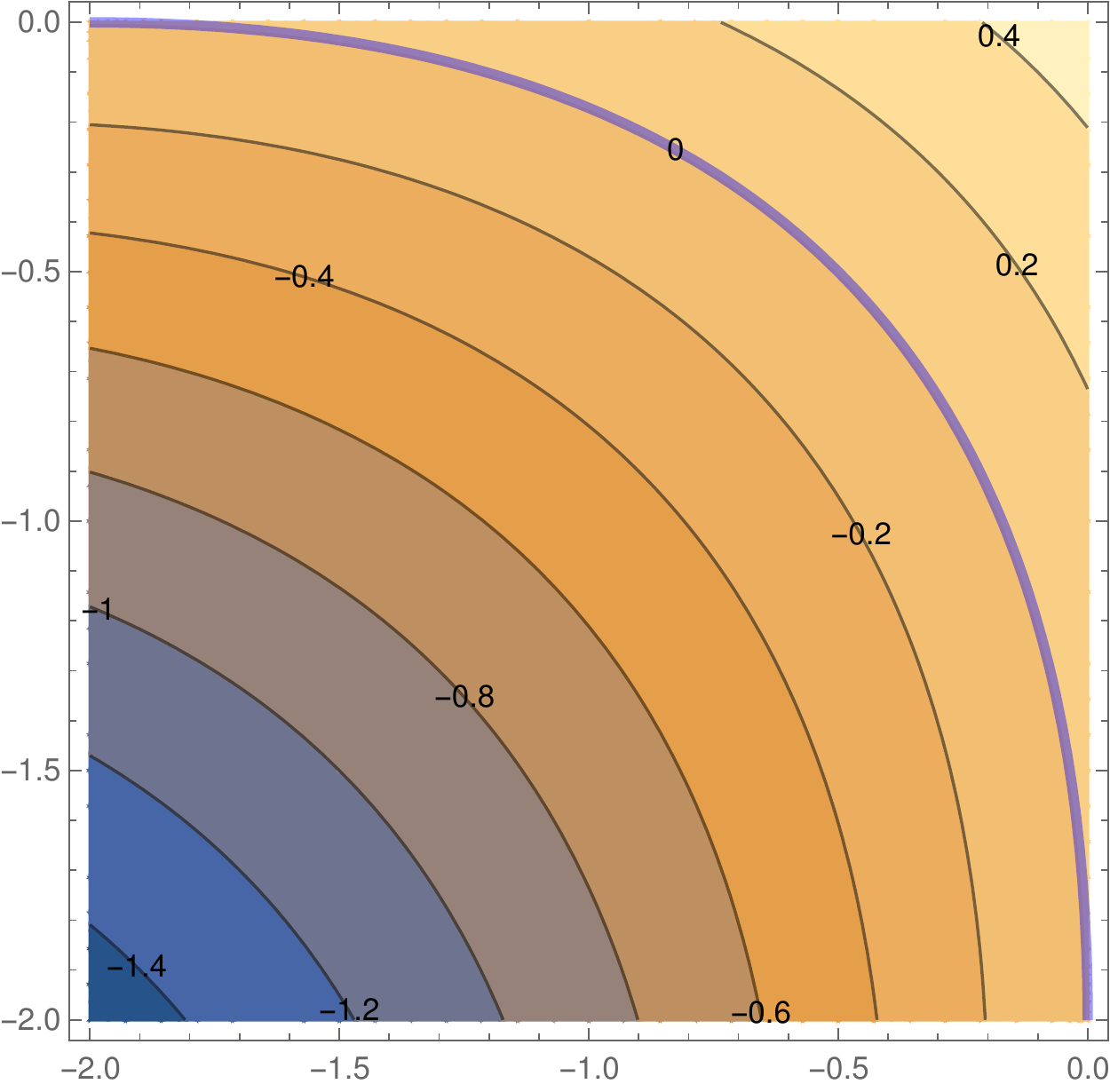}
  \includegraphics[width = 1.8in]{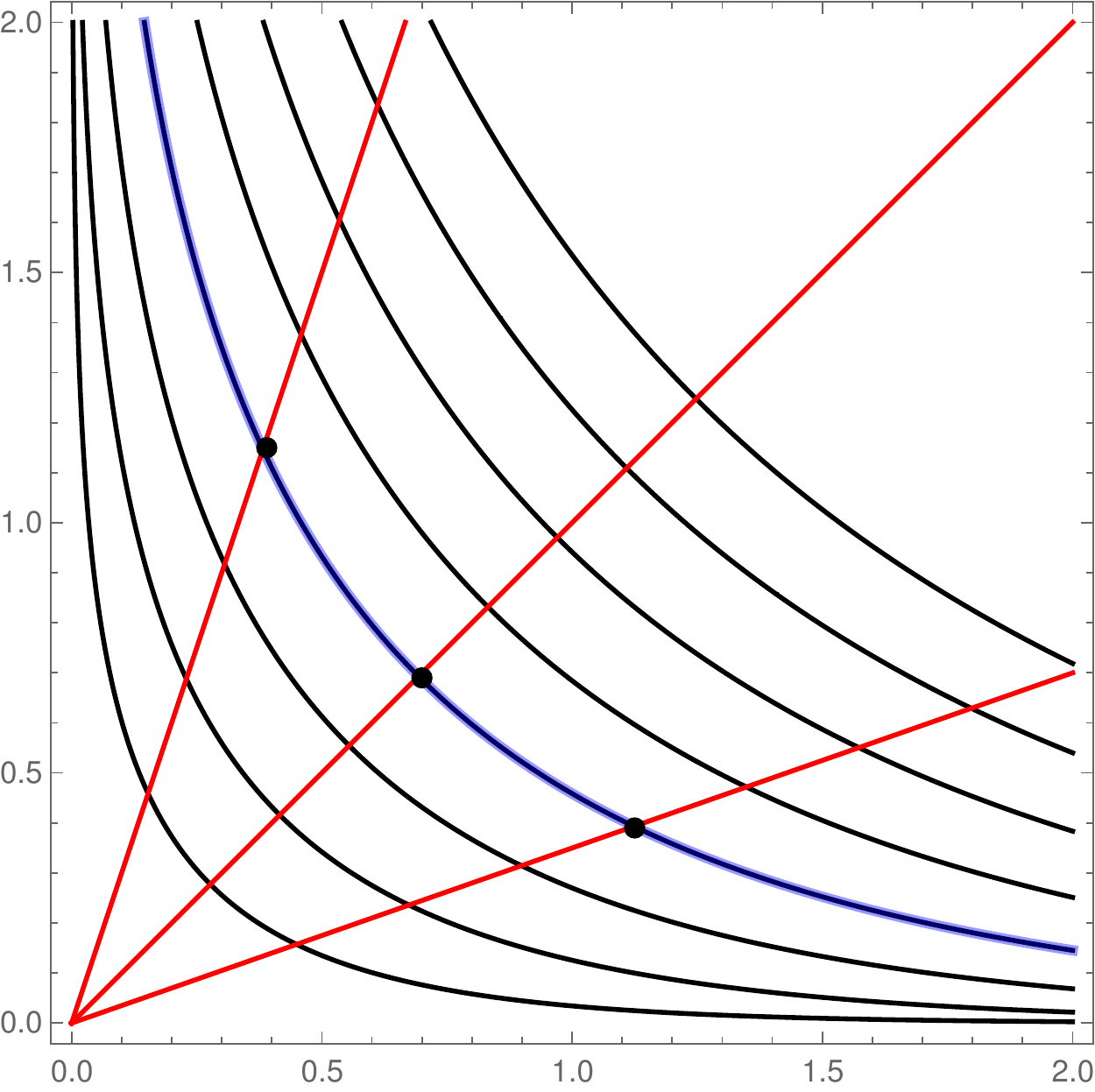}
  \includegraphics[width = 1.8in,trim={5 5 20 5},clip]{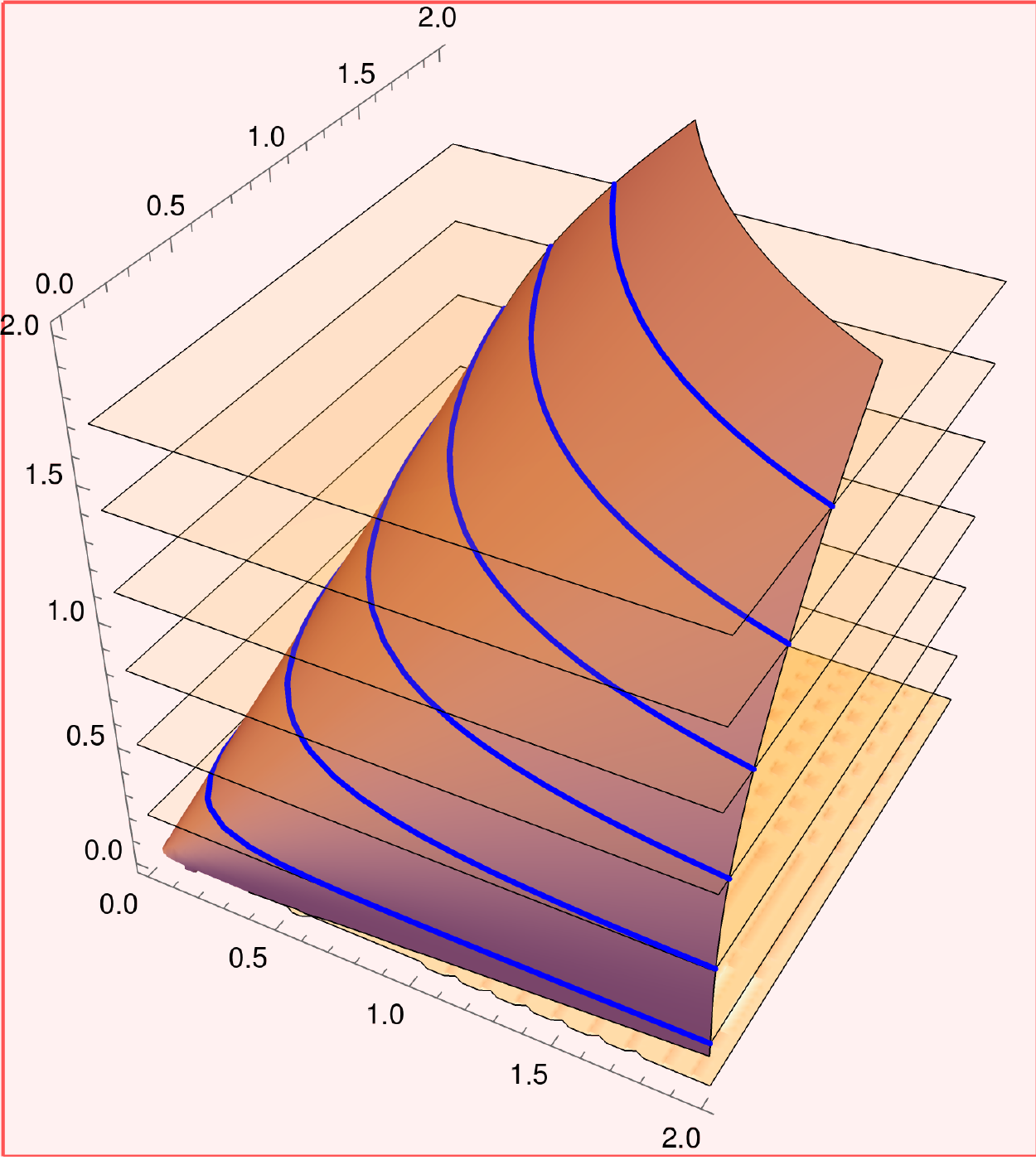}
  \caption{The ``reserves-aware'' Construction~\ref{cons:bounded-reserves} applied to the log market scoring rule (LMSR) of \citet{hanson2003combinatorial}.  (L) The level sets of LMSR, with the 0 level set highlighted in blue.  (M) Taking the 0 level set of LMSR and reflecting it into the positive orthant, we assign $\varphi=1$ on this set (again in blue).  We then derive the other level sets by shrinking or expanding uniformly toward or away from the origin: the $\alpha$ level set, $\alpha > 0$, is given by scaling the 1 level set by $\alpha$.  As we will see, this value $\alpha$ can be interpreted as the liquidity level of the market.  (R) The resulting 1-homogeneous potential $\varphi$, which satisfies all of our axioms (\S~\ref{sec:bounded-res-examples}).}
  \label{fig:intro-figure-lmsr}
\end{figure}

\subsection{Outline of results}
We begin in \S~\ref{sec:characterization} with an axiomatic characterization of the class of constant-function market makers (CFMMs) of interest in this paper.
Focusing on ``vanilla'' CFMMs that do not adapt liquidity, we show in  Proposition~\ref{prop:fixedq-cfmm} that an automated market maker satisfies a certain set of baseline axioms if and only if it can be represented by a CFMM with a concave, increasing $\varphi$.
This result assumes the initial reserves are fixed; we give a similar result in Theorem~\ref{thm:itsacfmm} allowing any initial reserves where $\varphi$ is quasiconcave and continuous.

We give our main equivalence in \S~\ref{sec:equivalence-cfmms-pred}.
We first recall that cost-function prediction markets, where $C$ satisfies certain criteria, have been characterized by prior work as automated market makers satisfying a path independence axiom and an incentive-compatibility axiom (see Fact~\ref{fact:elicitation-cost-function}).
We give a reduction from any such $C$ to a concave, increasing $\varphi$, resulting in a CFMM that satisfies desirable market-making axioms (Theorem \ref{thm:cost-to-CFMM}).
Then, we give a reduction from any concave, increasing $\varphi$ to a convex $C$ satisfying the criteria, resulting in a prediction market with desirable incentive-compatibility axioms (Theorem \ref{thm:CFMM-to-cost}).
The conceptual implication is that design for good information-elicitation properties results in good market-making properties, and vice versa.
In fact, an immediate corollary is that any CFMM defines a \emph{proper scoring rule}, a function for eliciting truthful predictions (Corollary \ref{cor:CFMM-scoring}).

Our equivalence above is not fully satisfying in some cases, however, because for some initial conditions the resulting CFMMs can run out of their reserves.
In \S~\ref{sec:reserves-and-liqudity}, we begin by characterizing markets that keep their reserves nonnegative in addition to our previous axioms.
We then give a more sophisticated reserves-aware reduction from prediction markets to CFMMs, using the \emph{perspective transform} technique from convex analysis.
The result is a CFMM $\varphi$ whose level sets are not parallel, as are those of cost functions (Figure~\ref{fig:intro-figure-lmsr}(L)), but instead ``stretched'', with curvature adapting to the amount of reserves available (Figure~\ref{fig:intro-figure-lmsr}(M)).
We show that every 1-homogeneous (scale-symmetric) potential function, the dominant CFMM design in practice, can be obtained via our construction.

We conclude in \S~\ref{sec:future} with a discussion of other practical considerations, mainly \emph{liquidity adaptation}.
In applications, CFMMs are designed to allow the reserves to grow (e.g.\ by lending), and the liquidity provided should grow as well \citep{angeris2020when}.
Similarly, prediction-market research has studied how to automatically grow and adapt the liquidity of the market over time, again using the perspective transform \citep{abernethy2014general}.
We begin an investigation of how these lines of research can inform each other.

\subsection{Equivalence in a nutshell}

Let us give the key intuition for our equivalence.
First, to convert a prediction market into a CFMM, suppose we have cost function $C$.
We will turn it into a rule for pricing any set of $n$ assets, without money.
Observe that a unit of cash is equivalent to the ``grand bundle'', represented as the all-ones vector $\ones\in\reals^n$, containing one share of each security: exactly one of them will pay off \$1, and the rest \$0, so the grand bundle is worth exactly \$1.
We can therefore simulate transactions in a prediction market with only the $n$ assets (securities) and no cash: a trade of $\r\in\reals^n$ in exchange for $c\in\reals$ units of cash can be expressed as the combined trade $\r' = \r - c\ones$.

We can now simply set $\varphi(\q) = -C(-\q)$ and, perhaps surprisingly, obtain a CFMM where valid trades are equivalent to those allowed by the original cost function.
The negations resolve a difference in sign conventions (i.e., whether $\q$ is interpreted as a gain or loss) and convert the convex $C$ to a concave $\varphi$.
We will show that concavity is equivalent to the desirable trading axiom of \emph{demand responsiveness}; and $\varphi$ also satisfies other nice trading axioms, like \emph{liquidation}: the trader can purchase any bundle of assets for some amount of any given asset.
Perhaps surprisingly, these and other market-making axioms come ``for free'', as long as $C$ is a good prediction-market cost function.
One important exception is that ideally CFMMs do not deplete their reserves.
As described above, we will fix this issue with a more sophisticated construction based on the perspective transform that guarantees nonnegative reserves.
Figure~\ref{fig:intro-figure-lmsr} illustrates that construction applied to LMSR.

In the other direction, one can turn any CFMM into a prediction market with good elicitation properties.
Given a CFMM with potential function $\varphi$, we show that it has an equivalent cost function defined as follows: Let $C(\q)=c$ for the constant $c$ satisfying $\varphi(c\ones - \q) = \varphi(\qZ)$ where $\qZ\in\reals^n$ is the vector of initial reserves and again $\ones\in\reals^n$ is grand bundle (the all-ones vector).
For example, as we recover in \S~\ref{sec:equivalence-examples}, the following cost function is equivalent to Uniswap $\varphi(\q)= \sqrt{q_1 q_2}$ when the initial reserves satisfy $\varphi(\qZ) = k$,
\begin{equation}
  \label{eq:uniswap-cost-intro}
    C(\q) = \frac{1}{2}\left(q_1+q_2+\sqrt{4k^2 + (q_1-q_2)^2}\right)~.\\
\end{equation}
In fact, by standard prediction market facts, this general result implies that every ``good'' CFMM can be converted into a \emph{proper scoring rule} \citep{gneiting2007strictly};
we derive the scoring rule for Uniswap in \S~\ref{sec:equivalence-examples}.

\subsection{Related work}

The literature on CFMMs contains several axiomatic results in a spirit related to our work in \S \ref{sec:characterization}, e.g. \citet{angeris2020improved,bichuch2022axioms,schlegel2022axioms}.
For example, \citet{bichuch2022axioms} characterize the set of ``ideal'' CFMMs from among the class of all CFMMs and \citet{schlegel2022axioms} axiomatically characterize some sub-classes of CFMMs.
A major difference between our axiomatic results in \S~\ref{sec:characterization} and the above works is that we do not take the structure of CFMMs as given, for example, existence of a potential/utility function or its concavity.
We derive these attributes from the axioms directly.

Several authors have noted pieces of the equivalence we present.
Most technically related is \citet{chen2007utility}, which proposes the constant-utility market maker for prediction markets.
Their market maker accepts trades that, according to some subjective fixed belief (probability distribution) and risk-averse utility function, maintain its expected utility at a constant value.
Intriguingly, they observe in their eq.\ (14) that such a market maker with log utility is equivalent to a cost function of the form~\eqref{eq:uniswap-cost-intro}, which turns out to be the result of our conversion of Uniswap to a prediction market; see \S~\ref{sec:equivalence-examples}.
We emphasize that \citet{chen2007utility}, unlike the current paper, focused only on prediction markets, i.e. do not show how to use their market maker for more general classes of assets.
Within the prediction market context, they also only give equivalences to a certain class of cost functions, those corresponding to weighted pseudo-spherical scoring rules.

Other works have specifically connected market makers from decentralized finance and prediction markets.
For example, \citet{paradigm2021universal} discusses how to apply the functional format of the LMSR as a potential function in a CFMM.
Similarly, \citet{manifold2022maniswap} and \citet{gnosis2022automated} apply Uniswap to the case where assets are contingent securities, obtaining a prediction market.
Such works can be described as applying a functional form from one context to a different context, without justifying why they will perform well in the new context.
In contrast, this work shows how to convert any cost function to a CFMM and vice versa via a reduction that guarantees to preserve axiomatic properties of the original.

Other technical tools come from the prediction market literature.
We rely specifically the cost-function market formulation \citep{chen2007utility,abernethy2013efficient}, axiomatic approaches to prediction markets \citep{abernethy2013efficient,abernethy2014general,frongillo2018axiomatic}, and the duality between prediction markets and scoring rules~\citep{hanson2003combinatorial,abernethy2014general}.
We also make use of constructions from the literature on financial risk measures, specifically obtaining a convex risk measure from a set of ``acceptable'' positions~\citep{follmer2008convex,follmer2015axiomatic}; the connection between risk measures and prediction markets has been noted several times~\citep{chen2010new,agrawal2011unified,frongillo2015convergence}.

\subsection{Notation}
Vectors are denoted in bold, e.g. $\q \in \reals^n$.
The all-zeros vector is $\0 = (0,\dots,0)$ and the all-ones vector is $\ones = (1,\dots,1)$.
The vector $\bm{\delta}_i = (0,\dots,1,\dots,0)$, i.e. all-zero except for a one in the $i^{th}$ position.

Comparison between vectors is pointwise, e.g. $\q \succ \q'$ if $q_i > q'_i$ for all $i=1,\dots,n$, and similarly for $\succeq$. 
We say $\q \succneqq \q'$ when $q_i \geq q'_i$ for all $i$ and $\q \neq \q'$.
Define $\reals^n_{\geq 0} = \{\q \in \reals^n \mid \q \succeq \0\}$, $\reals^n_{>0} = \{\q \in \reals^n \mid \q \succ \0\}$, et cetera.

Let $f: \reals^n \to \reals$.
We will use the following conditions.
\begin{itemize}
\item \emph{increasing}:\; $f(\q) > f(\q')$ for all $\q,\q'\in\reals^n$ with $\q \succneqq \q'$.
\item \emph{convex}:\; $\forall x,y\in \reals^n, \lambda\in[0,1]$, $f(\lambda\cdot x+(1-\lambda)y)\leq \lambda f(x)+(1-\lambda)f(y)$.
\item \emph{concave}:\; $-f$ is convex.
\item \emph{quasiconcave}:\; $\forall c \in \reals$, the set $\{x \mid f(x) \geq c\}$ is convex.
\item \emph{$\ones$-invariant}:\; $f(\q+\alpha\ones) = f(\q) + \alpha$ for all $\q\in\reals^n$, $\alpha\in\reals$.
\item \emph{1-homogeneous} (on $\reals^n_{>0}$):\; $f(\alpha\q) = \alpha f(\q)$ for all $\q\succ\0$ and $\alpha>0$.
\end{itemize}
Given a list of vectors, e.g. $h = (\vec{r_1},\dots,\vec{r_t})$, the sum is denoted $\hsum(h) = \vec{r_1} + \cdots + \vec{r_t}$. 
Concatenation of a new trade $\r$ onto a history $h$ is denoted $h \oplus \r$.

\section{A Characterization of Fixed-Liquidity CFMMs}\label{sec:characterization}

To lay a foundation, we consider market makers that have ``fixed liquidity'': they do not accept loans nor charge transaction fees.
The fixed-liquidity case allows for a simple, universal characterization: if one wants an automated market to satisfy certain natural axioms, it \emph{must} be a CFMM as we show in Theorem \ref{thm:itsacfmm}.

\subsection{Automated market makers for general asset markets}

We consider automated market makers that process trades sequentially.
There are $n$ assets $\{1,\dots,n\}$.
A \emph{trade} or \emph{bundle} $\r \in \reals^n$ represents $r_i$ net units of each asset $i$ being given to the market maker by a trader.
A positive $r_i$ represents a net transfer from the trader to the market maker, and negative $r_i$ represents a net transfer from the market maker to the trader.
A \emph{history} $h$ is an ordered list of trades.
The empty history is denoted $\epsilon = ()$.

A market maker is defined by a function $\valtrades(h)$ that specifies, for any valid history $h$, the set of valid trades that it is willing to accept from the next participant. 
For each arriving participant $t=1,\dots$:
\begin{itemize}
  \item The current market history is $h_{t-1}$.
  \item The valid trades are given by $\valtrades(h_{t-1}) \subseteq \reals^n$.
  \item Participant $t$ selects some trade $\r_t \in \valtrades(h_{t-1})$.
  \item The market history is now $h_t = h_{t-1} \oplus \r_t$.
\end{itemize}
The market maker begins with some \emph{initial reserves} $\qZ$.
It will be convenient to include the initial reserves in the history, so in the above we would have $h_0 = (\qZ)$.
At any history $h$, we define the notation
  \[ \qh = \hsum(h) ~, \]
meaning the current reserves are equal to the sum of the trades in the history.

Given a market maker defined by $\valtrades$ and a history $h$, we let $\valhist(h)$ denote the \emph{valid histories extending $h$}, i.e.\ all valid histories with $h$ as a prefix.
Formally, $\valhist(h)$ is the smallest set containing $h$ such that, for all $h' \in \valhist(h)$ and all $\r \in \valtrades(h')$, we have $h' \oplus \r \in \valhist(h)$.
In particular, $\valtrades(\epsilon)$ may be interpreted as the set of valid initial reserves for this market maker, and $\valhist(\epsilon)$ the set of valid histories.
However, we abuse notation slightly in that $\valhist(\epsilon)$ is defined not to include $\epsilon$ itself, i.e. it only includes nonempty histories.

\paragraph{Examples and CFMMs.}
The simplest example of an automated market maker is when $\valtrades(h)$ is a constant set that does not depend on $h$.
Such a market maker would offer a fixed exchange rate between assets.
However, we would like the exchange rates to adapt depending on demand.
In the context of decentralized finance, a common approach is the following.
\begin{definition}[CFMM]
  A \emph{constant-function market maker (CFMM)} is a market such that
  \[ \valtrades(h) = \left\{ \r \in \reals^n \mid \varphi(\r + \qh) = \varphi(\qh) \right\} , \]
  for some function $\varphi: \reals^n \to \reals$, called the potential function.
\end{definition}
In a CFMM, the market maker accepts any trade keeping the potential of its reserves $\varphi(\qh)$ constant.
Some common examples are Uniswap or the constant product market where $\varphi(\r)=(r_1\cdot r_2\cdots r_n)^{1/n}$, and its generalization Balancer or constant geometric mean market which has $\varphi(\r)=r_1^{\alpha_1}\cdot r_2^{\alpha_2}\dots r_n^{\alpha_n}$ with $\sum_{i=1}^n \alpha_i =1, \alpha_i \geq 0 \forall i$.

\subsection{Axioms}\label{subsec:axioms}

An axiom is a potentially desirable property for a market maker as defined by $\valtrades$.
We begin with a few intuitive axioms.

\begin{axiom}[\nodominatedtrades]
  For all $h\in\valhist(\epsilon)$, $\valtrades(h)$ does not contain any \emph{dominated trade}, i.e. an $\r$ where, for some $\r' \in \valtrades(h)$, we have $\r' \succneqq \r$.
\end{axiom}

Assuming all assets have nonnegative value, a rational trader would never select a dominated trade.
Conversely, as typically $\0\in\valtrades(h)$, meaning a trader can chose not trade altogether, then under \nodominatedtrades no trader can take assets from the market maker without giving something in return.
This latter property is the typical condition of \emph{no-arbitrage} (within the system) in the finance literature.

\begin{axiom}[\pathindependence]
  For all $h \in \valhist(\epsilon)$, for all $\r \in \valtrades(h)$ and $\r'\in \valtrades(h \oplus \r)$, we have that $\r+\r'\in \valtrades(h)$.
\end{axiom}
\pathindependence ensures that a trader cannot profit by splitting a single large trade into two smaller sequential trades instead.
By backward induction, it implies that any \emph{sequence} of trades can just as well be made in a single bundle:
if $h' = h \oplus r_1 \oplus \cdots \oplus r_j$ is a valid history, we must have $\r_1 + \cdots + \r_j \in \valtrades(h)$.

\begin{observation} \label{obs:PI-strong}
  \pathindependence implies the following inductive version: for all $h \in \valhist(\epsilon)$ and all $h' \in \valhist(h)$, we have $\q_{h'} - \q_h \in \valtrades(h)$.
\end{observation}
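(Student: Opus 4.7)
The plan is to unroll the recursive definition of $\valhist(h)$: any $h' \in \valhist(h)$ has the form $h \oplus \r_1 \oplus \cdots \oplus \r_k$ for some $k \geq 0$, where each $\r_j$ lies in $\valtrades(h \oplus \r_1 \oplus \cdots \oplus \r_{j-1})$. Since $\hsum$ is just addition, $\q_{h'} - \q_h = \r_1 + \cdots + \r_k$, so it suffices to show this sum is a valid trade at history $h$. The natural tool is \pathindependence, which lets us fuse two consecutive trades into one. The question is in which order to apply this fusion.

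I would prove the claim by a \emph{backward} induction on the index $j$ running from $k$ down to $1$, with the inductive statement being that $\r_j + \r_{j+1} + \cdots + \r_k \in \valtrades(h \oplus \r_1 \oplus \cdots \oplus \r_{j-1})$. The base case $j=k$ is just the membership relation that makes $h'$ a valid extension. For the inductive step, set $h^* = h \oplus \r_1 \oplus \cdots \oplus \r_{j-1}$; the inductive hypothesis gives $\r_{j+1} + \cdots + \r_k \in \valtrades(h^* \oplus \r_j)$, and by construction $\r_j \in \valtrades(h^*)$. A single application of \pathindependence to $h^*$, with $\r = \r_j$ and $\r' = \r_{j+1} + \cdots + \r_k$, then yields $\r_j + \cdots + \r_k \in \valtrades(h^*)$. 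Taking $j = 1$ delivers the observation.

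The only subtle point is the degenerate case $h' = h$ (i.e.\ $k=0$), where the claim reduces to $\0 \in \valtrades(h)$; this is not implied by \pathindependence on its own and must either be treated by convention or the observation read as quantifying over strict extensions (a null-trade axiom is introduced later and would make this automatic). Apart from that, I anticipate no real obstacle: the argument is the standard telescoping of a sequence of trades into an aggregate bundle, and the key idea is simply to collapse from the right end, using \pathindependence rooted at each intermediate history in turn, rather than trying to collapse from the left (which would require relating $\valtrades(h \oplus \r_1 \oplus \cdots \oplus \r_j)$ to $\valtrades(h \oplus (\r_1 + \cdots + \r_j))$ and fail).
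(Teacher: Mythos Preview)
Your backward-induction argument is exactly what the paper intends; the sentence preceding the observation says explicitly ``By backward induction, it implies that any sequence of trades can just as well be made in a single bundle.'' Your caveat about the degenerate case $h' = h$ is also apt: the paper does not treat it, and in every actual use either the extension is strict or $\0 \in \valtrades(h)$ is supplied by Observation~\ref{obs:0-valtrades} via \liquidation and \nodominatedtrades.
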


\begin{axiom}[\liquidation]
For all $h\in\valhist(\epsilon)$ and all $\r,\r' \succneqq \0$, there exists $\beta \geq 0$ such that $\r - \beta \cdot \r' \in \valtrades(h)$.
\end{axiom}
In other words, a participant can supply any nonnegative bundle $\r$ and specify a nonnegative demand bundle $\r'$, and receive some multiple of $\r'$ in return for $\r$.
The \liquidation axiom captures the point of a ``market maker'', i.e., to offer to trade at some exhange rate between any pair of assets (or more generally, bundles).

Any market satisfying \liquidation and \nodominatedtrades must in particular offer the option to trade nothing; to see this set $\r = \r'$ for any nontrivial bundle $\r$.

\begin{observation} \label{obs:0-valtrades}
  \liquidation and \nodominatedtrades imply $\0 \in \valtrades(h)$ for all $h \in \valhist(\epsilon)$.
  Furthermore, any $\r \in \valtrades(h) \setminus\{\0\}$ can be written $\r = \r^+ - \r^-$ where $\r^+ = \max(\r,\0)$ and $\r^- = \min(\r,\0)$ satisfy $\r^+,\r^- \succneqq \0$.
\end{observation}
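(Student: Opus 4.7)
The plan is to establish the two claims in sequence, leveraging the hint in the paragraph immediately preceding the observation.

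For the first claim ($\0 \in \valtrades(h)$), I would fix any nontrivial $\r \succneqq \0$ and apply \liquidation with $\r' = \r$ to obtain some $\beta \geq 0$ with $(1 - \beta)\r \in \valtrades(h)$; the job is then to pin $\beta = 1$ using \nodominatedtrades. The key intermediate observation is \emph{ray uniqueness}: at most one scalar $\gamma$ can satisfy $\gamma \r \in \valtrades(h)$, since $\gamma_1 > \gamma_2$ would force $\gamma_1 \r \succneqq \gamma_2 \r$ (using $\r \succneqq \0$), violating \nodominatedtrades. With ray uniqueness in hand, I reapply \liquidation with give $= \r/k$ and want $= \r$ for each $k>0$: this produces some $\beta_k \geq 0$ with $(1/k - \beta_k)\r \in \valtrades(h)$, and by uniqueness $1/k - \beta_k = 1 - \beta$, i.e.\ $\beta_k = 1/k + \beta - 1$. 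The requirement $\beta_k \geq 0$ then gives $\beta \geq 1 - 1/k$ for every $k > 0$, hence $\beta \geq 1$. A matching upper bound $\beta \leq 1$ should come from a dual pairing of \liquidation calls (or a cross-ray dominance comparison), closing the argument and yielding $(1-\beta)\r = \0 \in \valtrades(h)$.

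Given the first claim, the second claim is short. Suppose $\r \in \valtrades(h) \setminus \{\0\}$. If $\r \succeq \0$, then in fact $\r \succneqq \0$ and $\r$ dominates $\0 \in \valtrades(h)$, violating \nodominatedtrades; if $\r \preceq \0$ and $\r \neq \0$, then $\0 \succneqq \r$ and $\0$ dominates $\r$, again violating \nodominatedtrades. So $\r$ must have at least one strictly positive and at least one strictly negative coordinate. The standard decomposition $\r^+ = \max(\r, \0)$ and $\r^- = -\min(\r, \0) = \max(-\r, \0)$ then satisfies $\r^+, \r^- \succneqq \0$ with $\r = \r^+ - \r^-$ by componentwise inspection (resolving the evident sign convention in the statement).

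The main obstacle is the upper bound $\beta \leq 1$ in the first part: same-ray rescalings of \liquidation cleanly give $\beta \geq 1$, but symmetry is not immediate, so closing the gap likely requires an additional ingredient---either a cross-ray application of \nodominatedtrades or a second \liquidation call paired dually---to rule out $\beta > 1$.
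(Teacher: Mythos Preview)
Your concern about the upper bound $\beta \leq 1$ is well-founded --- in fact, that step cannot be completed, because the first claim does not follow from \liquidation and \nodominatedtrades alone. Consider the market maker with $\valtrades(h) = \{\r \in \reals^n : \sum_i r_i = -1\}$ for every valid history $h$. \nodominatedtrades holds because $\r' \succneqq \r$ would force $\sum_i r'_i > \sum_i r_i$. \liquidation holds because for any $\r, \r' \succneqq \0$ the choice $\beta = (1 + \sum_i r_i)/\sum_i r'_i > 0$ lands $\r - \beta\r'$ on the hyperplane. Yet $\0 \notin \valtrades(h)$. In your notation this market has $1 - \beta = -1/\sum_i r_i < 0$, so $\beta > 1$; your lower bound $\beta \geq 1$ is correct and sharp, but no matching upper bound is available from these two axioms. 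The paper itself offers only the one-line hint preceding the observation (``set $\r = \r'$''), which yields $(1-\beta)\r \in \valtrades(h)$ for some $\beta \geq 0$ without pinning $\beta = 1$ --- exactly the gap you flagged. So your analysis actually goes further than the paper's and reveals that the observation, as stated, is missing a hypothesis.

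Your argument for the second claim, conditional on $\0 \in \valtrades(h)$, is correct and is how the paper deploys this fact elsewhere: once $\0 \in \valtrades(h)$, any nonzero $\r \in \valtrades(h)$ must be incomparable with $\0$ by \nodominatedtrades, forcing both $\r^+ = \max(\r,\0)$ and (with the corrected sign) $\r^- = -\min(\r,\0)$ to be nonzero.
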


\begin{axiom}[\demandresponsiveness]
  For all $h \in \valhist(\epsilon)$, if $\r-\r' \in \valtrades(h)$ for some $\r,\r' \succeq \vec{0}$, and if there exist $\alpha,\beta > 0$ such that $\alpha \r - \beta \r' \in \valtrades(h \oplus (\r-\r'))$, then $\beta \leq \alpha$.
\end{axiom}
In other words, if a participant supplies $\r$ in return for $\r'$ once, then the ``exchange rate'' should increase the second time in a row: more of $\r$ is needed to buy a corresponding amount of $\r'$.

\subsection{Characterization of fixed-liquidity CFMMs}

In this subsection we characterize automated market makers that satisfy the axioms above: CFMMs with increasing, concave potential functions.
In addition to setting the stage for the equivalence with prediction markets, this result places limits on the design space of automated market makers, or at least shows what axioms must be relinquished to move beyond the concave-CFMM framework.
We first prove that \liquidation, \nodominatedtrades, and \pathindependence already imply the market can be implemented as a CFMM for \emph{some} potential function $\varphi$.
\demandresponsiveness then gives concavity.
Finally, we extend these results from fixed initial market reserves to show that, when considering multiple initial reserves, the market is still without loss of generality a CFMM for some \emph{quasiconcave} potential.

\begin{lemma} \label{lemma:fixedq-axioms-cfmm}
  Fix the initial reserves $\q_0$.
  If a market maker $\valtrades$ with $\valtrades(\epsilon) = \{\q_0\}$ satisfies \liquidation, \nodominatedtrades, and \pathindependence, then it is a CFMM for some potential function $\varphi$.
\end{lemma}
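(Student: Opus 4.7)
The plan is to prove that $\valtrades(h)$ depends only on $\q_h$, and in fact equals $V - \q_h$ where $V = \q_0 + \valtrades((\q_0))$ is the set of reachable reserves; one can then set $\varphi$ to any function with $V$ as a level set, e.g., $\varphi = \mathds{1}_{\reals^n\setminus V}$.

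By Observation~\ref{obs:PI-strong}, every valid history satisfies $\q_h \in V$, which gives the easy inclusion $\valtrades(h) \subseteq V - \q_h$ for all $h \in \valhist(\epsilon)$. I prove the reverse inclusion $V - \q_h \subseteq \valtrades(h)$ by induction on the length of $h$; the base case $h = (\q_0)$ is immediate from the definition of $V$.

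For the inductive step, write $h = h' \oplus \r_1$ and assume $\valtrades(h') = V - \q_{h'}$. Fix $\r \in V - \q_h$; if $\r = \0$ use Observation~\ref{obs:0-valtrades}, so suppose $\r \neq \0$ and decompose $\r = \r^+ - \r^-$ with $\r^\pm = \max(\pm\r, \0)$. A \nodominatedtrades argument at the history $(\q_0)$ forces $\r^+, \r^- \succneqq \0$: both $\q_h - \q_0$ and $\q_h + \r - \q_0$ lie in $\valtrades((\q_0))$ (with $\0 \in \valtrades((\q_0))$ in the case $\q_h = \q_0$, by Observation~\ref{obs:0-valtrades}), and if $\r$ had only one sign then one of these trades would dominate the other.

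The crux of the argument is to show that the $\beta$ produced by \liquidation equals exactly $1$, so that the resulting trade is $\r$ itself. Applying \liquidation at $h$ with $(\x,\y) = (\r^+, \r^-)$ yields some $\beta^* \geq 0$ with $\r^+ - \beta^* \r^- \in \valtrades(h)$. By \pathindependence, $\r_1 + \r^+ - \beta^* \r^- \in \valtrades(h')$, while the inductive hypothesis gives $\r_1 + \r \in \valtrades(h')$. These two elements of $\valtrades(h')$ differ by $(1-\beta^*)\r^-$, so any $\beta^* \neq 1$ produces a strict $\succneqq$-dominance inside $\valtrades(h')$, contradicting \nodominatedtrades at $h'$. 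Hence $\beta^* = 1$ and $\r \in \valtrades(h)$, closing the induction. With $\valtrades(h) = V - \q_h$ for every valid $h$, setting $\varphi = \mathds{1}_{\reals^n\setminus V}$ gives $\{\r : \varphi(\r + \q_h) = \varphi(\q_h)\} = V - \q_h = \valtrades(h)$, exhibiting the market as a CFMM.
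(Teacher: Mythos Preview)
Your proof is correct and takes essentially the same approach as the paper: define the reachable-reserves set $V$ (the paper calls it $\R$), show by induction that $\valtrades(h) = V - \q_h$ via a Liquidation-then-\nodominatedtrades argument forcing $\beta=1$, and set $\varphi$ to the indicator of the complement of $V$. The only cosmetic difference is that the paper runs the domination contradiction at the base history $(\q_0)$ using Observation~\ref{obs:PI-strong} directly, whereas you run it at the penultimate history $h'$ using one step of \pathindependence together with the inductive hypothesis; the content is the same.
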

The key point of the proof is that there is a fixed set of reachable states of the market $\R$, and any of these is always reachable in a single trade regardless of the history.
We can then define $\varphi$ to vanish exactly on that set.
\begin{proof}
  Let $h$ be any valid history with initial reserves $q_0$.
  We first prove that $h \in \valhist((\q_0)) \iff \qh-\q_0 \in \valtrades((\q_0))$, by induction on the length of $h$.
  The base case $h = (\q_0)$ is immediate.
  For the inductive step, let $h \in \valhist(\epsilon)$ and consider a trade $\r$.
  We show $\r \in \valtrades(h) \iff \r + \qh -\q_0 \in \valtrades((\q_0))$.
  
  ($\implies$)
  Let $\r \in \valtrades(h)$.
  By Observation \ref{obs:PI-strong}, $\r + \qh -\q_0 \in \valtrades((\qZ))$.

  ($\impliedby$)
  Let $\r$ be some trade such that $\r+\qh-\q_0 \in \valtrades((\qZ))$.
  By \nodominatedtrades, we must have $\r_+,\r_- \succneqq \0$, as otherwise $\r + \qh-\q_0$ would dominate $\0 + \qh-\q_0$ or vice versa.
  By \liquidation, for some $\beta \geq 0$, $\r_+ - \beta \r_- \in \valtrades(h)$.
  We will show $\beta = 1$, implying $\r \in \valtrades(h)$.
	By Observation \ref{obs:PI-strong}, $\r_+ - \beta \r_- + \qh -\q_0 \in \valtrades((\qZ))$.
  If $\beta \neq 1$, this trade is dominated by or dominates $\r+\qh -\q_0\in \valtrades((\qZ))$.
  Therefore, by \nodominatedtrades, $\beta = 1$.

  Now we show the market is a CFMM.
  Define $\R = \{\r + \qZ \mid \r \in \valtrades((\qZ))\}$, the set of valid reserves after a single trade.
  We proved above that $h \in \valhist(\epsilon) \iff \qh - \q_0 \in \valtrades((\q_0))$, which holds if and only if $\qh \in \R$.
  Therefore, we can choose any potential function with $\R$ as a level set, e.g. $\varphi(\q) = 0$ if $\q \in \R$ and $\varphi(\q) = 1$ otherwise.
\end{proof}

In the proof of Lemma \ref{lemma:fixedq-axioms-cfmm}, $\R$ consists of the level set containing $\q_0$, i.e. the set of valid reserves the market is willing to maintain starting from $\q_0$. 
In general, given a function $\varphi$ and initial reserves $\q_0$, it will be useful to define the following notation:
\begin{align*}
  \R_{\varphi}(\q_0)   &:= \left\{\q \in \reals^n \mid \varphi(\q) = \varphi(\q_0) \right\} ,  \\
  \R^+_{\varphi}(\q_0) &:= \left\{\q + \r \mid \q \in \R_{\varphi}(\q_0), \r \succeq \vec{0} \right\} .
\end{align*}
Here, $\R^+_{\varphi}(\q_0)$ consists intuitively of valid reserves, plus all reserves that are weakly preferred by the market maker to some valid reserve.
The key geometrical point is that, under \demandresponsiveness, $\R^+_{\varphi}(\q_0)$ is a convex set:

\begin{lemma} \label{lemma:fixedq-DR-Rplusconvex}
  Fix the initial reserves $\q_0$.
  If a market maker $\valtrades$ with $\valtrades(\epsilon) = \{\q_0\}$ satisfies \liquidation, \nodominatedtrades, \pathindependence, and \demandresponsiveness, then it is a CFMM for some potential $\varphi$ and, furthermore, $\R^+_{\varphi}(\q_0)$ is a convex set.
\end{lemma}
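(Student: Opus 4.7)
The first conclusion follows immediately from Lemma \ref{lemma:fixedq-axioms-cfmm}, so the real work is establishing convexity of $\R^+_\varphi(\q_0)$. My plan is to show that for any two valid reserve states $\q_1, \q_2 \in \R := \R_\varphi(\q_0)$ and any $\lambda \in (0,1)$, the point $\bar{\q} := \lambda \q_1 + (1-\lambda)\q_2$ dominates some point of $\R$, placing $\bar{\q} \in \R^+_\varphi(\q_0)$. Extending to general $\q_1, \q_2 \in \R^+_\varphi(\q_0)$ is routine since the latter is already closed under componentwise addition of nonnegative vectors.

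To set up, I would write the displacement $\vec{d} := \q_2 - \q_1$ as $\vec{d}^+ - \vec{d}^-$, its positive and negative parts. Since both $\q_1 - \q_0$ and $\q_2 - \q_0$ are valid trades from $\q_0$, \nodominatedtrades prevents either from dominating the other, so $\vec{d}^+, \vec{d}^- \succneqq \0$ (assuming $\q_1 \neq \q_2$, the only nontrivial case). The key construction is to ``walk partway'' from $\q_1$ toward $\q_2$: taking a valid history $h_1$ ending at $\q_1$, I would apply \liquidation with supply $(1-\lambda)\vec{d}^+$ and demand $\vec{d}^-$ to obtain some $\gamma \geq 0$ with trade $(1-\lambda)\vec{d}^+ - \gamma \vec{d}^- \in \valtrades(h_1)$. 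Call the resulting reserves $\q^* := \q_1 + (1-\lambda)\vec{d}^+ - \gamma \vec{d}^- \in \R$. By Lemma \ref{lemma:fixedq-axioms-cfmm} applied at $\q^*$, the residual trade $\q_2 - \q^* = \lambda \vec{d}^+ - (1-\gamma)\vec{d}^-$ is valid from $\q^*$.

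The two boundary cases, $\gamma = 0$ and $\gamma \geq 1$, I would rule out using \nodominatedtrades against the null trade $\0 \in \valtrades$ (Observation \ref{obs:0-valtrades}): if $\gamma = 0$ the trade $(1-\lambda)\vec{d}^+ \succneqq \0$ would dominate $\0$ at $h_1$; if $\gamma \geq 1$ the residual $\lambda \vec{d}^+ + (\gamma-1)\vec{d}^- \succneqq \0$ would dominate $\0$ at any history ending in $\q^*$. So $\gamma \in (0,1)$, and I can cleanly invoke \demandresponsiveness with $\r = (1-\lambda)\vec{d}^+$ and $\r' = \gamma\vec{d}^-$: the initial trade is $\r - \r'$, and the residual equals $\tfrac{\lambda}{1-\lambda}\,\r - \tfrac{1-\gamma}{\gamma}\,\r'$. \demandresponsiveness then yields $\tfrac{1-\gamma}{\gamma} \leq \tfrac{\lambda}{1-\lambda}$, which rearranges to $\gamma \geq 1 - \lambda$. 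Hence $\q^* \preceq \q_1 + (1-\lambda)\vec{d}^+ - (1-\lambda)\vec{d}^- = \bar{\q}$, so $\bar{\q}$ dominates $\q^* \in \R$ and lies in $\R^+_\varphi(\q_0)$.

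The main obstacle is selecting two consecutive trades on which \demandresponsiveness can be applied and from which it yields the tight inequality $\gamma \geq 1 - \lambda$. The insight is that the axiom requires consecutive trades whose positive/negative decompositions share the same supports; my choice of partial step $(1-\lambda)\vec{d}^+ - \gamma\vec{d}^-$ guarantees by construction that its complement $\lambda\vec{d}^+ - (1-\gamma)\vec{d}^-$ decomposes over the identical supports $\vec{d}^+, \vec{d}^-$, so the algebra then produces exactly the scalar inequality that sandwiches $\bar{\q}$ above a point of $\R$.
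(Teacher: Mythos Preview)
Your proposal is correct and is essentially the same argument the paper gives (factored through Lemma~\ref{lemma:axioms-superlevel-convex}): take two points of $\R$, decompose their difference into positive and negative parts, use \liquidation to make a partial step, and apply \demandresponsiveness to the pair of consecutive trades to bound the exchange coefficient, placing the convex combination above a point of $\R$. Your treatment is in fact slightly more careful than the paper's, as you explicitly dispose of the boundary cases $\gamma=0$ and $\gamma\geq 1$ before invoking \demandresponsiveness, whereas the paper only rules out $\beta=0$ (the case $\beta\geq 1$ being harmless since then $\lambda\leq\beta$ trivially).
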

The proof of Lemma~\ref{lemma:fixedq-DR-Rplusconvex}, as well as the rest of the claims in this section, can be found in Appendix~\ref{sec:cfmm-lemmas}.
We obtain that, for a fixed initial starting point, a market maker satisfying our axioms can always be implemented as a CFMM with a concave potential.
\begin{proposition} \label{prop:fixedq-cfmm}
  Fix the initial reserves $\q_0$.
  A market maker $\valtrades$ with $\valtrades(\epsilon) = \{\q_0\}$ satisfies \liquidation, \nodominatedtrades, \pathindependence, and \demandresponsiveness if and only if it can be implemented as a CFMM with an increasing, concave potential function $\varphi$.
\end{proposition}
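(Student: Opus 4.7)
The plan is to split the proof into the two directions. The forward direction leverages the two prior lemmas already available in the subsection, while the reverse direction is a direct axiom-by-axiom verification.

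For the forward direction, Lemma~\ref{lemma:fixedq-axioms-cfmm} guarantees that $\valtrades$ is a CFMM for some potential $\tilde\varphi$, and Lemma~\ref{lemma:fixedq-DR-Rplusconvex} shows that the enlarged reserves set $\R^+ := \R^+_{\tilde\varphi}(\q_0)$ is convex. By \nodominatedtrades, the level set $\R := \R_{\tilde\varphi}(\q_0)$ is an antichain under $\succeq$, so it coincides with the lower boundary of $\R^+$. The remaining task is to replace $\tilde\varphi$ by an increasing, concave potential whose level set through $\q_0$ is still $\R$. I would define
\[
\varphi(\q) \;:=\; \sup\{\lambda \in \reals \mid \q - \lambda\ones \in \R^+\},
\]
from which concavity follows by pushing a convex combination inside the sup and using convexity of $\R^+$, while monotonicity follows from upward-closedness of $\R^+$ in the coordinate directions (with strict monotonicity inherited from the antichain property of $\R$, which rules out supporting hyperplanes to $\R^+$ whose normals have zero components at lower-boundary points). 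A short check confirms $\varphi(\q) = \varphi(\q_0) \iff \q \in \R$, so the CFMM for $\varphi$ accepts exactly the same trades as $\valtrades$.

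For the reverse direction, given an increasing, concave $\varphi$, I would verify each axiom directly. \pathindependence is immediate because two trades that each preserve the level of $\varphi$ compose into a single trade at the same level. \nodominatedtrades follows from strict monotonicity: a strictly larger bundle would land on a strictly higher level set. For \liquidation, apply the intermediate value theorem to $f(\beta) := \varphi(\qh + \r - \beta\r')$: strict monotonicity gives $f(0) \geq \varphi(\qh)$, monotonicity in the direction $-\r'$ drives $f$ below $\varphi(\qh)$ for large $\beta$, and continuity of the concave $\varphi$ yields a crossing $\beta \geq 0$. The most interesting case is \demandresponsiveness. Given two successive valid trades $\s_1 = \r - \r'$ and $\s_2 = \alpha\r - \beta\r'$ both preserving the level $\varphi(\qh)$, I would restrict attention to the affine plane spanned by $\r$ and $\r'$. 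In these coordinates the level set of $\varphi$ is (locally) the graph of a concave function $y = \tau(x)$ passing through $(0,0)$, $(1,1)$, and $(1+\alpha, 1+\beta)$; concavity of $\tau$ forces its slopes to be non-increasing, i.e.\ $1 \geq \beta/\alpha$, which gives $\beta \leq \alpha$.

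The main obstacle is the forward construction, specifically producing a potential that is simultaneously concave and strictly monotone with exactly the right level set. The sup-based definition gives concavity nearly for free, but strict monotonicity relies delicately on the antichain structure of $\R$ (and hence on \nodominatedtrades) together with a supporting-hyperplane argument on the lower boundary of $\R^+$; one may additionally need to pass to the closure of $\R^+$ and check that doing so does not enlarge the relevant level set. A secondary subtlety in the reverse direction is that \liquidation implicitly requires $\varphi$ to drop below $\varphi(\qh)$ along every nonpositive ray, which follows from strict monotonicity plus $\varphi$ being defined on all of $\reals^n$ but is worth flagging explicitly.
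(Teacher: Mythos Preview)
Your plan matches the paper's proof closely. The forward direction is essentially identical: the paper defines exactly the same potential $\varphi(\q)=\sup\{u\mid \q-u\ones\in\R^+\}$, and checks well-definedness, concavity (from convexity of $\R^+$), increasingness (via \nodominatedtrades on $\R$), and that the $0$-level set is $\R$. The paper's monotonicity argument is simpler than the supporting-hyperplane route you sketch: if $\q'\succneqq\q$ and $\varphi(\q')=\varphi(\q)=u$, then $\q-u\ones,\q'-u\ones\in\R$ and one dominates the other, contradicting \nodominatedtrades directly.

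In the reverse direction the overall shape is the same but two details differ. For \demandresponsiveness the paper avoids the two-dimensional restriction and instead takes the single convex combination $\lambda=\tfrac{1}{1+\alpha}$ of $\q$ and $\q+(1+\alpha)\r_+-(1+\beta)\r_-$, landing at $\q+\r_+-\tfrac{1+\beta}{1+\alpha}\r_-$; concavity and increasingness then force $\tfrac{1+\beta}{1+\alpha}\leq 1$. Your implicit-function argument also works. For \liquidation the paper uses a supergradient $\mathbf d$ of $\varphi$ at $\q_h$; since $\varphi$ is increasing, $\mathbf d\succ\0$, hence $\mathbf d\cdot\r'>0$ for any $\r'\succneqq\0$, and the supergradient inequality gives $\varphi(\q_h+\r-\beta\r')\leq\varphi(\q_h)+\mathbf d\cdot(\r-\beta\r')\to-\infty$. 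Your stated justification (``strict monotonicity plus $\varphi$ defined on all of $\reals^n$'') is not sufficient on its own: a strictly increasing $\varphi$ can be bounded below along a ray $-\r'$ that has some zero coordinates. You need concavity (which you have) to force the drop; the supergradient is the clean way to see it.
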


If the market maker has multiple initial feasible reserves, then the market initiated from each one can be implemented as a CFMM with a concave potential.
But it is not clear that there is a single concave potential function that implements all of these markets simultaneously.
However, there at least exists a single \emph{quasiconcave} potential function that implements the entire market as a CFMM simultaneously.
This statement holds as long as we assume the slightly stronger axiom of \strongpathindependence, which enforces a relationship between the market when starting at different initial reserves.

\begin{axiom}[\strongpathindependence]
  For all $h,h' \in \valhist(\epsilon)$ with $\hsum(h) = \hsum(h')$, we have $\valtrades(h) = \valtrades(h')$.
  Furthermore, we have $\q_h \in \valtrades(\epsilon)$ for all $h \in \valhist(\epsilon)$.
\end{axiom}

\begin{theorem}\label{thm:itsacfmm}
  A market maker, with $\valtrades(\epsilon)=\reals^n$, satisfies \liquidation, \nodominatedtrades, \strongpathindependence, and \demandresponsiveness if and only if it is a CFMM with an increasing, continuous, quasiconcave potential function $\varphi$.
\end{theorem}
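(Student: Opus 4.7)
The plan is to extend Proposition \ref{prop:fixedq-cfmm} from a single initial reserve to all of $\reals^n$, using \strongpathindependence to glue the concave local potentials into a single global quasiconcave $\varphi$. Since $\valtrades(\epsilon) = \reals^n$, by part (1) of \strongpathindependence the valid trades at any valid history depend only on the current reserves; I define $F(\q) := \valtrades(h)$ for any valid $h$ with $\q_h = \q$, and $\R(\q) := \q + F(\q)$. I would first establish: (i) $\{\R(\q) : \q \in \reals^n\}$ partitions $\reals^n$ (so $\q' \in \R(\q) \Rightarrow \R(\q') = \R(\q)$), using part (2) of \strongpathindependence together with the uniqueness of the coefficient in \liquidation forced by \nodominatedtrades; and (ii) each sub-market satisfies \pathindependence, obtained by adapting the chain-of-trades argument from Lemma \ref{lemma:fixedq-axioms-cfmm}. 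Proposition \ref{prop:fixedq-cfmm} and Lemma \ref{lemma:fixedq-DR-Rplusconvex} then apply to each sub-market, so $\R^+(\q) := \R(\q) + \reals^n_{\geq 0}$ is closed and convex for every $\q$.

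The central claim, and main obstacle, is that the family $\{\R^+(\q)\}_{\q \in \reals^n}$ is totally ordered by inclusion: for any $\q_1, \q_2$, either $\R(\q_1) \subseteq \R^+(\q_2)$ or $\R(\q_2) \subseteq \R^+(\q_1)$. Supposing otherwise, I decompose $\q_2 - \q_1 = \r^+ - \r^-$ with $\r^+, \r^- \succneqq \0$; \liquidation from $\q_1$ yields a unique $\beta \geq 0$ (by \nodominatedtrades) with $\q_1 + \r^+ - \beta \r^- \in \R(\q_1)$. The same chain of inequalities used in Lemma \ref{lemma:fixedq-axioms-cfmm} forces $\beta < 1$ (else $\q_2 \in \R^+(\q_1)$), and symmetrically $\beta' < 1$ starting from $\q_2$. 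I would then derive a contradiction by combining the convexity of $\R^+(\q_1), \R^+(\q_2)$, the disjointness of their lower boundaries (from the partition in (i)), and \demandresponsiveness applied along the composite two-trade cycle that crosses between the level sets; a cleaner alternative is to appeal to the convex-analytic fact that a family of closed upper-closed convex subsets of $\reals^n$ whose lower boundaries partition $\reals^n$ is automatically nested.

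Given total ordering, fix any $\q^* \in \reals^n$ and any $\vec{e} \succ \0$ and define $\varphi(\q) := \min\{t \in \reals : \q^* + t\vec{e} \in \R^+(\q)\}$, which is well-defined by upper-closure of $\R^+(\q)$. A direct check using total ordering shows $\{\q' : \varphi(\q') \geq c\} = \R^+(\q^* + c\vec{e})$, so $\varphi$ is quasiconcave with upper level sets exactly the convex $\R^+(\q)$'s; strict monotonicity follows from \nodominatedtrades, and continuity from continuity of the local potentials $\varphi_{\q^*}$ (each concave on $\reals^n$, hence continuous). The CFMM condition $F(\q) = \{\r : \varphi(\q+\r) = \varphi(\q)\}$ is then immediate. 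For the converse direction, given a CFMM with increasing, continuous, quasiconcave $\varphi$: \strongpathindependence holds because $\valtrades$ depends only on $\varphi(\q_h)$; \nodominatedtrades from strict monotonicity; \liquidation from continuity together with the intermediate-value theorem along strictly positive rays; and \demandresponsiveness from convexity of the upper level sets of $\varphi$. The delicate step throughout is the total-ordering argument in the forward direction, where \demandresponsiveness must interact nontrivially with the convex structure of the $\R^+(\q)$'s to exclude incomparable level sets.
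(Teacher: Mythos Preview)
Your overall architecture matches the paper's: both reduce the forward direction to showing that the reachable sets $\R(\q)$ partition $\reals^n$ and that the associated upper sets $\R^+(\q)$ are nested, then index the level sets by their intersection with a fixed ray (the paper uses the diagonal $\{\alpha\ones\}$). The converse direction you sketch is also essentially the paper's Lemma~\ref{lem:cfmm-axioms}.

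There are, however, two genuine gaps at precisely the point you flag as delicate.

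\textbf{Total ordering.} Your primary argument (``\demandresponsiveness along the composite two-trade cycle'') is not an argument; you set up $\beta,\beta'<1$ correctly, but the contradiction is not derived. Your fallback ``convex-analytic fact'' is true, but it is not standard and needs a proof: one must use that each $\R(\q)$, being the boundary of a convex set with nonempty interior, is \emph{connected}, so that having points of $\R(\q_1)$ both inside and outside the closed set $\R^+(\q_2)$ forces $\R(\q_1)\cap\partial\R^+(\q_2)=\R(\q_1)\cap\R(\q_2)\neq\emptyset$, contradicting the partition. You do not supply this step. The paper takes a different route that avoids connectedness entirely: it defines $\varphi(\q)$ directly as the unique $\alpha$ with $\alpha\ones\in\R(\q)$, then introduces auxiliary ``cost functions'' $C_{\q}(\r)$ (the $\alpha$ with $\r-\alpha\ones\in\R(\q)$), proves each $C_\q$ is continuous from \nodominatedtrades, and uses the intermediate value theorem on $f=C_{\q'}-C_{\q}$ to show that $\q'\succneqq\q$ with $\varphi(\q')\leq\varphi(\q)$ would force $\R(\q)=\R(\q')$. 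This is the substantive work in the forward direction, and your proposal does not contain an analogue.

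\textbf{Continuity.} Your justification ``continuity from continuity of the local potentials $\varphi_{\q^*}$'' does not work as stated: the concave potential produced by Proposition~\ref{prop:fixedq-cfmm} for a fixed $\q_0$ has only \emph{one} level set in common with the global $\varphi$ (namely $\R(\q_0)$); its other level sets are $\ones$-translates of $\R(\q_0)$, not the actual $\R(\q)$'s, so its continuity says nothing about $\varphi$ across level sets. What the local potentials \emph{do} give you is that each $\R^+(\q)$ is closed with $\partial\R^+(\q)=\R(\q)$; combined with total ordering this yields that both super- and sub-level sets of $\varphi$ are closed, hence continuity. Alternatively, the paper establishes continuity along the diagonal via \liquidation and then invokes a result of Crouzeix to upgrade to full continuity of an increasing quasiconcave function. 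Either route is available, but the one you wrote down is not.
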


As the potential functions in Proposition~\ref{prop:fixedq-cfmm} are concave rather than continuous and quasiconcave, a weaker condition, one might ask whether this weaker condition is necessary in Theorem~\ref{thm:itsacfmm}.
We conjecture that the answer is yes, in the sense that there are CFMMs with continuous, quasiconcave potential functions which cannot be ``concavified'' by any monotone transformation; see e.g.~\citet{fenchel1953convex,connell2017concavifying}.

\section{Equivalence of CFMMs and Prediction Markets}
\label{sec:equivalence-cfmms-pred}

The goal of CFMMs is to facilitate trade.
In contrast, a prediction market is designed to to elicit and aggregate information about future events.
We will now see that prediction markets (satisfying good elicitation axioms) are in a very strong sense equivalent to CFMMs (satisfying good trade axioms).
We give direct reductions to convert each into the other.
As a consequence, much existing research on design and properties of prediction markets can now transfer to CFMMs.

In \S~\ref{sec:define-cost}, we define cost function prediction markets and recall that they characterize automated market makers satisfying elicitation axioms.
In \S~\ref{sec:cashless-pred-market}, we take the key step of converting them into ``cashless prediction markets''.
In \S~\ref{sec:equivalence}, we give the reductions and equivalence between prediction markets and CFMMs.
\S~\ref{sec:cfmms-scoring-rules} discusses information-elicitation consequences, and \S~\ref{sec:equivalence-examples} gives examples.

\subsection{Cost function prediction markets}
\label{sec:define-cost}

To begin, let us briefly review prediction markets for Arrow--Debreu (AD) securities, which are designed to elicit a full probability distribution over a future event.
Formally, let $\Y = \{y_1,\ldots,y_n\}$ be a set of outcomes, and let $Y$ be the future event, a random variable taking values in $\Y$.
For example, in a tournament with $n$ teams, $Y$ can be the winner of the tournament, and $Y=y_i$ is the outcome where team $i$ wins.
The probability simplex on $\Y$ is denoted $\Delta_{\Y}$.
The AD prediction market is a market with $n$ assets $A_1,\dots,A_n$.
When the outcome of $Y$ is observed, each unit of $A_i$ pays off to the owner one unit of cash (some fixed currency, such as US dollars) if $Y=y_i$, and pays off zero otherwise.\footnote{In more general prediction markets than the Arrow--Debreu case described here, there can be any number of assets that each pay off according to an arbitrary function of $Y$.}
We allow traders to both buy and sell assets, including holding negative amounts, representing a short position.

By design, a trader's fair price for $A_i$ is therefore the probability $\Pr[Y = y_i]$ according to their belief.
The current market prices for $A_1,\dots,A_n$ form a ``consensus'' prediction in the form of a probability distribution over $\Y$.

A priori, as with general asset markets, it may seem that the valid trades and associated prices at each moment can be set in essentially any manner whatsoever.
It may therefore be surprising that, in order to successfully elicit and aggregate information while maintaining path independence, a market for contingent securities must take the form of a \emph{cost-function market maker}, as we define next.
We then sketch the proof of this well-known characterization in Fact~\ref{fact:elicitation-cost-function}.

The classic cost-function market maker for Arrow--Debreu securities, slightly generalized for our setting, operates as follows.
\begin{definition} \label{def:cost-function}
  Let $C: \reals^n \to \reals$ be convex, increasing, and $\ones$-invariant.
  Let $\qZ \in \reals^n$ be an initial state.
  The \emph{AD market maker defined by $C$}, with initial state $\qZ$, operates as follows.
  At round $t=1,2,\dots$
  \begin{itemize}
  \item A trader can request any bundle of securities $\r_t \in \reals^n$.
  \item The market state updates to $\q_t = \q_{t-1} + \r_t$.
  \item The trader pays the market maker $C(\q_t) - C(\q_{t-1})$ in cash.
  \end{itemize}
  After an outcome of the form $Y=y_i$ occurs, for each round $t$, the trader responsible for the trade $\r_t$ is paid $(\r_t)_i$ in cash, i.e. the number of shares purchased in outcome $y_i$.
\end{definition}
When $C$ is differentiable, the instantaneous prices of the securities are given by $\p_t = \nabla C(\q_t)$.
That is, the approximate cost of any bundle $x \in \reals^{\Y}$ is given by $\p_t \cdot x$.
(The exact cost is given by integrating this trade from zero to $x$, resulting in $C(\q_t + x) - C(\q_t)$.)
We can therefore regard $\p_t = \nabla C(\q_t)$ as the market prediction.
Because of $\ones$-invariance, $\p_t$ is always a probability distribution.

We can cast cost-function market makers as a special case of the automated market maker framework in \S~\ref{sec:characterization}.
Specifically, represent cash by an asset $A_\$$, and consider the $n+1$ assets $\A = \{A_1,\ldots,A_n,A_\$\}$. 
Then we can restate Definition~\ref{def:cost-function} as the following asset market $\valtrades_C$ over $\A$.
An important convention is that \textbf{prediction markets interpret states $\q$ and trades $\r$ as net transfers to the trader}, whereas our automated market maker follows the CFMM convention to interpret $\q$ and $\r$ as net amounts for the market maker.
Therefore, a negative sign is always required to translate between the settings.
\begin{itemize}
  \item The initial reserves are $(-\qZ,0) \in \reals^{n+1}$.
  \item Define $\valtrades_C(h) = \left\{ \left(\r,\, \alpha\right) \mid \r \in \reals^n, \alpha = C(-\qh - \r) - C(-\qh) \right\}$.
\end{itemize}
In other words, $\valtrades_C(h)$ consists of any bundle of securities $\r \in \reals^n$ to be sold, along with a payment in cash of $C(-\qh - \r) - C(-\qh)$.
As mentioned, the input to $C$ is negated because $C$ expects as an argument the net bundle the market has \emph{sold} rather than purchased.

A priori, the design space for prediction markets could include any automated market maker over any set of contingent securities.
From this perspective, it is perhaps surprising that, to achieve ``good'' elicitation of predictions according to a standard set of axioms, one \emph{must} use a cost-function market maker.
That is, the following is known:

\begin{fact}[\cite{frongillo2018axiomatic,abernethy2013efficient}]
  \label{fact:elicitation-cost-function}
  Let $M$ be any automated market maker offering contingent securities for cash and satisfying these axioms:
  (1) Path Independence;
  (2) Incentive Compatibility for predicting $Y$, in the sense that:
    (2a) Any history $h$ of the market defines a probability distribution over $Y$ (the ``market prediction''), and
    (2b) All possible predictions are achievable by some history, and
    (2c) A trader maximizes expected net payment by moving the market prediction match their belief $p \in \Delta_Y$.
  Then $M$ is a cost-function AD prediction market.
\end{fact}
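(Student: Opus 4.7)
The plan is to convert the market into a single-valued cost function via Path Independence, then to extract convexity, monotonicity, and $\ones$-invariance from the elicitation axioms. The argument mirrors the standard derivation in the cited works, specialized to the AD setting.

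First, I would use Path Independence (cf.\ Observation~\ref{obs:PI-strong}) to show that the set of valid offers at a history $h$ depends only on $\q_h = \hsum(h)$. Writing each valid offer as a pair $(\r,\alpha)$ with $\r \in \reals^n$ a security bundle and $\alpha \in \reals$ a cash payment, I would first argue that $\alpha$ is uniquely determined by $\r$: if two valid offers shared $\r$ but differed in $\alpha$, a trader could execute one and reverse the other for a guaranteed profit, contradicting the IC uniqueness of an optimal trade. This yields a function $\alpha(\q,\r)$, and Path Independence translates into the cocycle identity
\[ \alpha(\q,\r_1+\r_2) = \alpha(\q,\r_1) + \alpha(\q+\r_1,\r_2). \]
Fixing a reference state $\q^*$ and setting $C(\q) := \alpha(\q^*, \q - \q^*)$, the cocycle identity gives $\alpha(\q,\r) = C(\q+\r) - C(\q)$ for all $\q,\r$, exhibiting the market as a cost-function market maker.

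Next, I would extract the structural properties of $C$. By IC (2a) together with path independence, the market prediction $\p(\q) \in \Delta_\Y$ depends only on $\q$, and by (2b) it is surjective onto $\Delta_\Y$. By IC (2c), for any belief $p$ a trader at state $\q$ maximizes $p \cdot \r - [C(\q+\r) - C(\q)]$ by choosing $\r$ with $\p(\q+\r) = p$. Specializing to $p = \p(\q)$ makes $\r = \0$ a maximizer; the first-order optimality condition then forces $\p(\q) \in \partial C(\q)$. Since $\p$ ranges over all of $\Delta_\Y$, $C$ admits a subgradient at every point, so $C$ is convex; since every such subgradient lies in $\Delta_\Y$ (hence is nonnegative and nonzero), $C$ is increasing. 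For $\ones$-invariance, I evaluate the IC inequality at $\r = \pm\ones$: since the bundle $\ones$ pays exactly $1$ regardless of outcome, $p \cdot \ones = 1$ for every $p$, so at the $\r = \0$ optimizer both $C(\q+\ones) - C(\q) \geq 1$ and $C(\q+\ones) - C(\q) \leq 1$ must hold, yielding $C(\q+\ones) = C(\q) + 1$.

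The main obstacle is formalizing IC (2c) cleanly enough to guarantee the subgradient containment $\p(\q) \in \partial C(\q)$ at every $\q$, and thereby full convexity rather than merely a quasi-convex or ordinal analogue; a secondary technicality is establishing that the cash amount is single-valued per bundle so that the cocycle argument produces a function $C$. Both rest on using the strictness in IC to rule out arbitrage between equivalent offers or against a sure bundle. One can sidestep any subgradient subtleties by first proving that $\p$ is continuous (via surjectivity and the monotone, nonnegative nature of the subgradients) and then showing that $C$ coincides with a line integral of $\p$, which exhibits $C$ as convex, increasing, and $\ones$-invariant directly.
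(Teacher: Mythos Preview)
Your proposal is sound and arrives at the same conclusion, but by a genuinely different route than the paper's sketch. The paper does not construct $C$ directly; instead it first invokes \cite[Theorem~3.1]{frongillo2018axiomatic} to conclude that any market satisfying Path Independence and Incentive Compatibility is a \emph{scoring-rule market} in the sense of \citet{hanson2003combinatorial} (IC forces each round to be scored by a proper scoring rule, and PI forces the telescoping form $S(\p_t,y)-S(\p_{t-1},y)$). It then cites the convex-duality equivalence of scoring-rule and cost-function markets \cite{chen2007utility,abernethy2013efficient} to obtain $C$, with convexity inherited from that duality and increasing/$\ones$-invariance deduced from a no-arbitrage reading of IC. Your approach instead builds $C$ directly from the cocycle identity and then reads off its properties from the optimality condition $\p(\q)\cdot\r \leq C(\q+\r)-C(\q)$ at $\r=\0$; this is more self-contained and avoids the scoring-rule detour, at the cost of having to re-derive what the cited theorems package. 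Two minor points to tighten: (i) ``subgradients lie in $\Delta_\Y$'' gives only nondecreasing, not the paper's strict notion of increasing---you need the arbitrage observation that a nonnegative nonzero bundle must have strictly positive cost; and (ii) the claim that $\valtrades$ depends only on $\q_h$ is the \emph{strong} form of path independence in this paper's axiom language, so be explicit that you are using PI in the prediction-market sense where cost depends only on the net trade.
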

Fact \ref{fact:elicitation-cost-function} is essentially a characterization: any cost-function AD prediction market satisfies the above axioms, if $C$ is also differentiable with a certain set of gradients~\cite{abernethy2013efficient}.
For completeness, Appendix \ref{app:elicitation} includes a sketch of a proof.
The key ideas, synthesized from \cite{frongillo2018axiomatic,abernethy2013efficient}, are these: Incentive Compatibility requires the market to use \emph{proper scoring rules}, functions that induce truthful forecasts from individual experts (see Section \ref{sec:cfmms-scoring-rules}).
Path Independence then implies that market must use ``chained'' scoring rules as proposed by \cite{hanson2003combinatorial}.
Finally, it is known that Hanson's scoring-rule markets are equivalent to cost function based prediction markets via convex duality~\cite{abernethy2013efficient}.

\subsection{Cashless prediction markets}
\label{sec:cashless-pred-market}

A key step to translating between prediction markets and CFMMs is this observation: without loss of generality, an AD prediction market can drop cash from its assets $\A = (A_1,\dots,A_n,A_{\$})$.
Because exactly one of the outcomes of $\Y = \{y_1,\dots,y_n\}$ will occur, the ``grand bundle'' of securities $\r = \ones = (1,\dots,1) \in \reals^n$ is worth exactly one unit of cash to all traders; whichever outcome $Y=y_i$ occurs, the $i$th security will pay out one unit of cash and the rest will pay out zero.

Specifically, we can define a ``cashless prediction market'' for any cost function $C$ by taking a demand bundle $\r \in \reals^n$ which would have had a price $\alpha$, and subtracting $\alpha$ units of each security, becoming the bundle $(\r_1 - \alpha, \dots, \r_n - \alpha) = \r - \alpha \ones$.
By $\ones$-invariance, the cost of this new trade must be zero, eliminating the need for cash.
With the negative sign to convert to the automated market maker framework, define for any cost function $C$:
\begin{align}
  \valtrades'_{C}(h) = \left\{ \r + \alpha \ones \mid \r \in \reals^n \right\} \text{ where $\alpha = C(-\qh - \r) - C(-\qh)$} . \label{eqn:valtradesprime}
\end{align}
In Appendix \ref{app:cashless}, we show the equivalence of $\valtrades_C$ and $\valtrades'_C$ in more formality.
The key step is the following lemma, which will be useful for converting to a CFMM:
\begin{lemma} \label{lemma:tradesprime-cfmm}
  $\valtrades'_C(h) = \{ \r \mid C(-\qh - \r) = C(-\qh) \}$.
\end{lemma}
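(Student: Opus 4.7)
The plan is to unfold both sides and apply the $\ones$-invariance property of $C$ directly. The set $\valtrades'_C(h)$ is the image of the map $\r \mapsto \r + \alpha(\r)\,\ones$, where $\alpha(\r) = C(-\qh - \r) - C(-\qh)$. So I need to show (i) every such image point satisfies the level-set equation $C(-\qh - \cdot) = C(-\qh)$, and (ii) every solution of that equation lies in the image.

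For direction (i), I would take any $\r' = \r + \alpha\ones$ with $\alpha = C(-\qh-\r) - C(-\qh)$ and compute
\[
C(-\qh - \r') \;=\; C(-\qh - \r - \alpha\ones) \;=\; C(-\qh - \r) - \alpha \;=\; C(-\qh),
\]
where the middle equality uses $\ones$-invariance (with $-\alpha$ replacing $\alpha$ in the definition). For direction (ii), given any $\r'$ with $C(-\qh - \r') = C(-\qh)$, I would simply exhibit $\r'$ as an element of the image by choosing $\r := \r'$ and $\alpha := 0$; the hypothesis guarantees $\alpha = 0 = C(-\qh - \r) - C(-\qh)$, so $\r' = \r + \alpha\ones$ belongs to $\valtrades'_C(h)$ by definition.

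There is no real obstacle here: the lemma is essentially a restatement of the defining identity, with $\ones$-invariance doing all of the work. The only thing to be careful about is the sign convention in the $\ones$-invariance axiom (the trader-vs-market-maker sign flip), which is why the $-\alpha$ appears when pulling $\alpha\ones$ out of $C$; this is the single arithmetic point worth stating explicitly in the proof.
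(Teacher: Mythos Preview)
Your proposal is correct and follows essentially the same approach as the paper: one direction uses $\ones$-invariance to show the image point lies on the level set, and the converse exhibits any level-set point as its own preimage with $\alpha = 0$. The only cosmetic difference is your naming of the variables ($\r'$ for the image point versus the paper's $\r$), but the argument is identical.
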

\begin{proof}
  Let $\r \in \valtrades'_C(h)$, and write $\r = \r' + \alpha \ones$ with $\alpha = C(-\qh - \r') - C(-\qh)$.
  Using ones-invariance, $C(-\qh - \r) = C(-\qh - \r' - \alpha \ones) = C(-\qh - \r') - \alpha = C(-\qh)$.
  Conversely, let $\r$ be such that $C(\qh + \r) = C(\qh)$.
  Then, directly, $\r \in \valtrades'_C(h)$, where we observe $\alpha = 0$.
\end{proof}

\subsection{Equivalence between CFMMs and prediction markets}
\label{sec:equivalence}

We now show that, in a strong sense, CFMMs and AD cost-function market makers are equivalent.
Specifically, every ``good'' prediction market mechanism for events on $n$ outcomes defines, in a straightforward manner, a  ``good'' CFMM on $n$ assets; and vice versa.
Here, a ``good'' prediction market is one that satisfies the elicitation axioms we introduced in Subsection \ref{subsec:axioms}, i.e. a cost-function market.
A ``good'' CFMM is one that satisfies the trading axioms, i.e. is defined by a concave, increasing $\varphi$.
Recall that $\valtrades_{\varphi}$ is the market maker defined by the CFMM for $\varphi$, i.e. $\valtrades_{\varphi}(h) = \{ \r \mid \varphi(\qh + \r) = \varphi(\qh)\}$.
Recall that $\valtrades'_C$ was defined in Equation \ref{eqn:valtradesprime}.
\begin{theorem} \label{thm:cost-to-CFMM}
  Let $C$ define an AD cost-function prediction market.
  Then the function $\varphi: \reals^n \to \reals$ defined by
    \[ \varphi(\q) = -C(-\q)  \]
  is concave and increasing, i.e. defines a CFMM.
  Furthermore, $\valtrades_{\varphi} = \valtrades'_{C}$.
\end{theorem}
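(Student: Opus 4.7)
The plan is essentially to unfold the definitions; each claim will follow from a short computation once the sign conventions are tracked carefully. I would organize the proof into three short parts: concavity of $\varphi$, monotonicity of $\varphi$, and the equality $\valtrades_{\varphi} = \valtrades'_C$.

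First, for concavity, I would compute directly. Given $\q,\q'\in\reals^n$ and $\lambda\in[0,1]$, I have
\[
\varphi(\lambda\q + (1-\lambda)\q') = -C\bigl(\lambda(-\q) + (1-\lambda)(-\q')\bigr),
\]
and convexity of $C$ gives $C(\lambda(-\q)+(1-\lambda)(-\q')) \le \lambda C(-\q) + (1-\lambda)C(-\q')$. Multiplying by $-1$ flips the inequality and yields concavity of $\varphi$. For monotonicity, I would observe that $\q \succneqq \q'$ implies $-\q \precneqq -\q'$, so by the assumption that $C$ is increasing, $C(-\q) < C(-\q')$, and negating gives $\varphi(\q) > \varphi(\q')$. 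So $\varphi$ is increasing.

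For the equality of the two trade sets, the work was already done in Lemma~\ref{lemma:tradesprime-cfmm}, which tells us
\[
\valtrades'_C(h) = \{\r \mid C(-\qh - \r) = C(-\qh)\}.
\]
On the other hand, by definition of the CFMM associated to $\varphi$,
\[
\valtrades_\varphi(h) = \{\r \mid \varphi(\qh + \r) = \varphi(\qh)\} = \{\r \mid -C(-\qh - \r) = -C(-\qh)\},
\]
and the two defining conditions are literally the same, so the sets coincide at every history $h$.

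There is no real obstacle; the only thing to be careful about is the sign convention, since prediction markets interpret states and trades as net transfers to the trader while CFMMs use the opposite convention, and this is exactly why the negations inside and outside of $C$ both appear in the definition of $\varphi$. The $\ones$-invariance of $C$, though a standing hypothesis on cost functions, is not needed for this theorem itself (it is invoked in Lemma~\ref{lemma:tradesprime-cfmm} to certify that $\valtrades'_C$ is well-defined as a cashless market); thus no additional property of $\varphi$ such as concave-extension or continuity needs to be established here.
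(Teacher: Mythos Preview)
Your proof is correct and follows essentially the same approach as the paper's: concavity and monotonicity from the corresponding properties of $C$, and the equality $\valtrades_\varphi = \valtrades'_C$ via Lemma~\ref{lemma:tradesprime-cfmm}. You have simply written out the details more explicitly than the paper's two-line version.
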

\begin{proof}
  As $\varphi$ is the negative of a convex function, it is concave.
  Because $C$ is increasing, $\varphi$ is increasing.
  By Lemma \ref{lemma:tradesprime-cfmm}, $\r \in \valtrades'_C(h) \iff C(-\qh - \r) = C(-\qh) \iff \varphi(\qh + \r) = \varphi(\qh) \iff \r \in \valtrades_{\varphi}(h)$.
\end{proof}

\begin{theorem} \label{thm:CFMM-to-cost}
  Let $\varphi: \reals^n \to \reals$ be concave and increasing, defining a CFMM.
  For any $\qZ \in\reals^n$, the function $C: \reals^n \to \reals$ defined by
  \begin{equation}
    \label{eq:cfmm-to-cost}
    C(\q) = \inf \left\{ c \in \reals \mid \varphi(c\ones - \q) \geq \varphi(\qZ) \right\}
  \end{equation}
  is convex, increasing, and ones-invariant, i.e., defines an AD cost-function prediction market.
  Furthermore, for all $h \in \valhist_{\varphi}((\qZ))$, we have $\valtrades'_C(h) = \valtrades_{\varphi}(h)$.
\end{theorem}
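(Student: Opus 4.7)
The plan is to verify the structural properties of $C$ (well-definedness, $\ones$-invariance, convexity, monotonicity) and then match the two trade sets. A tool I will reuse throughout is that any real-valued concave function on $\reals^n$ is continuous, so $c \mapsto \varphi(c\ones - \q)$ is continuous, concave, and strictly increasing. Hence the level value $\varphi(\qZ)$ is crossed at a unique point, and the infimum defining $C(\q)$ is attained with equality: $\varphi(C(\q)\ones - \q) = \varphi(\qZ)$.

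That $C(\q)$ is a finite real number is immediate: for $c$ large enough $c\ones - \q \succneqq \qZ$, so monotonicity gives $\varphi(c\ones - \q) > \varphi(\qZ)$ and the defining set is nonempty; while for $c < \min_i((q_0)_i + q_i)$ we have $c\ones - \q \precneqq \qZ$ and $\varphi(c\ones - \q) < \varphi(\qZ)$, bounding the infimum from below. The $\ones$-invariance $C(\q + \alpha\ones) = C(\q) + \alpha$ then follows from the substitution $c' = c - \alpha$ inside the infimum.

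For convexity I would first prove $\epi C = \{(\q, c) : \varphi(c\ones - \q) \geq \varphi(\qZ)\}$. The $\supseteq$ inclusion is by definition of infimum. For $\subseteq$, if $C(\q) \leq c$ then for each $\epsilon > 0$ there is $c^* \leq c + \epsilon$ with $\varphi(c^*\ones - \q) \geq \varphi(\qZ)$; monotonicity along $\ones$ upgrades this to $\varphi((c+\epsilon)\ones - \q) \geq \varphi(\qZ)$, and continuity as $\epsilon \to 0$ gives $\varphi(c\ones - \q) \geq \varphi(\qZ)$. The right-hand set is the preimage of the upper level set $\{\y : \varphi(\y) \geq \varphi(\qZ)\}$ under the linear map $(\q, c) \mapsto c\ones - \q$; since $\varphi$ is concave and hence quasiconcave, this upper level set is convex, and its linear preimage is convex, so $\epi C$ is convex. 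For strict monotonicity, let $\q \succneqq \q'$ and set $c_0 = C(\q')$; attainment gives $\varphi(c_0\ones - \q') = \varphi(\qZ)$, and strict monotonicity of $\varphi$ applied to $c_0\ones - \q \precneqq c_0\ones - \q'$ yields $\varphi(c_0\ones - \q) < \varphi(\qZ)$, so $c_0 \notin [C(\q), \infty)$ and $C(\q) > C(\q')$.

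Finally, for the trade equivalence, Lemma~\ref{lemma:tradesprime-cfmm} rewrites $\valtrades'_C(h) = \{\r : C(-\qh - \r) = C(-\qh)\}$. For $h \in \valhist_\varphi((\qZ))$ every accepted CFMM trade preserves $\varphi$, so $\varphi(\qh) = \varphi(\qZ)$, and by uniqueness of the level crossing $C(-\qh) = 0$. The same uniqueness then gives $C(-\qh - \r) = 0$ iff $\varphi(\qh + \r) = \varphi(\qZ) = \varphi(\qh)$, which is precisely the condition defining $\valtrades_\varphi(h)$. The step demanding the most care is establishing that the infimum defining $C$ is attained with equality everywhere on $\reals^n$; once that uniqueness is in hand, both convexity (via the linear-preimage description of $\epi C$) and the trade equivalence fall out cleanly.
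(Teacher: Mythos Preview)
Your proposal is correct and follows essentially the same route as the paper: continuity of the concave $\varphi$ gives attainment of the infimum, $\ones$-invariance by substitution, monotonicity from that of $\varphi$, and the trade equivalence via Lemma~\ref{lemma:tradesprime-cfmm} together with $C(-\qh)=0$ for $h\in\valhist_\varphi((\qZ))$. The one minor variation is your convexity argument via identifying $\epi C$ as the linear preimage of a superlevel set of $\varphi$, whereas the paper checks the definition of convexity directly from concavity of $\varphi$; both are equally short.
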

\begin{proof}

  The key observations come from the financial risk-measure literature, specifically ``constant-risk'' market makers~\cite[\S 2.1]{frongillo2015convergence}~\cite{chen2007utility} and standard constructions extending convex functions to be ones-invariant~\cite{follmer2015axiomatic}.
  Specifically, $C$ is increasing because $\varphi$ is increasing, and ones-invariant immediately by construction.
  We observe that the infimum is always achieved because $\varphi$, a concave function defined on all of $\reals^n$, is continuous.
  For convexity, given $\q,\q'$ and $\lambda \in [0,1]$, let $c = C(\q), c' = C(\q')$.
  Let $\q^* = \lambda \q + (1-\lambda)\q'$ and $c^* = \lambda c + (1-\lambda)c'$.
  Then by concavity, $\varphi(c^* \ones - \q^*) \geq \lambda \varphi(c\ones - \q) + (1-\lambda)\varphi(c' \ones - \q') = \varphi(\qZ)$.
  Because $\varphi(c^* \ones - \q^*) \geq \varphi(\qZ)$, we have $C(\q^*) \leq c^*$, as desired.

  Now, $\r \in \valtrades'_C(h)$ if and only if $C(-\qh - \r) = C(-\qh)$ by Lemma \ref{lemma:tradesprime-cfmm}.
  Because $C(-\qZ) = 0$, inductively we obtain $\r \in \valtrades'_C(h) \iff C(-\qh - \r) = 0$.
  By construction of $C$ and because $\varphi$ is increasing, $C(-\qh - \r) = 0 \iff \varphi(\qh + \r) = \varphi(\qZ)$.
  And by definition of a CFMM, $\varphi(\qh + \r) = \varphi(\qZ) \iff \r \in \valtrades_{\varphi}(h)$.
\end{proof}

\subsection{CFMMs and scoring rules}
\label{sec:cfmms-scoring-rules}

Recall that a scoring rule is a function $S(\p,y_i)$ that evaluates the prediction $\p \in \Delta_{\Y}$ given an observed outcome $y_i$.
A scoring rule is \emph{proper} if for all probability distributions $\p$, the expected score when $y_i \sim \p$ is maximized by report $\p$.
It is \emph{strictly} proper if $\p$ is the unique maximizer.
It is well-known (see e.g.\ \citet{gneiting2007strictly}) that a scoring rule is proper if and only if it can be written as an affine approximation to a particular convex \emph{generating function} $G(\p)$.

One implication of the above equivalence is that CFMMs are closely linked to proper scoring rules.
This connection emphasizes the surprising nature of the equivalence between cost functions and CFMMs, namely that CFMMs, while designed for a very different context than probabilistic forecasting, \emph{characterize} the properties needed for eliciting forecasts.
\begin{corollary} \label{cor:CFMM-scoring}
  Every CFMM for a concave, increasing $\varphi$ and initial reserves $\qZ$ is associated with a proper scoring rule.
\end{corollary}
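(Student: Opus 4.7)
The plan is to chain two results together: the reduction of Theorem~\ref{thm:CFMM-to-cost}, which converts $\varphi$ into an AD cost-function prediction market defined by some convex, increasing, $\ones$-invariant $C$, and the well-known correspondence between cost-function prediction markets and proper scoring rules via convex duality.

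More concretely, I would first invoke Theorem~\ref{thm:CFMM-to-cost} on the given $(\varphi,\qZ)$ to obtain the cost function
\[ C(\q) = \inf\{c \in \reals \mid \varphi(c\ones - \q) \geq \varphi(\qZ)\}, \]
which is convex, increasing, and $\ones$-invariant. Next I would define the scoring rule in the standard ``market scoring rule'' way: associate to each market state $\q$ the price vector $\p(\q) \in \Delta_\Y$ (using a subgradient of $C$ at $\q$ when $C$ is not differentiable, which is nonempty by convexity and lies in $\Delta_\Y$ by $\ones$-invariance and monotonicity), and define
\[ S(\p,y_i) = (\bm{\delta}_i - \q)\cdot\p - C(-\q) \text{ where } \p \in \partial C(\q), \]
or equivalently via the convex conjugate $G(\p) = -C^*(\p)$ restricted to $\Delta_\Y$ as an affine approximation. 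Propriety then follows from the standard Savage/Gneiting--Raftery characterization: expected score under belief $\p$ equals $G(\p) - \langle \p-\p, \nabla G(\p)\rangle = G(\p)$, maximized by reporting one's true belief when $G$ is convex. This is exactly the duality between prediction markets and scoring rules established by~\citet{hanson2003combinatorial} and~\citet{abernethy2013efficient}, which the paper already relies on in Fact~\ref{fact:elicitation-cost-function}.

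The only real technical subtlety is handling possible non-differentiability of $C$ (equivalently, of $\varphi$), which means subgradients rather than gradients must be used, and $G$ may be improper outside the relative interior of $\Delta_\Y$. Since $C$ is convex and finite on all of $\reals^n$, the subdifferential is nonempty at every point, so the construction always goes through and yields a (possibly non-strictly) proper scoring rule; strict propriety would require strict convexity of $C$, which corresponds to strict concavity of $\varphi$ along directions orthogonal to $\ones$.

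In summary, the proof is a one-line composition: Theorem~\ref{thm:CFMM-to-cost} turns $\varphi$ into a valid cost function, and the cited prediction-market/scoring-rule duality turns that cost function into a proper scoring rule. The main obstacle is not a mathematical one but an expository one, namely clearly stating which form of the scoring rule is being extracted and handling the subgradient case, and I would resolve this by simply citing the relevant duality result rather than redeveloping it here.
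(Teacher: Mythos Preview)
Your proposal is correct and follows essentially the same route as the paper: apply Theorem~\ref{thm:CFMM-to-cost} to obtain a cost function $C$, then invoke the known cost-function/scoring-rule duality (the paper writes $G = C^*$ and cites Appendix~\ref{app:elicit-fact}). One minor caution: your explicit formulas have sign slips (the paper takes $G = C^*$, not $-C^*$, and the market-scoring-rule payoff is $q_i - C(\q)$ for $\p \in \partial C(\q)$), but since you yourself propose to cite the duality rather than rederive it, this does not affect the argument.
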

The corollary follows immediately from known prediction-market equivalences between cost functions and scoring rules, discussed in Appendix \ref{app:elicit-fact}.
Briefly, every cost function $C$ corresponds to the scoring rule generated by $G = C^*$, the convex conjugate of $C$, and vice versa.

As alluded to above, for strict properness, one needs additional conditions on $C$, namely that (1) $C$ is differentiable, so that there is a unique market price at any given state, and (2) the set of gradients is equal to the probability simplex, so that any given market price/prediction is achievable by some set of trades.
These carry over into corresponding conditions on $\varphi$, as we explore in Appendix \ref{app:cfmms-elicit-ratios}.

\subsection{Examples}
\label{sec:equivalence-examples}

We now present two examples to illustrate our equivalence thus far.
We discuss these examples again in \S~\ref{sec:bounded-res-examples} along with others.

\paragraph{LMSR $\to$ CFMM}

As a warm-up, let us see how the most popular cost function, the LMSR $C(\q) = b \log \sum_{i=1}^n e^{q_i/b}$, can be interpreted as a CFMM.
Letting $\varphi(\q) = C(-\q)$, we have $\r \in \valtrades(h) \iff \varphi(\q + \r) = \varphi(\q_0)$, where $\q_0$ is the initial reserves.
If $k = \varphi(\q_0)$, then $\r$ is a valid trade if and only if $b \log \sum_{i=1}^n e^{-(q_i+r_i)/b} = k \iff \sum_{i=1}^n e^{-(q_i+r_i)/b} = e^{k/b}$.
For two assets, this equation reduces to $e^{-(q_1+r_1)/b} + e^{-(q_1+r_1)/b} = e^{k/b}$ (cf.\ \citep{paradigm2021universal}).
See \S~\ref{sec:bounded-res-examples} for a version which scales the liquidity $b$ depending on $\q_0$.

\paragraph{Uniswap $\to$ cost function / scoring rule}

One of the most iconic CFMMs is Uniswap, with the potential function $\varphi_U:\reals^2 \to\reals$ given by $\varphi_U(\q) = \sqrt{q_1q_2}$.
Interestingly, $\varphi_U$ is not defined on all of $\reals^n$, and is only increasing on $\reals^n_{>0}$; as such, typically one restricts to the latter space.
As we will see in \S~\ref{sec:reserves-and-liqudity}, this restriction to $\reals^n_{>0}$ is typical for CFMMs used in practice, and does not pose a barrier for our equivalence.
Briefly, as long as we require $\q_0 \succ \0$, the market reserves will stay within $\reals^n_{>0}$, and the characterization from Theorem~\ref{thm:itsacfmm} and equivalence from Theorem~\ref{thm:CFMM-to-cost} will still apply.

Let $\q_0 \succ \0$ and set $k = \varphi(\q_0) > 0$.
Applying Theorem~\ref{thm:CFMM-to-cost}, we obtain the cost function from eq.~\eqref{eq:uniswap-cost-intro},
\begin{align*}
  C_k(\q) = \frac 1 2 \left( q_1 + q_2 + \sqrt{4k^2 + (q_1-q_2)^2} \right)~.
\end{align*}
Let us verify the construction.
One can easily check that $C_k$ is $\ones$-invariant and increasing.
Consequently, it suffices to verify a single level set of $C_k$; we will show $\{\q \mid C(-\q) = 0\} = \{\q \mid \q\succ\0, \varphi_U(\q) = k\}$.
\begin{align*}
  C_k(-\q) = 0
  &\iff \sqrt{4k^2 + (q_1-q_2)^2} = (q_1 + q_2)
  \\
  &\iff 4k^2 + (q_1-q_2)^2 = (q_1+q_2)^2 \text{ and } q_1,q_2 > 0
  \\
  &\iff 4k^2 = 4q_1q_2 \text{ and } q_1,q_2 > 0
  \\
  &\iff k = \sqrt{q_1 q_2} \text{ and } q_1,q_2 > 0~.
\end{align*}
We conclude that, indeed, $\valtrades'_{C_k} = \valtrades_{\varphi_U}$, as promised by Theorem~\ref{thm:CFMM-to-cost}.
In other words, a prediction market run using $C_k$ would behave exactly the same as Uniswap.

As discussed in \S~\ref{sec:introduction}, the expression for $C_k$ appears in \citet[eq.\ (14)]{chen2007utility}.
There they show that a market maker keeping constant $\log$ utility assuming the binary outcome would be drawn from the uniform distribution $\pi = (1/2,1/2)$.
Moreover, as observed by \citet{bichuch2022axioms} and \citet{othman2021new}, constant expected log utility under the uniform distribution gives rise to Uniswap: $(1/2)\log(q_1) + (1/2)\log(q_1) = c \iff \sqrt{q_1 q_2} = e^c$.
(Changing this distribution $\pi$ gives rise to more general forms of Uniswap, such as Balancer, which takes the form $\varphi_\pi(\q) = q_1^{\pi_1} q_2^{\pi_2}$.)
Amazingly, it appears that no one has yet made these two connections together, that Uniswap is equivalent to a cost-function prediction market.

In light of Corollary~\ref{cor:CFMM-scoring}, each level set of Uniswap must therefore be equivalent to a scoring rule market for some choice of proper scoring rule.
Let us now derive this family of scoring rules.
Taking the convex conjugate of $C_k$, we obtain the convex generating function
\begin{align*}
  G_k(\p) = -2 \sqrt{ k p_1 p_2 }~.
\end{align*}
The corresponding scoring rule is
\begin{align*}
  S_k(\p,y_i) = G(\p) + dG_\p (\bm{\delta}_i - \p) = - k \sqrt{ p_i / p_j }~,
\end{align*}
where $\{i,j\} = \{1,2\}$.
This scoring rule is exactly the boosting loss of \citet{buja2005loss}, negated and scaled by $k$.
(This scoring rule also appears in a shifted form as ``$\mathrm{hs}$'' in \citet{ben2020new}.)
Interestingly, while in general each level set of $\varphi_U$ could have corresponded to an entirely different scoring rule, instead each level set merely scales the same scoring rule.
In fact, as we will see in \S~\ref{sec:role-1-homogeneity}, this phenomenon is shared by all 1-homogeneous CFMMs.

\section{Bounded reserves and liquidity}\label{sec:reserves-and-liqudity}

CFMMs are defined in terms of the current reserves $\q\in\reals^n$, the vector of asset quantities available to trade.
Naturally, one might like to ensure $\q \succeq \0$ at every state of the market, i.e., that the market maker can only trade assets it has, and not go short on any asset.
We now make this restriction explicit.

\begin{axiom}[\boundedreserves]
  $\reals^n_{>0} \subseteq \valtrades(\epsilon)$ and for all $h \in \valhist(\epsilon)$, $\q_h \succeq \0$.
\end{axiom}

A version of the \boundedreserves axiom appeared in \citet{angeris2022constant}, emphasizing its importance as most common potential functions are 1-homogeneous.
In this section, we first characterize all market makers satisfying \boundedreserves in addition to our other four axioms (Theorem~\ref{thm:cfmm-char-boundedreserves}).
We then explore the relationship between \boundedreserves and \emph{bounded worst-case loss} from the prediction market literature, finding the former to be a stronger condition.
In fact, CFMMs constructed from cost functions in our previous equivalence, Theorem~\ref{thm:cost-to-CFMM}, can essentially never satisfy \boundedreserves.
Instead, we present a new construction leveraging a liquidity parameter from the prediction market literature, to convert any cost function to a CFMM satisfying \boundedreserves.
As we will see, this construction recovers every 1-homogeneous CFMM potential function, including nearly all CFMMs used in practice.
Moreover, leveraging Corollary~\ref{cor:CFMM-scoring}, these 1-homogeneous CFMMs are associated with a unique scoring rule, thus allowing us to convert between proper scoring rules and CFMMs satisfying \boundedreserves.

\subsection{Characterizing bounded reserves} 
\label{sec:char-bound-reserv}

We begin with an observation about CFMMs satisfying \boundedreserves: with \liquidation and \nodominatedtrades, the reserves of each asset must be \emph{strictly} positive at all valid histories.
\begin{lemma}
  \label{lem:strictly-positive-reserves}
  A market maker satisfying \boundedreserves, \liquidation and \nodominatedtrades must have $\q_h \succ 0$ for all $h \in \valhist(\epsilon)$.
\end{lemma}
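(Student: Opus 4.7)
The plan is to proceed by contradiction: assume some valid history $h \in \valhist(\epsilon)$ has $(q_h)_i = 0$ for some coordinate $i$, and derive a violation of \nodominatedtrades. Throughout, I will implicitly use $n \geq 2$ (for $n=1$ a one-asset market can satisfy all three axioms with reserves stuck at $0$), and I will rely on Observation~\ref{obs:0-valtrades}, which gives $\0 \in \valtrades(h)$ from \liquidation and \nodominatedtrades.

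First, I would pick any $j \neq i$ and apply \liquidation with the two positive bundles $\r = \bm{\delta}_j$ and $\r' = \bm{\delta}_i$. This yields some $\beta \geq 0$ such that the trade $\bm{\delta}_j - \beta \bm{\delta}_i$ lies in $\valtrades(h)$. Extending $h$ by this trade produces a valid history whose reserves are $\q_h + \bm{\delta}_j - \beta \bm{\delta}_i$; in particular, the $i$-th coordinate of the new reserves equals $(q_h)_i - \beta = -\beta$.

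Next, I would invoke \boundedreserves on this extended history to conclude $-\beta \geq 0$, which forces $\beta = 0$ in view of the nonnegativity supplied by \liquidation. Hence $\bm{\delta}_j \in \valtrades(h)$. Since $\bm{\delta}_j \succneqq \0$ and $\0 \in \valtrades(h)$, the bundle $\bm{\delta}_j$ strictly dominates $\0$ inside $\valtrades(h)$, contradicting \nodominatedtrades.

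I do not foresee a serious obstacle here; the argument is short and essentially forced by the axioms. The only subtlety is confirming that \liquidation delivers a strictly nonnegative $\beta$ (so that $\beta = 0$ is the pinch point rather than a free parameter), and being careful that the contradiction uses the presence of both $\0$ and $\bm{\delta}_j$ in $\valtrades(h)$ rather than any claim about $\bm{\delta}_j$ being undominated on its own. A small remark noting the $n \geq 2$ assumption (or a separate trivial treatment of $n=1$) completes the proof.
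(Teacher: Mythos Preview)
Your proof is correct and follows essentially the same approach as the paper: assume some coordinate of $\q_h$ vanishes, apply \liquidation to trade asset $j$ for asset $i$, use \boundedreserves to force $\beta=0$, and obtain a \nodominatedtrades violation. The only minor difference is that the paper compares the trade $t\bm{\delta}_j$ with $\tfrac{t}{2}\bm{\delta}_j$ (two applications of \liquidation), whereas you compare $\bm{\delta}_j$ with $\0$ via Observation~\ref{obs:0-valtrades}; your route is slightly cleaner.
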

\begin{proof}
  Assume that for a history $h\in \valhist(\epsilon)$, we have $(\q_h)_i=0$ for some $i$.
  For any quantity $t > 0$ of asset $j$ an agent offers, \liquidation states that the market maker would have to quote a price for it in terms of asset $i$.
  As the market maker does not have any of asset $i$, it must quote a price of zero to satisfy \boundedreserves.
  If the agent asks to trade quantity $\frac{t}{2}$ of asset $j$, the market maker can still only give zero units of asset $i$.
  This trade of $\frac t 2$ for zero units of $i$ therefore dominates $t$ for zero, violating \nodominatedtrades.
  Hence the reserves for an asset can never be empty.
\end{proof}

\newcommand{\bdposorth}{\partial \reals^n_{\geq 0}}
We will now characterize CFMMs satisfying all of our axioms.
In light of Lemma~\ref{lem:strictly-positive-reserves}, for any CFMM that satisfies these axioms, it must be the case that $\varphi(\qZ) = \varphi(\q)$ and $\qZ \succ \0$ together imply $\q \succ \0$.
Intuitively, it follows that we may restrict attention to the strictly positive orthant $\reals^n_{>0}$.
While $\varphi$ will be increasing on $\reals^n_{>0}$, however, it might not be on its boundary $\bdposorth := \reals^n_{\geq 0} \setminus \reals^n_{>0}$.
In fact, a prominent example is Uniswap, $\varphi(\q) = \sqrt{q_1q_2}$, where $\varphi(\q) = 0$ for $\q \in \bdposorth$.
Yet the behavior of $\varphi$ in a neighborhood of $\bdposorth$ turns out to be the key to whether \liquidation is satisfied.
To state our results, therefore, we restrict $\varphi:\reals^n_{>0}\to\reals$ without loss of generality, and then use the limits to control the behavior of $\varphi$ near $\bdposorth$.

\begin{theorem}\label{thm:cfmm-char-boundedreserves}
  A market maker satisfies \liquidation, \nodominatedtrades, \strongpathindependence, \demandresponsiveness, and \boundedreserves if and only if the following hold: \,(a)\, $\valtrades(\epsilon) = \reals^n_{>0}$, \,(b)\, the market maker is a CFMM with an continuous, increasing, quasiconcave potential function $\varphi:\reals^n_{>0}\to\reals$, and \,(c)\,~$\lim\limits_{\alpha\to 0^+} \varphi(\q + \alpha\ones) = \lim\limits_{\alpha\to 0^+} \varphi(\alpha\ones)$ for all $\q\in\bdposorth$.
\end{theorem}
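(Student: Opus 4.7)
The plan is to prove both directions by building on Theorem~\ref{thm:itsacfmm} and Lemma~\ref{lem:strictly-positive-reserves}, with condition (c) serving as the bridge between the interior behavior of $\varphi$ and the boundary structure enforced by \boundedreserves. In both directions the crux is a ``diagonal sandwich'' argument linking arbitrary boundary approaches to the canonical diagonal approach governed by (c).

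\emph{Forward direction.} Assume the five axioms. For (a), \boundedreserves gives one inclusion, and the reverse follows from Lemma~\ref{lem:strictly-positive-reserves} applied to the initial history $(\qZ)$ for any $\qZ\in\valtrades(\epsilon)$. For (b), since Lemma~\ref{lem:strictly-positive-reserves} confines all reachable reserves to $\reals^n_{>0}$, the proof of Theorem~\ref{thm:itsacfmm} transfers with $\reals^n$ replaced by $\reals^n_{>0}$ throughout, yielding a continuous, increasing, quasiconcave $\varphi$ on $\reals^n_{>0}$. For (c), note that by monotonicity the limits $L^*:=\lim_{\alpha\to 0^+}\varphi(\q^*+\alpha\ones)$ and $L_0:=\lim_{\alpha\to 0^+}\varphi(\alpha\ones)$ always exist in $\reals\cup\{-\infty\}$ with $L^*\geq L_0$. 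Suppose for contradiction some $\q^*\in\bdposorth$ has $L^*>L_0$ (forcing $\q^*\neq\0$). Choose $\alpha^*>0$ with $\varphi(\alpha^*\ones)<L^*$, set $\qZ=\alpha^*\ones$ and $k:=\varphi(\qZ)$, and apply \liquidation with $\r=\q^*$ and $\r'=\ones$: the resulting $\beta\geq 0$ must satisfy $\q^*+(\alpha^*-\beta)\ones\succ\0$ and $\varphi(\q^*+(\alpha^*-\beta)\ones)=k$. Yet strict monotonicity forces $\varphi(\q^*+\alpha\ones)\geq L^*>k$ for every $\alpha>0$, contradicting the existence of $\beta$.

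\emph{Backward direction.} Assume (a)--(c). The same adaptation of Theorem~\ref{thm:itsacfmm} yields \nodominatedtrades, \strongpathindependence, and \demandresponsiveness, while \boundedreserves is automatic since $\varphi$ is defined only on $\reals^n_{>0}$. The remaining axiom, \liquidation, is where (c) enters. Given $\qZ\succ\0$ and $\r,\r'\succneqq\0$, set $\beta^*=\min_{i:r'_i>0}(\qZ_i+r_i)/r'_i$ and $\q^*=\qZ+\r-\beta^*\r'\in\bdposorth$, and consider $f(\beta)=\varphi(\qZ+\r-\beta\r')$ on $[0,\beta^*)$, which is continuous. Strict monotonicity along the diagonal (using $\alpha_1=\min_i\qZ_i>0$) gives both $k:=\varphi(\qZ)>L_0$ and $f(0)=\varphi(\qZ+\r)\geq k$. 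The key step uses the sandwich $\q^*+(\beta^*-\beta)\r'\preceq\q^*+(\beta^*-\beta)\|\r'\|_\infty\ones$ together with (c) to conclude $\limsup_{\beta\to\beta^*}f(\beta)\leq L_0<k$. The intermediate value theorem then supplies $\beta\in[0,\beta^*)$ with $f(\beta)=k$, which is the required valid trade.

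I expect the main obstacle to be recognizing that condition (c), stated only along the diagonal $\ones$, actually governs boundary behavior along \emph{all} directions $\r'\succneqq\0$; this is precisely what is needed to handle \liquidation in full generality. The observation that unlocks this is the monotonicity sandwich $\q^*+\epsilon\r'\preceq\q^*+\epsilon\|\r'\|_\infty\ones$, which reduces arbitrary directional limits to the canonical diagonal limit controlled by (c); without it, one might be tempted to strengthen (c) unnecessarily to a statement about all directions.
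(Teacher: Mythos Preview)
Your proposal is correct and follows the same overall architecture as the paper's proof: parts (a) and (b) via Lemma~\ref{lem:strictly-positive-reserves} and the domain-restricted version of Theorem~\ref{thm:itsacfmm}, with the substantive work in (c) forward and \liquidation backward. The execution of those two steps differs from the paper in interesting ways.

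For (c) in the forward direction, the paper applies \liquidation at a point $c'\ones$ chosen so that $\varphi(c'\ones)=\bar\varphi(\q^*)$, trades in the direction $\bm{\delta}_i$ for a zero coordinate $i$ of $\q^*$, and then shows the resulting reserves strictly dominate $\q^*$; your argument instead starts at $\alpha^*\ones$ with $\varphi(\alpha^*\ones)<L^*$, trades $\q^*$ for $\beta\ones$, and uses $\varphi(\q^*+\alpha\ones)\geq L^*$ directly. Both yield the same contradiction, and yours is arguably more streamlined.

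For \liquidation in the backward direction, the paper first normalizes $\varphi$ so that $\varphi(\alpha\ones)=\alpha$, then invokes a continuity result of Crouzeix to conclude that the boundary extension $\bar\varphi$ is continuous on all of $\reals^n_{\geq 0}$, and finally applies the intermediate value theorem to $\bar\varphi$ along the segment out to the boundary point. You bypass both the normalization and the Crouzeix citation via the monotonicity sandwich $\q^*+(\beta^*-\beta)\r'\preceq \q^*+(\beta^*-\beta)\|\r'\|_\infty\ones$, reducing the directional limit to the diagonal limit governed by (c). This is more elementary and self-contained; the paper's route, on the other hand, yields the stronger intermediate fact that $\bar\varphi$ extends continuously to $\reals^n_{\geq 0}$, which may be of independent interest.
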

\begin{proof}
  First consider the forward direction, and assume we are given a market maker satisfying all 5 axioms.
  For (a), we have $\reals^n_{>0} \subseteq \valtrades(\epsilon)$ by \boundedreserves and the reverse inclusion from Lemma~\ref{lem:strictly-positive-reserves}.
  For (b), we make use of the Lemma~\ref{lemma:Rplusconvex-cfmm}, a more general version of the foreward direction for Theorem~\ref{thm:itsacfmm} (see \S~\ref{sec:cfmm-lemmas}).
  Leveraging (a), the lemma states that the first 4 axioms imply a CFMM with an continuous, increasing, quasiconcave potential $\varphi:\reals^n_{>0}\to\reals$.

  Finally, consider (c).
  Denote $\bar\varphi(\q) := \lim_{\alpha\to 0^+} \varphi(\q + \alpha\ones)$ for $\q \in \reals^n_{\geq 0}$; the limit exists as $\varphi$ is increasing.
  Then condition (c) can be succinctly written: $\bar\varphi(\q) = \bar\varphi(\0)$ for all $\q \in \bdposorth$.
  A fact we will use is that for $\q \in \bdposorth$ and $\r \succ \0$, we have $\bar\varphi(\q) < \varphi(\q+\r)$.
  (Take $c = \min_i r_i > 0$; as $\varphi$ is increasing and by definition of $\bar\varphi$, we have $\bar\varphi(\q) < \varphi(\q + c\ones) \leq \varphi(\q + \r)$.)
  As $\varphi$ is increasing, we have $\bar\varphi(\q) \geq \bar\varphi(\0)$.
  Now suppose for a contradiction we had some $\q \in \bdposorth$ with $\bar\varphi(\q) > \bar\varphi(\0)$.
  By definiton of a limit, there exists $\epsilon > 0$ sufficiently small so that $\bar\varphi(\q) > \varphi(\epsilon \ones)$.
  Let $c = \max_i q_i + 1$ so that $c\ones \succ \q$.
  Then by the fact above we have $\varphi(c\ones) > \bar\varphi(\q) > \varphi(\epsilon\ones)$.
  By continuity of $\varphi$, there exists $c' \in (\epsilon,c)$ satisfying $\varphi(c'\ones) = \bar\varphi(\q)$.
  Let $i$ be an asset such that $q_i = 0$.
  Now let us apply \liquidation at $h = (c'\ones)$ to bundles $r = (c-c')\ones$ and $r' = \bm{\delta}_i$, giving some $\beta \geq 0$ such that $r - \beta r' \in \valtrades(h)$.
  Let $h' = (c'\ones) \oplus (r - \beta r')$ be the resulting history.
  Then $\varphi(c'\ones) = \varphi(\q_{h'})$ by (b).
  By Lemma~\ref{lem:strictly-positive-reserves}, we must have $0 \prec \q_{h'} = c'\ones + (r-\beta r') = c\ones - \beta\bm{\delta}_i$, which implies $\beta < c$.
  As $q_i=0$, we therefore have $\q_{h'} - \q \succ \0$, as the $i$th coordinate is $c-\beta > 0$ and all other coordinates are at least 1.
  By the fact above, $\varphi(\q_{h'}) > \bar\varphi(\q) = \varphi(c'\ones)$, a contradiction.  

  For the reverse direction, let $\varphi$ satisfying conditions (a), (b), and (c) be given.
  Let $\bar\varphi$ be defined as above.
  We first show that we may assume without loss of generality that $\bar\varphi$ is continous on $\reals^n_{\geq 0}$.
  Let $f(\alpha) = \bar\varphi(\alpha\ones)$ for all $\alpha \geq 0$.
  Then $f:[0,\infty)$ is monotone increasing and continuous, and has a monotone increasing and continuous inverse.
  Thus, $\varphi' = f^{-1} \circ \varphi$ satisfies conditions (a), (b), and (c) as well.
  Moreover, $\varphi'$ satisfies $\varphi'(\alpha\ones) = \alpha$ for all $\alpha>0$;
  in other words, $\varphi'$ follows the construction of Theorem~\ref{thm:itsacfmm}.
  Now as $\bar\varphi'(\0) = 0 > -\infty$ by the above, and $\bar\varphi'(\q) = \bar\varphi'(\0)$ for all $\q \in \bdposorth$ by (c), \citet[Proposition 4.1]{crouzeix2005continuity} states that $\bar\varphi'$ is continuous on $\bdposorth$.
  As $\bar\varphi' = \varphi'$ on $\reals^n_{>0}$, we have continuity on $\reals^n_{\geq 0}$.
  
  \boundedreserves is trivially satisfied as $\varphi$ is only defined (or equivalently, equal to $-\infty$) outside of $\reals^n_{>0}$, so $\qZ \succ \0$ and $\varphi(\qh) = \varphi(\qZ)$ immediately implies $\qh \succ \0$.
  It follows that the proofs for \demandresponsiveness, \nodominatedtrades, and \strongpathindependence in Theorem~\ref{thm:itsacfmm} immediately extend to the case where the domain is restricted to $\reals^n_{>0}$, since they all begin with given elements of $\valhist(\epsilon)$.
  (The second condition of \strongpathindependence also follows immediately from $\q_h \succ \0$.)

  It remains therefore to show \liquidation.
  We will use the fact that $\bar\varphi(\q) > \bar\varphi(\0)$ for all $\q\succ\0$, which follows by a similar argument to the above: letting $c = \tfrac 1 2 \min_i q_i > 0$, we have $\bar\varphi(\q) > \bar\varphi(c\ones) > \bar\varphi(\0)$.
  Now let $h\in\valhist(\epsilon)$ and $\r,\r'\succneqq \0$.
  Then $\qh \succ \0$ as $\varphi$ is defined only on $\reals^n_{>0}$, giving $\qh+\r \succ \0$ as well.
  Thus $\bar\varphi(\qh+\r) > \bar\varphi(\qh) > \bar\varphi(\0)$.
  Now let $\beta' > 0$ be the unique real value such that $\qh + \r - \beta'\r' \in \partial\reals^n_{\geq 0}$.
  By (c), we have $\bar\varphi(\qh + \r - \beta'\r') = \bar\varphi(\0)$.
  Thus $\bar\varphi(\qh + \r - \beta'\r') < \bar\varphi(\qh) < \bar\varphi(\qh + \r - 0\cdot\r')$.
  By continuity of $\bar\varphi$ on $\reals^n_{\geq 0}$ (see above), we have some $\beta\in(0,\beta')$ such that $\bar\varphi(\qh) = \bar\varphi(\qh + \r - \beta\r')$.
  As $\beta < \beta'$, we have $\qh + \r - \beta\r' \succ \0$.
  Thus $\varphi(\qh) = \varphi(\qh + \r - \beta\r')$ as well, giving $\r - \beta\r' \in \valtrades(h)$, as desired.
\end{proof}

In Appendix~\ref{sec:cfmm-lemmas}, we prove a more general version of some key steps of Theorem~\ref{thm:cfmm-char-boundedreserves} for more general restrictions $S$ on the reserves other than $S = \reals^n_{>0}$.
While from \boundedreserves this is perhaps the most interesting case, the entire theorem could in principle be generalized to other $S$ which are ``dominant domains'' (Definition~\ref{def:dominant-domain}); for example, condition (c) in the theorem would replace $\bdposorth$ with $\partial S$.

\subsection{Connection to worst-case loss and liquidity}
\label{sec:conn-worst-case}

When assets are contingent securities, it is perhaps more natural to ask how much money the market maker could lose in some run of the market and some outcome of nature, rather than tracking the number of securities ``in reserve''.
This interpretation gives the following axiom, which is commonly found in the literature on prediction markets~\cite{chen2007utility,abernethy2013efficient,abernethy2014general,cummings2016possibilities,frongillo2018bounded}.
Here we regard ``money'' as the grand bundle $\ones$ (cf.\ \S~\ref{sec:cashless-pred-market}).

\begin{axiom}[\worstcaseloss]
 For all $\qZ$, there exists $b \in \reals$ such that for all $h \in \valhist(\epsilon)$ with $h_0 = (\qZ)$, we have $\hsum(h) \succeq \qZ - b\,\ones$.
\end{axiom}

The \boundedreserves axiom can be regarded as a special case of \worstcaseloss, where $b = \|\qZ\|_\infty = \max_i (\q_0)_i$, but the converse does not hold.
Indeed, essentially any interesting $\ones$-invariant cost function satisfying \worstcaseloss does not satisfy \boundedreserves.
To see why, let $h = (\ones)$ and $\r\in\valtrades(h)$ be any non-dominated trade, i.e., with $\r\nsucceq\0$.
In other words, start with one of each asset, and take any nontrivial trade.
Let $i$ be an asset with $r_i < 0$, and now consider $h' = (\alpha\ones)$ where $\alpha = |r_i|/2$.
Then by $\ones$-invariance, we still have $\r \in \valtrades(h')$, yet now $\hsum(h'\oplus \r)_i = \alpha + r_i < 0$, meaning the market maker is negative in its reserves of asset $i$.
Moreover, as our previous construction $\varphi(\q) = -C(-\q)$ from Theorem~\ref{thm:cost-to-CFMM}) 
gives a $\ones$-invariant market, we have yet to see how to convert a prediction market to a CFMM that respects the reserves regardless of where the market is initialized.

Fortunately, we can borrow ideas from the prediction market literature to make our construction reserves-aware.
Specifically, we will use the relationship between worst-case loss and \emph{liquidity}, roughly the amount of assets available to buy or sell around a given price, which has been extensively studied~\cite{chen2007utility,abernethy2013efficient,abernethy2014general,cummings2016possibilities,frongillo2018bounded}.
In the cost-function formulation, there is a simple and direct tradeoff: more liquidity leads to a higher worst-case loss.

Furthermore, there is a simple transformation from convex analysis that changes any given cost function's liquidity: the perspective transform.
Given a convex function $C(\q)$, its \emph{perspective transform} by $\alpha > 0$ is the function $C_\alpha(\q) = \alpha C(\q/\alpha)$.%
\footnote{Often the perspective transform is given by $(\q,\alpha) \mapsto \alpha C(\q/\alpha)$, a form we will use in Proposition~\ref{prop:construction-5-axioms}.}
For example, for $C(\q) = \|\q\|_2^2$, the perspective transform is $\alpha C(\q/\alpha) = \tfrac{1}{\alpha} \|\q\|_2^2$.
As $\alpha \to \infty$, the liquidity increases: the gradient of $C$, i.e.\ the price, changes more slowly in response to a change in quantity $\q$.
But the worst-case loss increases as well.
One way to see the latter is via the convex duality relationship to proper scoring rules: the perspective transform on $C$ is equivalent to a simple scaling of its corresponding proper scoring rule by $\alpha$.

\subsection{CFMM construction satisfying bounded reserves}
\label{sec:cfmm-construction-reserves}

Using the ideas above, we now provide a construction taking any cost function to a CFMM that satisfies \boundedreserves.
The magic is in ``wrapping up'' the level sets of $C_\alpha(\q) := \alpha C(\q/\alpha)$ for different $\alpha$ into a single function $\varphi$, so that starting at $\qZ$ implies a liquidity of $\alpha = \varphi(\qZ)$.
Furthermore, we ensure that $\varphi(\q) \to 0$ as $\q \to \0$, so that a market starting with very low initial reserves quickly moves the price so as not to run out.

\begin{figure}[t]
  \includegraphics[width = 1.8in,trim={5 5 20 5},clip]{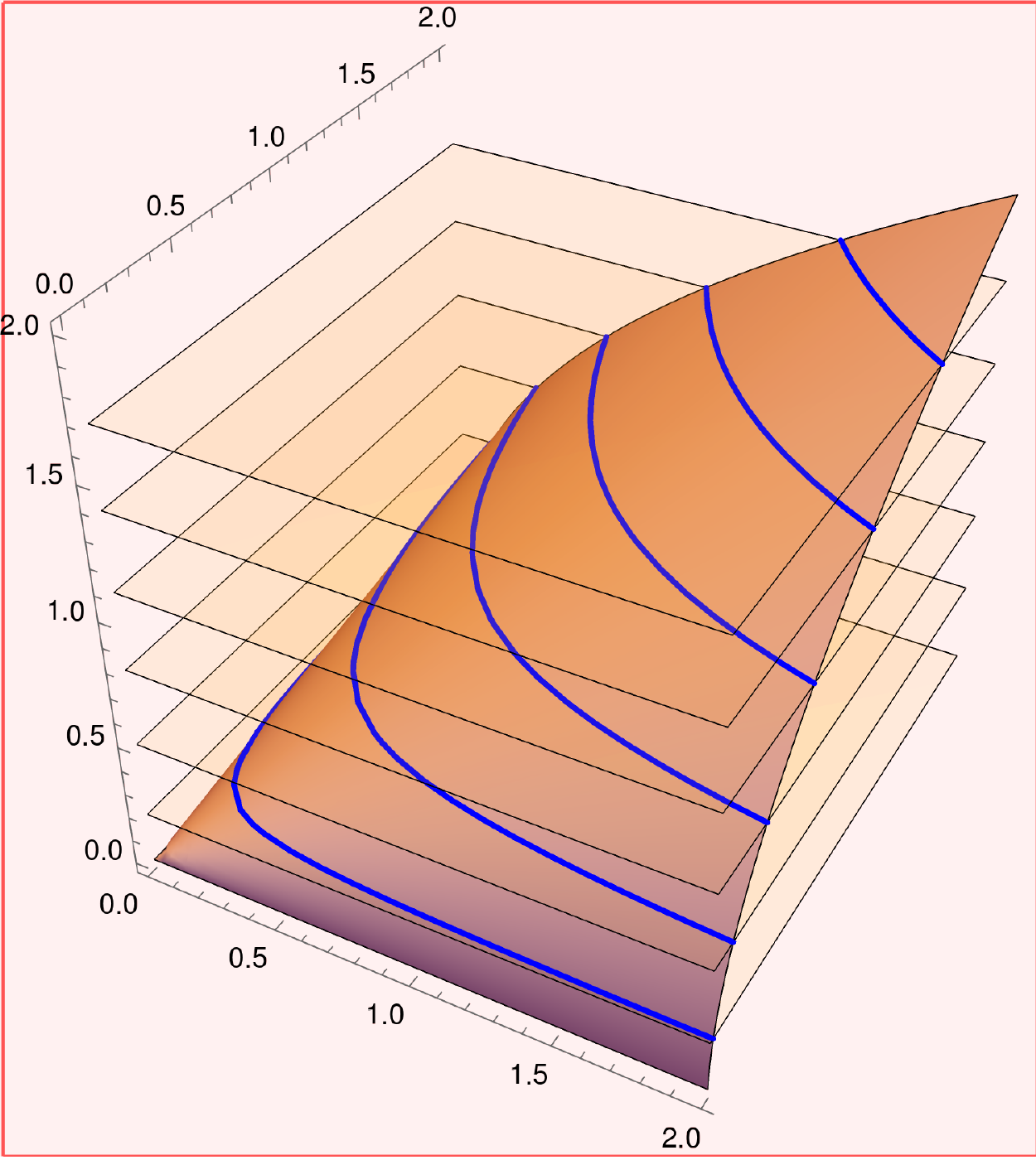}
  \includegraphics[width = 1.8in,trim={5 5 20 5},clip]{figs/perspective-lmsr}
  \includegraphics[width = 1.8in,trim={5 5 20 5},clip]{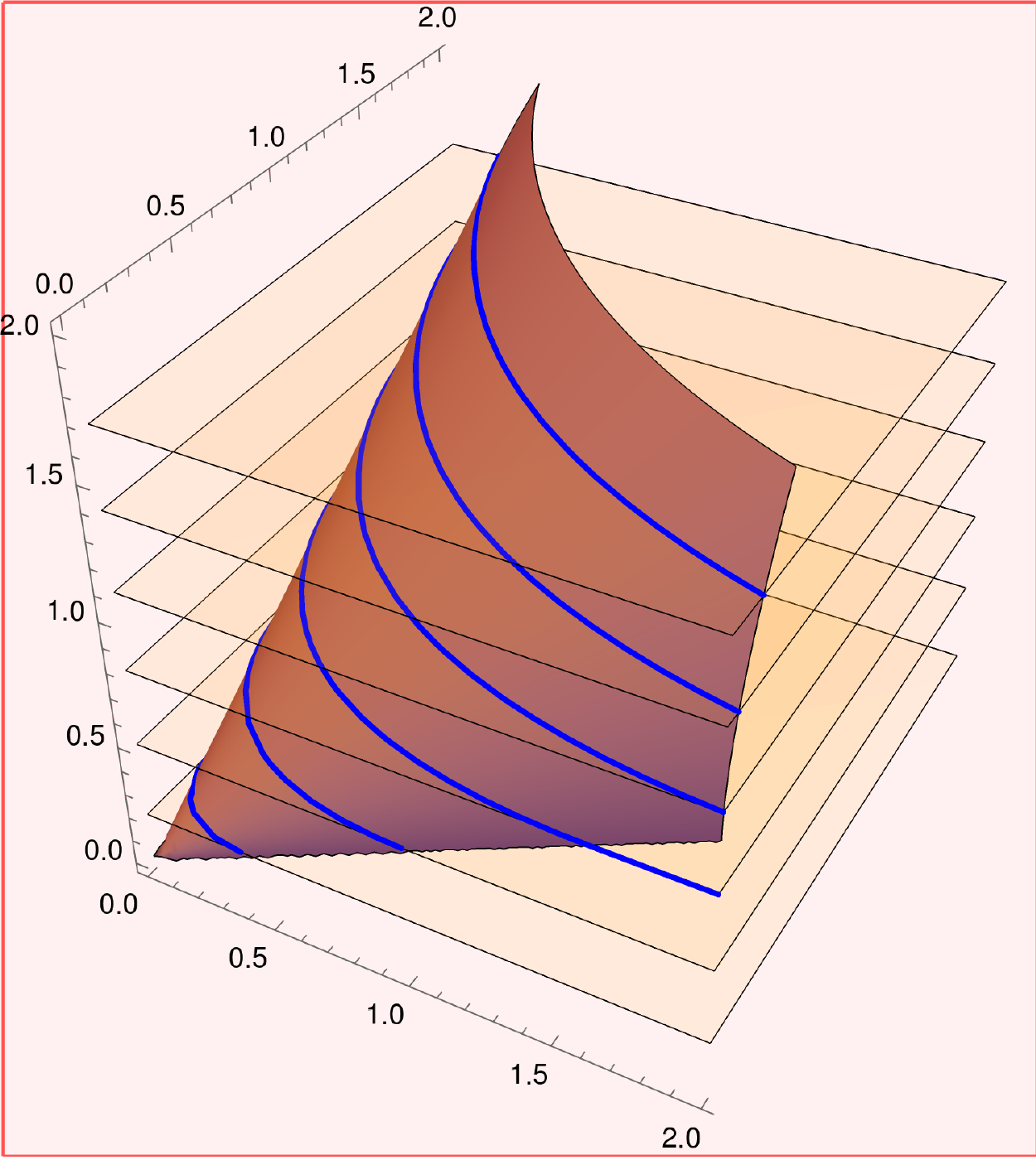}
  \caption{
    Plots of Construction~\ref{cons:bounded-reserves} for Uniswap (L), LMSR (M), and Brier score (R); see \S~\ref{sec:bounded-res-examples} for how the construction was applied in each setting.
    In each plot, the orange surface plots the $0$-level set of the perspective transform $T: (q_1,q_2,\alpha) \mapsto \alpha C(-(q_1,q_2)/\alpha)$, i.e., the set of triples $(q_1,q_2,\alpha)$ such that $\alpha C(-(q_1,q_2)/\alpha) = 0$.
    By design, this set also happens to be the graph of $\varphi$ resulting from Construction~\ref{cons:bounded-reserves}, i.e., the set of triples $(q_1,q_2,\alpha)$ such that $\varphi((q_1,q_2)) = \alpha$.
    To emphasize this connection, the $\alpha$-level sets $\varphi$ for $\alpha = 0.2, 0.6, 1.0, 1.4, 1.8$ are shown in blue.
    (Technically, these level sets should be projected down to the plane on the bottom of the figure.)
    As $T$ is convex (a fact of perspective transforms), the sublevel set $T(q_1,q_2,\alpha) \leq 0$, the region below the surface, is a convex set.
    We can see therefore that $\varphi$ is always a concave function.
  }
  \label{fig:perspective-construction}
\end{figure}

\begin{construction} \label{cons:bounded-reserves}
  Let $C:\reals^n\to\reals$ be a convex, increasing, $\ones$-invariant cost function satisfying $C(\0) > 0$.
  Then define $\varphi: \reals_{>0}^n \to \reals$ by $\varphi(\q) = \alpha$ where $C(-\q/\alpha) = 0$.
\end{construction}

The assumption $C(\0) > 0$ is without loss of generality; otherwise, add a sufficiently large constant to $C$.
Similarly, while this construction centers around the level set $\{\q\mid C(\q) = 0\}$, other level sets may be taken be shifting $C$ by a constant.

Before arguing that this construction is well-defined and satisfies all our axioms, let us verify that it is behaving as we would hope.
Define $C_\alpha(\q) := \alpha C(\q/\alpha)$, which is $C$ but with ``liquidity level'' $\alpha$.
Then by definition the $\alpha$-level set of $\varphi$ matches the $0$-level set of $C_\alpha$, i.e., $\{\q \mid \varphi(\q) = \alpha\} = \{\q \mid C_\alpha(-\q)=0\}$.
(See Figure~\ref{fig:perspective-construction}(L) for a visualization.)
If $\q_0$ satisfies $\varphi(\q_0)=\alpha$ for $\alpha>0$, then we have $\valtrades_\varphi(h) = \valtrades_{C_\alpha}(h)$.
In other words, $\varphi$ indeed has liquidity level $\alpha$ when $\varphi(\q_0) = \alpha$.
Moreover, if $\q_0 = \beta \q$, then $\varphi(\q_0) = \beta \varphi(\q)$, so indeed the liquidity level drops to $0$ as $\q_0 \to \0$, thereby protecting the reserves.
This scaling property is known as \emph{1-homogeneity}, and plays a large role in our analysis; see Lemma~\ref{lem:construction-homogeneous} and \S~\ref{sec:role-1-homogeneity}.

\begin{lemma} \label{lemma:const-welldef}
  The function $\varphi$ from Construction \ref{cons:bounded-reserves} is well-defined.
\end{lemma}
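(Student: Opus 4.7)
The plan is to show that for every $\q \in \reals^n_{>0}$ there exists a unique $\alpha > 0$ satisfying $C(-\q/\alpha) = 0$. I would fix such a $\q$ and define the auxiliary function $f:(0,\infty)\to\reals$ by
\[ f(\alpha) \;=\; C(-\q/\alpha). \]
The goal is then to prove (i) $f$ is continuous, (ii) $f$ is strictly monotonic, and (iii) $f$ takes both positive and negative values, so that a unique zero exists by the intermediate value theorem.

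For continuity, $C$ is convex on all of $\reals^n$ and hence continuous, and $\alpha\mapsto -\q/\alpha$ is continuous on $(0,\infty)$, so $f$ is continuous. For strict monotonicity I would argue that if $0 < \alpha_1 < \alpha_2$, then for each coordinate $i$ we have $-q_i/\alpha_1 < -q_i/\alpha_2$ because $q_i > 0$; thus $-\q/\alpha_1 \precneqq -\q/\alpha_2$ (in fact strictly in every coordinate), and the assumption that $C$ is increasing gives $f(\alpha_1) < f(\alpha_2)$. Strict monotonicity alone already rules out two distinct zeros.

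For the limits I would use $\ones$-invariance. As $\alpha\to\infty$, $-\q/\alpha\to\0$, so by continuity $f(\alpha)\to C(\0) > 0$, using the hypothesis $C(\0)>0$. As $\alpha\to 0^+$, letting $m = \min_i q_i > 0$, we have $-\q/\alpha \preceq -(m/\alpha)\ones$, so by the increasing property and $\ones$-invariance,
\[ f(\alpha) \;\leq\; C\!\left(-(m/\alpha)\ones\right) \;=\; C(\0) - m/\alpha \;\longrightarrow\; -\infty. \]
Combined with continuity, the intermediate value theorem provides some $\alpha^* > 0$ with $f(\alpha^*) = 0$, and strict monotonicity makes it unique. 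Hence $\varphi(\q) := \alpha^*$ is well-defined.

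I do not expect any genuine obstacle here: the proof is a direct application of IVT plus monotonicity, and the two crucial inputs (that $\q \succ \0$ is used to force strict coordinatewise inequalities, and that $\ones$-invariance forces $f\to-\infty$ at the left endpoint) are built into the hypotheses of the construction. The only subtlety is remembering that the strict inequality in the definition of \emph{increasing} requires $\q \succneqq \q'$ rather than $\q \succeq \q'$, which is why the argument crucially restricts the domain of $\varphi$ to $\reals^n_{>0}$ rather than $\reals^n_{\geq 0}$.
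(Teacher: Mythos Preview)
Your proof is correct and follows essentially the same approach as the paper: continuity of $f(\alpha)=C(-\q/\alpha)$ from convexity, strict monotonicity from $C$ being increasing together with $\q\succ\0$, and a sign change established via $\ones$-invariance and $C(\0)>0$, yielding existence and uniqueness by the intermediate value theorem. The only cosmetic difference is that the paper exhibits explicit endpoints $\alpha_1=\min_i q_i/C(\0)$ and $\alpha_2=\max_i q_i/C(\0)$ bracketing the zero (using $C(-C(\0)\ones)=0$), whereas you instead argue via the limits $f(\alpha)\to C(\0)>0$ as $\alpha\to\infty$ and $f(\alpha)\to-\infty$ as $\alpha\to 0^+$; both arrive at the same conclusion.
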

\begin{proof}
  We must show that $\varphi(\q)$ exists and is unique on $\reals^n_{>0}$.
  Let $\q \succ \0$.
  First, define $t = C(\0) > 0$.
  By $\ones$-invariance, $C(- t \ones) = 0$.
  Let $\beta_1 = \min_i q_i$ and $\beta_2 = \max_i q_i$.
  Define $\alpha_1 = \frac{\beta_1}{t}$ and $\alpha_2 = \frac{\beta_2}{t}$.
  We have $-\q/\alpha_1 \preceq -t \ones \preceq -\q/\alpha_2$.
  Because $C$ is increasing, $C(-\q/\alpha_1) \leq C(-t \ones) \leq C(-\q/\alpha_2)$.
  Because $C$ is convex and defined on $\reals^n$, it is continuous.
  Therefore, $C(-\q/\alpha)$ is continuous in $\alpha$ for $\alpha > 0$.
  So there exists $\alpha^* \in [\alpha_1,\alpha_2]$ such that $C(-\q/\alpha^*) = C(-t \ones) = 0$.
  Furthermore, because $C(-\q/\alpha)$ is strictly decreasing in $\alpha$, this solution $\alpha^*$ to $C(-\q/\alpha) = 0$ is unique.
\end{proof}

A useful fact about Construction~\ref{cons:bounded-reserves} is that the resulting potential function $\varphi$ is 1-homogeneous.
Recall that a function $f:\reals^n_{>0}\to\reals$ is \emph{1-homogeneous} if for all $\q\succ\0$ and $\alpha>0$ we have $f(\alpha\q) = \alpha f(\q)$.
Such functions have useful symmetry for market making \citep{othman2010practical}; in particular, since their gradients are 0-homogeneous (invariant under scaling), the instantaneous exchange rates between any two assets remain constant when scaling up the reserves.
In part because of this appealing scale invariance, nearly all popular CFMM potential functions used in practice are 1-homogeneous \citep{angeris2022constant}.

\begin{lemma}
  \label{lem:construction-homogeneous}
  Any $\varphi$ resulting from Construction~\ref{cons:bounded-reserves} is 1-homogeneous, concave, and increasing.
\end{lemma}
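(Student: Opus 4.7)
The plan is to verify the three properties of $\varphi$ arising from Construction~\ref{cons:bounded-reserves} by exploiting the characterization $\varphi(\q) = \alpha \iff C(-\q/\alpha) = 0$ (for $\q\succ\0, \alpha>0$), together with the well-definedness established in Lemma~\ref{lemma:const-welldef}. The key structural fact I will use is the standard convex-analysis observation that the perspective $T(\q,\alpha) := \alpha C(\q/\alpha)$ of a convex function $C$ is jointly convex on $\reals^n\times\reals_{>0}$; composing with the affine map $\q\mapsto-\q$ preserves this, so $(\q,\alpha)\mapsto \alpha C(-\q/\alpha)$ is jointly convex.

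For 1-homogeneity, I will just unwind the defining equation: if $\varphi(\q)=\alpha$, then for any $\beta>0$, $C(-(\beta\q)/(\beta\alpha)) = C(-\q/\alpha) = 0$, and by uniqueness of the solution in $\alpha$ (Lemma~\ref{lemma:const-welldef}), we must have $\varphi(\beta\q)=\beta\alpha=\beta\varphi(\q)$. For the monotonicity, I will argue by contradiction. Given $\q\succneqq\q'$, set $\alpha=\varphi(\q)$, $\alpha'=\varphi(\q')$ and suppose for contradiction that $\alpha\leq\alpha'$. Since $\q\succ\0$ implies $-\q\prec\0$, scaling by the larger denominator $1/\alpha'\leq 1/\alpha$ gives $-\q/\alpha \preceq -\q/\alpha'$; combined with $-\q/\alpha'\precneqq -\q'/\alpha'$, we obtain $-\q/\alpha\precneqq -\q'/\alpha'$, so because $C$ is increasing, $0=C(-\q/\alpha) < C(-\q'/\alpha')=0$, a contradiction.

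For concavity, I will show the hypograph $H=\{(\q,\alpha)\in\reals^n_{>0}\times\reals_{>0}:\alpha\leq\varphi(\q)\}$ is convex, which is equivalent to concavity of $\varphi$. Since $\alpha\mapsto C(-\q/\alpha)$ is nondecreasing in $\alpha>0$ (as $-\q\prec\0$ and $-\q/\alpha$ moves toward $\0$ as $\alpha$ grows, combined with $C$ increasing), and equals $0$ at $\alpha=\varphi(\q)$, we have $C(-\q/\alpha)\leq 0 \iff \alpha\leq\varphi(\q)$. Therefore $H = \{(\q,\alpha):\alpha C(-\q/\alpha)\leq 0\}$, a sublevel set of the jointly convex function $(\q,\alpha)\mapsto \alpha C(-\q/\alpha)$, hence convex.

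The main obstacle, though minor, is making sure the direction of monotonicity of $\alpha\mapsto C(-\q/\alpha)$ is correctly pinned down so that the sublevel set of the perspective matches the hypograph (rather than epigraph) of $\varphi$; once this sign bookkeeping is settled, the rest of the proof follows essentially by inspection from standard convexity of the perspective transform.
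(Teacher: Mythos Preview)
Your proof is correct and follows essentially the same approach as the paper: the 1-homogeneity and concavity arguments are virtually identical (both identifying the hypograph of $\varphi$ with a sublevel set of the convex perspective $(\q,\alpha)\mapsto\alpha C(-\q/\alpha)$), and your monotonicity argument is a minor reorganization of the paper's, compressing its two-step case analysis into a single chain of inequalities $-\q/\alpha \preceq -\q/\alpha' \precneqq -\q'/\alpha'$. One small remark: in the concavity part you say $\alpha\mapsto C(-\q/\alpha)$ is merely ``nondecreasing,'' but to get the \emph{iff} you need either strict monotonicity (which your parenthetical actually justifies, since $C$ is strictly increasing) or the uniqueness from Lemma~\ref{lemma:const-welldef}; it would be cleaner to state this explicitly.
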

\begin{proof}
  Let $C$ be the given cost function.

  (Homogeneous)\;
  Let $\q \succ \0$, and $\alpha = \varphi(\q)$.
  Then $C(-\q/\alpha) = 0$ by definition of $\varphi$ and the fact that it is well defined (Lemma~\ref{lemma:const-welldef}).
  Further, for any $\beta > 0$, we have $C(-\beta\q/(\alpha\beta)) = 0$, which implies $\varphi(\beta\q) = \beta\alpha = \beta\varphi(\q)$.

  (Increasing)\;
  Let $\0 \prec \q' \precneqq \q$.
  Let $\alpha = \varphi(\q)$, so that $C(-\q/\alpha) = 0$.
  As $C$ is increasing, and $-\q'/\alpha \succneqq -\q/\alpha$, we have $C(-\q'/\alpha) > C(-\q/\alpha) = 0$.
  Now let $\alpha' = \varphi(\q')$, so that $C(-\q'/\alpha') = 0$.
  Clearly $\alpha' \neq \alpha$ due to uniqueness in Lemma \ref{lemma:const-welldef}, and if $\alpha' > \alpha$, we would have $-\q'/\alpha \precneqq -\q'/\alpha'$ yet $C(-\q'/\alpha) > 0 = C(-\q'/\alpha')$, contradicting $C$ being increasing.
  We conclude $\alpha > \alpha'$, or equivalently, $\varphi(\q) > \varphi(\q')$, as desired.

  (Concave)\;
  It is well-known that the map $(\q,\alpha) \mapsto \alpha C(-\q/\alpha)$ is convex, as the perspective transform of the convex function $\q \mapsto C(-\q)$.
  Thus, $\{(\q,\alpha) \mid \alpha C(-\q/\alpha) \leq 0\}$ is a convex set as a sublevel set of a convex function.
  Yet this set is precisely the hypograph of $\varphi$, as $\varphi(\q) \geq \alpha \iff C(-\q/\alpha) \leq 0 \iff \alpha C(-\q/\alpha) \leq 0$, so $\{(\q,\alpha) \mid \varphi(\q) \geq \alpha\} = \{(\q,\alpha) \mid \alpha C(-\q/\alpha) \leq 0\}$.
  As the hypograph of $\varphi$ is convex, $\varphi$ is a concave function.
 
\end{proof}

\begin{proposition}
  \label{prop:construction-5-axioms}
  Let $C$ satisfy the conditions of Construction~\ref{cons:bounded-reserves} as well as $\{\q \mid C(\q)=0\} \subseteq \reals^n_{<0}$.
  Then the resulting $\varphi:\reals^n_{>0}\to\reals$ defines a CFMM satisfying \liquidation, \nodominatedtrades, \pathindependence, \demandresponsiveness, and \boundedreserves.
\end{proposition}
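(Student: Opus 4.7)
The plan is to invoke the reverse direction of Theorem~\ref{thm:cfmm-char-boundedreserves}, which characterizes markets satisfying all five axioms (indeed, with \strongpathindependence in place of \pathindependence) as CFMMs defined by a continuous, increasing, quasiconcave $\varphi:\reals^n_{>0}\to\reals$ also satisfying the boundary-limit condition (c) from that theorem. Since \strongpathindependence together with the CFMM structure trivially implies \pathindependence, this suffices.

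From Lemma~\ref{lem:construction-homogeneous} we already know $\varphi$ is concave, increasing, and 1-homogeneous on $\reals^n_{>0}$. Since $\varphi$ is concave on the open convex domain $\reals^n_{>0}$, it is continuous there, so condition (b) of Theorem~\ref{thm:cfmm-char-boundedreserves} holds. Condition (a), namely $\valtrades(\epsilon)=\reals^n_{>0}$, is immediate from the construction, since $\varphi$ is only defined on $\reals^n_{>0}$. Thus the only genuine work is to verify condition (c).

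The main obstacle is condition (c): for every $\q\in\bdposorth$, we must show
\[
\lim_{\alpha\to 0^+}\varphi(\q+\alpha\ones)\;=\;\lim_{\alpha\to 0^+}\varphi(\alpha\ones).
\]
The right-hand limit is $0$ by 1-homogeneity, since $\varphi(\alpha\ones)=\alpha\varphi(\ones)$. For the left-hand limit, the key is the hypothesis $\{\q\mid C(\q)=0\}\subseteq \reals^n_{<0}$, which translates directly into the statement that the 1-level set $L:=\{\q\mid\varphi(\q)=1\}=\{-\q'\mid C(\q')=0\}$ lies entirely in $\reals^n_{>0}$. Moreover $L$ is closed in $\reals^n$ since $C$ is continuous (being convex and finite on $\reals^n$).

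Now I would argue by contradiction. Because $\varphi$ is monotone in $\alpha$ along $\q+\alpha\ones$, the limit $\beta^{*}:=\lim_{\alpha\to 0^+}\varphi(\q+\alpha\ones)$ exists. Suppose $\beta^{*}>0$. Pick any sequence $\alpha_n\downarrow 0$ and let $\q_n=\q+\alpha_n\ones\succ\0$. By 1-homogeneity, $\q_n/\varphi(\q_n)\in L\subseteq\reals^n_{>0}$. Since $\q_n\to\q$ and $\varphi(\q_n)\to\beta^{*}>0$, we obtain $\q_n/\varphi(\q_n)\to \q/\beta^{*}$. Closedness of $L$ forces $\q/\beta^{*}\in L\subseteq\reals^n_{>0}$, but $\q\in\bdposorth$ gives $\q/\beta^{*}\in\bdposorth$, a contradiction. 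Hence $\beta^{*}=0$, establishing~(c). Applying Theorem~\ref{thm:cfmm-char-boundedreserves} in the reverse direction then yields all five axioms, completing the proof.
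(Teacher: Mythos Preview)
Your proposal is correct and follows essentially the same route as the paper: reduce to the reverse direction of Theorem~\ref{thm:cfmm-char-boundedreserves}, obtain (a) and (b) from Lemma~\ref{lem:construction-homogeneous}, and establish (c) by a contradiction argument using 1-homogeneity to push a sequence into the closed zero-level set of $C$ (equivalently, the 1-level set of $\varphi$) and observing the limit would lie on $\bdposorth$. The only cosmetic difference is that the paper works with $Z=\{C=0\}\subseteq\reals^n_{<0}$ while you work with its reflection $L=-Z$; one tiny omission is that you jump from ``$\beta^{*}>0$ leads to a contradiction'' to ``$\beta^{*}=0$'' without noting $\beta^{*}\ge 0$, which follows immediately from $\varphi>0$ on $\reals^n_{>0}$.
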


\begin{proof}
  We will establish conditions (a), (b), and (c) from Theorem~\ref{thm:cfmm-char-boundedreserves} and then apply the result.
  Part (a) is implied by the domain of $\varphi$.
  Part (b) follows from Lemma~\ref{lem:construction-homogeneous}.
  It thus suffices to show (c).
  
  Let $\bar\varphi:\reals^n_{\geq 0}\to\reals\cup\{-\infty\}$ be the concave closure of $\varphi$.
  Then $\bar\varphi$ is continuous on $\reals^n_{\geq 0}$; see the proof of Theorem~\ref{thm:cfmm-char-boundedreserves}.
  As $\varphi$ is 1-homogeneous from Lemma~\ref{lem:construction-homogeneous}, we have $\bar\varphi(\0) = \lim_{\alpha\to 0^+} \varphi(\alpha\ones) = \lim_{\alpha\to 0^+} \alpha\varphi(\ones) = 0$.
  Now let $\q \succneqq \0$ with $q_i = 0$ for some $i$.
  Suppose for a contradiction that $\bar\varphi(\q) = \alpha > 0$.
  Let $\{\q_k\}_{k\in\mathbb N}$ be a sequence in $\reals^n_{>0}$ converging to $\q$, so that $\bar\varphi(\q_k) \to \alpha$.
  Then letting $\alpha_k = \bar\varphi(\q_k) = \varphi(\q_k)$, we have $-\q_k / \alpha_k \to -\q / \alpha$.
  By definition of $\varphi$, each point $-\q_k / \alpha_k$ is an element of $Z := \{\q'\in\reals^n \mid C(\q')=0\}$.
  As $Z$ is closed (by continuity of $C$), we must have $-\q / \alpha \in Z$.
  Yet by assumption, $Z \subseteq \reals^n_{<0}$, a contradiction as $-\q / \alpha \not\prec \0$ (recall that $q_i = 0$).
  We conclude that $\bar\varphi(\q) \leq 0$.
  As $\varphi > 0$ on $\reals^n_{>0}$, by continuity we also have $\bar\varphi(\q) \geq 0$.
  Thus $\bar\varphi(\q) = 0 \leq 0 = \bar\varphi(\0)$, giving (c).
\end{proof}

\subsection{Homogeneous CFMMs and unique scoring rules}
\label{sec:role-1-homogeneity}

As essentially all CFMMs used in practice are 1-homogeneous \citep{angeris2022constant}, it is natural to ask whether Construction~\ref{cons:bounded-reserves} could have produced them.
We now show the answer is yes: every 1-homogeneous (increasing, concave) potential function is the result of Construction~\ref{cons:bounded-reserves} for some choice of cost function.

\begin{proposition}
  \label{prop:construction-1-homog}
  A potential $\varphi:\reals^n_{>0}$ is the result of applying Construction~\ref{cons:bounded-reserves} if and only if $\varphi$ is 1-homogeneous, increasing, and concave.
\end{proposition}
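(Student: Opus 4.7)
The forward direction is already handled by Lemma~\ref{lem:construction-homogeneous}, so the task is the reverse: given a 1-homogeneous, increasing, concave $\varphi:\reals^n_{>0}\to\reals$, exhibit a cost function $C$ that the Construction sends to $\varphi$. The plan is to mimic the conversion from Theorem~\ref{thm:CFMM-to-cost}, but centered on the level set $\{\varphi=1\}$ rather than on a generic initial reserve, and then verify the Construction inverts it cleanly using 1-homogeneity.

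Concretely, I would define
\[
  C(\q) \;=\; \inf\bigl\{ c \in \reals \;:\; c\ones-\q \succ \0 \text{ and } \varphi(c\ones-\q)\geq 1 \bigr\}.
\]
A few preliminary observations set up the verification. First, $\varphi$ is continuous on $\reals^n_{>0}$ (concavity on an open convex set) and $\varphi(\ones)>0$, since $2\varphi(\ones)=\varphi(2\ones)>\varphi(\ones)$ by 1-homogeneity plus the increasing property. Second, for each fixed $\q$ the set over which the infimum is taken is nonempty: choosing $c$ large gives $c\ones-\q\succ\0$, and by 1-homogeneity $\varphi(c\ones-\q)=c\,\varphi(\ones-\q/c)\to\infty$. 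Third, the set is bounded below by $\max_i q_i$, so $C(\q)\in\reals$.

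Next I would check that $C$ satisfies the hypotheses of Construction~\ref{cons:bounded-reserves}. The substitution $c\mapsto c-\alpha$ immediately gives $\ones$-invariance. Increasingness of $C$ follows from increasingness of $\varphi$ together with continuity (the standard argument from Theorem~\ref{thm:CFMM-to-cost} goes through verbatim on the open cone $\reals^n_{>0}$). Convexity mirrors the proof of Theorem~\ref{thm:CFMM-to-cost} using concavity of $\varphi$ and a convex combination of feasible $(\q,c)$ pairs. For $C(\0)>0$, observe $C(\0)=\inf\{c>0:c\varphi(\ones)\geq 1\}=1/\varphi(\ones)>0$ by 1-homogeneity.

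Finally, I would show that applying Construction~\ref{cons:bounded-reserves} to this $C$ recovers $\varphi$. Fix $\q\succ\0$ and set $\alpha:=\varphi(\q)>0$; 1-homogeneity gives $\varphi(\q/\alpha)=1$. Taking $c=0$ in the definition of $C(-\q/\alpha)$ is feasible (since $\q/\alpha\succ\0$ and $\varphi(\q/\alpha)\geq 1$), so $C(-\q/\alpha)\leq 0$. Conversely, for any $c<0$ with $c\ones+\q/\alpha\succ\0$, increasingness of $\varphi$ gives $\varphi(c\ones+\q/\alpha)<\varphi(\q/\alpha)=1$, so $C(-\q/\alpha)\geq 0$. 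Hence $C(-\q/\alpha)=0$, which by Lemma~\ref{lemma:const-welldef} uniquely identifies $\alpha$ as the value of the Construction's output at $\q$; therefore $\tilde\varphi(\q)=\alpha=\varphi(\q)$.

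The main subtlety to get right is that $\varphi$ is only defined on $\reals^n_{>0}$, whereas Theorem~\ref{thm:CFMM-to-cost} was stated for $\varphi$ on all of $\reals^n$; this is why the definition of $C$ carries the explicit constraint $c\ones-\q\succ\0$ and why the boundedness of the infimum must be argued from 1-homogeneity rather than from continuity on $\reals^n$. Once that restriction is tracked carefully, every other step is a routine translation of the Theorem~\ref{thm:CFMM-to-cost} argument specialized to the level set $\{\varphi=1\}$.
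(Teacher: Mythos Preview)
Your proposal is correct and follows essentially the same route as the paper: define $C$ via the Theorem~\ref{thm:CFMM-to-cost} conversion anchored at the level set $\{\varphi=1\}$, verify the hypotheses of Construction~\ref{cons:bounded-reserves}, and then use 1-homogeneity to show the construction returns $\varphi$. The only cosmetic difference is that the paper matches the $1$-level sets of $\varphi$ and the reconstructed $\varphi'$ and then extends by homogeneity, whereas you directly verify $C(-\q/\varphi(\q))=0$; your extra care with the domain constraint $c\ones-\q\succ\0$ is a welcome addition given that $\varphi$ lives only on $\reals^n_{>0}$.
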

\begin{proof}
  We have already shown in Lemma~\ref{lem:construction-homogeneous} and the proof of Proposition~\ref{prop:construction-5-axioms} that any potential resulting from the construction is 1-homogeneous, increasing, and concave.
  Now let $\varphi$ be a given 1-homogeneous, increasing, and concave potential.
  As in Theorem~\ref{thm:CFMM-to-cost}, define
  $C(\q) = \sup\left\{ c \in \reals \mid \varphi(-\q - c\ones) \geq 1 \right\}$.
  Then for any $\q \succ \0$, we have $C(-\q) = 0 \iff \varphi(\q-0\ones) = 1 \iff \varphi(\q) = 1$.
  Now apply Construction~\ref{cons:bounded-reserves} to $C$ and obtain $\varphi'$.
  Then $C(-\q) = 0 \iff C(-\q/1)=0 \iff \varphi'(\q) = 1$.
  We conclude $\varphi(\q) = 1 \iff C(-\q) = 0 \iff \varphi'(\q) = 1$.
  As both $\varphi$ and $\varphi'$ are 1-homogeneous, they are fully determined by any level set, so must now be equal; formally, for any $\q\succ\0$, letting $\alpha = \varphi(\q)$, we have $\varphi(\q/\alpha) = \varphi(\q)/\varphi(\q) = 1 \implies \varphi'(\q/\alpha) = 1 \implies \varphi'(\q) = \alpha\varphi'(\q/\alpha) = \alpha = \varphi(\q)$.
\end{proof}

As a result of Proposition~\ref{prop:construction-1-homog}, through Corollary~\ref{cor:CFMM-scoring}, every 1-homogeneous CFMM can be associated with a unique proper scoring rule; the various level sets of the potential merely correspond to scaling up the underlying scoring rule.
As we will see in \S~\ref{sec:bounded-res-examples}, this fact can be used to construct useful CFMMs from the vast literature on proper scoring rules.

\begin{corollary}
  \label{cor:unique-scoring-rule}
  Every CFMM for a 1-homogeneous, concave, increasing $\varphi$ is associated with a unique proper scoring rule. 
\end{corollary}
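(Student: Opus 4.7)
The plan is to combine Proposition~\ref{prop:construction-1-homog} with Corollary~\ref{cor:CFMM-scoring}, then use the perspective-transform structure of 1-homogeneous potentials to track how the induced scoring rule varies with the initial reserves, concluding that up to a liquidity scaling only a single scoring rule arises.

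First, given a 1-homogeneous, concave, increasing $\varphi$, Proposition~\ref{prop:construction-1-homog} singles out the cost function $C(\q) = \sup\{c \in \reals \mid \varphi(-\q - c\ones) \geq 1\}$, whose zero level set of $\q \mapsto C(-\q)$ equals $\{\q \mid \varphi(\q) = 1\}$. By Theorem~\ref{thm:CFMM-to-cost}, this $C$ is the cost function equivalent to the CFMM $\varphi$ whenever the initial reserves satisfy $\varphi(\qZ) = 1$. Applying Corollary~\ref{cor:CFMM-scoring}, concretely by taking the convex conjugate $G = C^*$ and its associated affine-minorant scoring rule, yields a proper scoring rule $S$ which I would designate the canonical scoring rule of $\varphi$.

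For uniqueness, I would show that any other choice of initial reserves produces the same scoring rule up to a positive multiplicative constant equal to the liquidity level. Let $\qZ'$ satisfy $\varphi(\qZ') = \alpha > 0$, and let $C'$ be the cost function from Theorem~\ref{thm:CFMM-to-cost} applied at $\qZ'$. By 1-homogeneity of $\varphi$, the condition $\varphi(-\q - c\ones) \geq \alpha$ is equivalent to $\varphi(-\q/\alpha - (c/\alpha)\ones) \geq 1$, which gives $C'(\q) = \alpha\, C(\q/\alpha)$, exactly the perspective transform of $C$. A short Legendre-transform calculation then shows $(C')^{*}(\p) = \alpha\, C^{*}(\p)$, so the associated generating function is $\alpha G$ and the induced proper scoring rule is $\alpha S$. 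Hence every initial reserves produces the same underlying scoring rule, rescaled by the liquidity level, establishing uniqueness under the natural normalization $\varphi(\qZ) = 1$.

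The main subtlety is the precise meaning of ``unique'': varying $\qZ$ traces out a one-parameter family $\{\alpha S\}_{\alpha > 0}$ of scoring rules, all positive rescalings of a single $S$, so the statement is really ``unique up to positive scaling,'' which we pin down by fixing the liquidity level. The perspective-transform identity $C'(\q) = \alpha C(\q/\alpha)$ forced by 1-homogeneity is the key technical step that collapses the family to a single canonical representative, and it is the place where 1-homogeneity is genuinely used (as opposed to mere concavity, which only yields Corollary~\ref{cor:CFMM-scoring}).
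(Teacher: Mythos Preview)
Your proposal is correct and follows essentially the same approach the paper sketches in the text preceding the corollary: use Proposition~\ref{prop:construction-1-homog} to identify the canonical cost function $C$, invoke Corollary~\ref{cor:CFMM-scoring} to obtain the scoring rule, and then observe (as the paper notes in \S\ref{sec:conn-worst-case}) that the perspective transform $\alpha C(\q/\alpha)$ simply scales the dual scoring rule by $\alpha$. Your explicit verification that $C'(\q)=\alpha C(\q/\alpha)$ via 1-homogeneity and that $(C')^*=\alpha C^*$ fills in details the paper leaves implicit, but the route is the same.
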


Before moving to examples, let us briefly remark on connections to convex geometry.
Construction~\ref{cons:bounded-reserves} can be regarded geometrically as a gauge function (also known as a Minkowski functional) on a star set, which is known to be a characterization of 1-homogeneous functions.
For any set $K \subseteq \reals^n$, the gauge function (Minkowski functional) of $K$ is defined as $g_K(\q) = \inf\{c>0 \mid \q \in cK\}$.
Letting $K = \{\q \in \reals^n_{>0} \mid C(-\q) \leq 0\}$, we have $\varphi(\q) = \inf\{c>0 \mid \q/c \in K\} = g_K(\q)$.

\subsection{Examples}
\label{sec:bounded-res-examples}

\paragraph{Uniswap}

Let us perform Construction~\ref{cons:bounded-reserves} on the cost function $C_k$ we derived from Uniswap, $\varphi_U(\q) = \sqrt{q_1 q_2}$, given some level set $k > 0$:
\begin{align*}
  C_k(\q) = \frac 1 2 \left( q_1 + q_2 + \sqrt{4k^2 + (q_1-q_2)^2} \right)~.
\end{align*}
First, let us check that $C_k$ satisfies the condition of Construction~\ref{cons:bounded-reserves}: indeed, $C_k(\0) = k > 0$.
From \S~\ref{sec:equivalence-examples}, we have $C_k(-\q) = 0 \iff \varphi_U(\q) = \sqrt{q_1q_2} = k$.
Thus $C(-\q/\alpha) = 0 \iff \sqrt{q_1q_2} = \alpha k$.
Thus, the construction yields $\varphi_k(\q) = \sqrt{q_1 q_2}/k$, which is exactly $\varphi_U$ when $k=1$, and a scaled version for other values.

\paragraph{LMSR}
Let $C(\q) = b\log(\sum_{i=1}^n e^{q_i/b})$ be LMSR.
Unlike our choice $\varphi(\q) = -C(-\q)$ from \S~\ref{sec:equivalence-examples}, let us now consider a ``reserves-aware'' version, which will automatically scale liquidity with $\q_0$.
As $C(\0) = b \log n > 0$, Construction~\ref{cons:bounded-reserves} applies, giving a potential $\varphi$.
See Figure~\ref{fig:perspective-construction}(M) for a visualization.
Moreover, the condition of Proposition~\ref{prop:construction-5-axioms} is also satisfied, as $C(\q) = 0 \implies \sum_{i=1}^n e^{q_i} = 1 \implies q_i < 0$ for all $i$.
So the CFMM with potential $\varphi$ satisfies all five of our axioms.
Unfortunately, unlike Uniswap, this $\varphi$ does not permit a closed-form expression.
Nonetheless, as we discuss in \S~\ref{sec:future}, we believe it could be practical to deploy in real markets.

\paragraph{CFMM from Brier score}

Let us see how to take one of the most popular scoring rules, the Brier (or quadratic) score, \begin{equation}
  \label{eq:brier}
  S(\p,y_i) = 2\p\cdot\bm{\delta}_i - \|\p\|_2^2~,
\end{equation}
and convert it to a CFMM using our chain of equivalences.
The corresponding convex generating function is $G(\p) = \|\p\|^2$.
The associated cost function $C$, the convex conjugate of $G$, does not have a convenient closed form in general; for $n=2$ assets it can be written
\begin{equation}
  \label{eq:6}
  C(\q) =
  \begin{cases}
    q_1 & q_1-q_2 \geq 2\\
    q_2 & q_1-q_2 \leq -2\\
    \tfrac 1 8 \left((q_1-q_2)^2 + 4 (1 + q_1 + q_2)\right) & \text{otherwise}
  \end{cases}~.
\end{equation}
From this form, one can see that $C$ is not always increasing, as e.g.\ the price for asset 2 becomes 0 when $q_1 - q_2 \geq 2$.
The resulting market therefore does not satisfy \nodominatedtrades, as selling asset 2 for a price of 0 is dominated by doing nothing.

It turns out, however, that we may still apply Construction~\ref{cons:bounded-reserves} to obtain a sensible CFMM; in this case, for $n=2$, we obtain
\begin{equation}
  \label{eq:4}
  \varphi(\q) = \sqrt{q_1 q_2} + \frac {q_1 + q_2} 2~,
\end{equation}
the sum of the geometric and arithmetic means; see Figure~\ref{fig:perspective-construction}(R).
In fact, this potential function is a hybrid CFMM appearing already in the literature, e.g. \citet[\S~2.4]{angeris2022constant}.

\paragraph{Curve}
Finally, let us consider Curve~\citep{egorov2019stableswap}, given by $\varphi(\q) = \sum_i q_i + \sum_i (1/q_i)$.
As noted by \citet{angeris2022constant}, this potential is not 1-homogeneous, and thus could not be produced by Construction~\ref{cons:bounded-reserves}.
As a result, its level sets are not merely scaled copies of each other, but change as the value of $\varphi$ grows.
(For instance, its 0-level set is the same as Uniswap with $k=1$, but no other level set is in common with Uniswap.)
Nonetheless, we may still apply Theorem~\ref{thm:cfmm-char-boundedreserves} to see that $\varphi$ satisfies all five of our axioms; to check condition (c), note that $\bar\varphi = -\infty$ on $\bdposorth$.

\section{Adaptive Liquidity and Other Future Directions}
\label{sec:future}

The literature on automated market makers for decentralized exchanges has largely proceeded somewhat removed from the expansive literature on prediction markets.
We believe the results presented here will allow for these independent lines of research to merge and inform each other.
To conclude, we briefly discuss several avenues for future work, with a focus on liquidity adaptation.

\paragraph{Transaction fees}
In practice, CFMMs often allow liquidity to change over time, in two ways: (1) they may charge a ``transaction fee'', wherein traders must contribute directly to the reserves in addition to their trade, and (2) they may allow liquidity providers to contribute to the reserves in exchange for a dividend.
Let us first discuss the transaction fee.

Here the market designer chooses a parameter $\gamma \in (0,1]$, where lower $\gamma$ corresponds to a higher fee.
The market maker then accepts any trade keeping the potential function $\varphi$ constant after discounting the bundle given to the market maker by $\gamma$.
Formally, we can write
\begin{align}
  \label{eq:cfmm-transaction-fee-original}
  \valtrades_{\varphi,\gamma}(h) = \{\r\in\reals^n \mid \varphi(\hsum(h) + \gamma \r_+ - \r_-) = \varphi(\hsum(h))\}~,
\end{align}
where $\r_+ = \max(\r,\0)$ and $\r_- = \min(\r,\0)$.
To state this set of valid trades more naturally in terms of transaction fees, relative to the no-fee $\valtrades$ for the vanilla CFMM, we may write
\begin{align}
  \label{eq:cfmm-transaction-fee-reformulated}
  \valtrades_{\varphi,\gamma}(h) = \{\r + \fee(\r) \mid \r\in\valtrades_\varphi(h) \}~,
\end{align}
where $\fee(\r) = \beta \r_+$ for $\beta = (1-\gamma)/\gamma > 0$.
That is, a trader may choose any trade $\r$ keeping $\varphi$ constant, but then must also add $\beta \r_+$ to the reserves.

Now suppose $\varphi$ is increasing, concave, and 1-homogeneous, as is commonly the case, and as guaranteed by Construction~\ref{cons:bounded-reserves}.
Then as $\varphi(\q_h) = \varphi(\q_h + \r)$, and $\varphi$ is increasing, we will have $\varphi(\q_h + \r + \fee(\r)) > \varphi(\q_h + \r) = \varphi(\q_h)$.
As we saw in \S~\ref{sec:reserves-and-liqudity}, $\varphi(\q)$ represents the liquidity level when the reserves are $\q$, so we conclude that the liquidity increases after each trade.
Moreover, as the fee-less CFMM satisfies \boundedreserves, and $\varphi$ is increasing, it is easy to verify that \boundedreserves will still be satisfied with the transaction fee.
Thus, the transaction fee successsfully subsidizes the liquidity increase of the market without risking depleting the reserves.

\paragraph{Implicitly defined potential functions}
Recall that the $\varphi$ from Construnction~\ref{cons:bounded-reserves} is only implicitly defined, as the solution $\varphi(\alpha)$ to $C(-\q/\alpha) = 0$ for the given cost function $C$.
Thus, even if $C$ is given explicitly, $\varphi$ may not have a closed form.
While sometimes one can solve for $\varphi$ explicitly, as we saw in \S~\ref{sec:bounded-res-examples}, this is the case for the 1-homogeneous CFMM potential $\varphi$ we derived from LMSR.
The lack of a closed form poses a challenge, as the transaction fee causes $\varphi$ to change, and thus the value of $\varphi(\q_h)$ would need to be recalculated rather than being fixed ahead of time.

We now propose a straightforward workaround using the fact that $\varphi$ is still implicitly defined by a known cost function $C$.
Suppose the current value $\alpha = \varphi(\q_h)$ is publicly known, and a trader wishes to purchase $\r \in \valtrades_\varphi(\q_h)$, i.e., such that $\varphi(\q_h + \r) = \alpha$.
We will ask the trader to announce $\r$, as well as the value $\alpha' = \varphi(\q_h + \r + \fee(\r))$, up to some error tolerance.
As $C(-\q/\alpha)$ is monotone in $\alpha$, the trader can easily compute $\alpha'$ within the desired accuracy given an expression for $C$.
Moreover, the two relevant conditions can be checked on-chain: $\varphi(\q_h + \r) = \alpha$ by $C(-(\q_h+\r)/\alpha) = 0$, and $\varphi(\q_h + \r + \fee(\r)) = \alpha'$ by $C(-(\q_h+\r+\fee(\r))/\alpha') = 0$.

To illustrate, let $C$ be LMSR, and $\varphi$ the result of Construction~\ref{cons:bounded-reserves}.
From the condition $C(-\q/\alpha) = 0$, the level set $\varphi(\q) = \alpha$ is $\{\q \in \reals^n_{>0} \mid \sum_{i=1}^n e^{-q_i/\alpha} = 1\}$.
The validity of $\r$ and $\alpha'$ can be checked, via $\sum_{i=1}^n \exp(-(q_i+r_i)/\alpha) \approx 1$ and $\sum_{i=1}^n \exp(-(q_i+r_i +\fee(\r)_i)/\alpha') \approx 1$.
Thus, by asking a trader to compute a valid trade, as well as approximating the next value of $\varphi$, the market can proceed even without a closed form for $\varphi$.

\paragraph{Other forms of liquidity adaptation}
The prediction market literature offers many other forms of transaction fees and schemes for liquidity adaptation; see e.g. \citet{othman2010practical,othman2012profit-charging,li2013axiomatic}.
Using the equivalence developed in this work, these market makers can be readily transformed into those for decentralized exchanges.
It would be particularly interesting to study the CFMM transaction fee within the volume-parameterized market (VPM) framework of~\citet{abernethy2014general}; we conjecture that it fails to satisfy their \emph{shrinking spread} axiom (for any choice of volume function) but satisfies their others.
Conversely, the perspective market from that same paper may be of interest to the decentralized finance community.

\paragraph{Other directions}
As mentioned above, many CFMMs allow liquidity providers to directly contribute to the reserves.
In light of our results, it would be interesting and potentially impactful to study the elicitation implications of these protocols from the side of liquidity providers.
Finally, while the above focuses on potential contributions from the prediction market literature to decentralized finance, contributions in the direction are likely to be fruitful as well.

\subsection*{Acknowledgements}
We thank David Pennock, Daniel Reeves, and Anson Kahng for collaboration in working out the cost function and proper scoring rule corresponding to Uniswap.
We also thank Scott Kominers, Ciamac Moallemi, Abe Othman, Tim Roughgarden, and Christoph Schlegel for helpful discussions.
This material is based upon work supported by the National Science Foundation under Grant No.\ IIS-2045347.

\bibliographystyle{plainnat}
\bibliography{diss,cfmms,extra}

\appendix

\section{Omitted Proofs for CFMM Characterization}
\label{sec:cfmm-lemmas}

To prove Lemma \ref{lemma:fixedq-DR-Rplusconvex}, we separate out the following lemma, which will be useful later.
\begin{lemma} \label{lemma:axioms-superlevel-convex}
  Suppose the CFMM defined by $\varphi$ satisfies \pathindependence, \nodominatedtrades, \liquidation, and \demandresponsiveness.
  For any $c \in \reals$, let $\R := \{\q \mid \varphi(\q) = c\}$ and let $\R^+ := \{\q + \r \mid \q \in \R, \r \preceq \0\}$.
  Then $\R^+$ is convex.
\end{lemma}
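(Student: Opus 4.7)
To prove $\R^+$ is convex, the plan is to reduce to a single key claim: for any $\q^1, \q^2 \in \R$ and $\lambda \in (0,1)$, the convex combination $\q^* := \lambda\q^1 + (1-\lambda)\q^2$ coordinatewise dominates some point of $\R$. Since each element of $\R^+$ already dominates a point of $\R$, and convex combinations respect the $\succeq$ order, this claim implies convex combinations of arbitrary elements of $\R^+$ stay in $\R^+$.

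To witness the domination, set $\r^+ := (\q^2-\q^1)^+$ and $\r^- := (\q^2-\q^1)^-$, so that $\r^+ - \r^- = \q^2 - \q^1 \in \valtrades((\q^1))$ by the CFMM definition. WLOG assume $\q^1 \neq \q^2$; a short application of \nodominatedtrades together with Observation~\ref{obs:0-valtrades} shows that $\r^+, \r^- \succneqq \0$ (two distinct level-set points cannot be comparable, else a strict trade would dominate the null trade). Apply \liquidation at the history $(\q^1)$ to the bundles $(1-\lambda)\r^+$ and $\r^-$ to obtain $\beta \geq 0$ with $(1-\lambda)\r^+ - \beta\r^- \in \valtrades((\q^1))$; the resulting state $\q^\prime := \q^1 + (1-\lambda)\r^+ - \beta\r^-$ then lies in $\R$. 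If we can show $\beta \geq 1-\lambda$, then $\q^* - \q^\prime = (\beta - (1-\lambda))\r^- \succeq \0$, exhibiting $\q^\prime \in \R$ as the desired witness.

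The crux is to derive $\beta \geq 1-\lambda$ from \demandresponsiveness. The key is to repackage the bundles so that the first trade matches the axiom's rigid template: set $\r := (1-\lambda)\r^+$ and $\r^\prime := \beta\r^-$, so the first trade is exactly $\r - \r^\prime$. The continuation from $\q^\prime$ that reaches $\q^2$ is $\lambda\r^+ - (1-\beta)\r^-$, and it is valid because both endpoints have $\varphi$-value $c$. Written in the template $\alpha\r - \gamma\r^\prime = \alpha(1-\lambda)\r^+ - \gamma\beta\r^-$, matching coefficients yields $\alpha = \lambda/(1-\lambda)$ and $\gamma = (1-\beta)/\beta$. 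In the generic case $\beta \in (0,1)$ both are positive, and \demandresponsiveness gives $\gamma \leq \alpha$, which rearranges (cross-multiply by $\beta(1-\lambda) > 0$) to $\beta \geq 1-\lambda$. The boundary case $\beta \geq 1$ trivially satisfies the bound, while $\beta = 0$ is ruled out by \nodominatedtrades (since $(1-\lambda)\r^+ \succneqq \0$ would strictly dominate the null trade).

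The main obstacle I foresee is the bookkeeping needed to match the rigid form of \demandresponsiveness — whose first trade must be exactly $\r - \r^\prime$ with unit coefficients — via the reparameterization $\r \leftarrow (1-\lambda)\r^+$, $\r^\prime \leftarrow \beta\r^-$. Once this repackaging is identified, the entire proof reduces to a single inequality manipulation together with the straightforward reduction to the case $\q^1, \q^2 \in \R$.
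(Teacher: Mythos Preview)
Your proof is correct and follows essentially the same approach as the paper's: both reduce to showing that a convex combination of two level-set points dominates some element of $\R$, obtain an intermediate level-set point via \liquidation, and then extract the needed inequality from \demandresponsiveness via the same reparameterization (your $\r \leftarrow (1-\lambda)\r^+$, $\r' \leftarrow \beta\r^-$ is exactly the paper's implicit substitution). Your treatment is in fact slightly more careful in handling the boundary cases $\beta = 0$ and $\beta \geq 1$ explicitly.
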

\begin{proof}
  We show that for any $\q,\q' \in \R$, any convex combination $\q^* = (1-\lambda)\q + \lambda \q'$ is in $\R^+$.
  The lemma then follows because for any $\r,\r' \in \R^+$ of the form $\q+\mathbf{z},\q'+\mathbf{z}'$, with $\mathbf{z},\mathbf{z}' \geq \0$, the convex combination is $(1-\lambda)\r + \lambda \r' = \q^* + (1-\lambda)\mathbf{z} + \lambda \mathbf{z}'$, which is in $\R^+$.

  So let $\q,\q' \in \R$, let $\lambda \in (0,1)$, and define $\r = \q' - \q$.
  Then $\r \in \valtrades(\q)$.
  Let $\r^+ = \max(\r, \vec{0})$ and $\r^- = -\min(\r, \vec{0})$ taken pointwise, so that $\r = \r^+ - \r^-$.
  By \liquidation, there exists $\beta \geq 0$ such that $\r_1 := \lambda \r^+ - \beta \r^- \in \valtrades(\q)$.
  By \nodominatedtrades, $\beta \neq 0$ as otherwise $\lambda \r^+$ would be dominated by trade $\0$.
  Let $\r_2 = (1-\lambda)\r^+ - (1-\beta)\r^-$.
  Then $\r_2 \in \valtrades(\q + \r_1)$, because $\q + \r_1 + \r_2 = \q + \r = \q'$.
  Therefore, by \demandresponsiveness, $\frac{1-\beta}{\beta} \leq \frac{1-\lambda}{\lambda}$, implying $\lambda \leq \beta$.

  Observe that $\q^* = \q + \lambda \r = \q + \lambda \r^+ - \lambda \r^- = \q + \r_1 + (\beta - \lambda) \r^-$.
  Because $\q + \r_1 \in \R$, $\beta \geq \lambda$, and $\r^- \succeq \vec{0}$, we have $\q^* \in \R^+$.
\end{proof}

\begin{proof}[Proof of Lemma \ref{lemma:fixedq-DR-Rplusconvex}]
  By Lemma \ref{lemma:fixedq-axioms-cfmm}, the market can be implemented as a CFMM with potential $\varphi$.
  Applying Lemma \ref{lemma:axioms-superlevel-convex} to $c = \varphi(\q_0)$ in particular, we obtain $\R^+_{\varphi}(\q_0)$ is convex, as claimed.
\end{proof}

Proposition \ref{prop:fixedq-cfmm} will be proven by the following two lemmas.
\begin{lemma} \label{lemma:fixedq-Rplusconvex-cfmm}
  Fix the initial reserves $\q_0$.
  If a market maker $\valtrades$ with $\valhist(\epsilon) = \{(\q_0)\}$ satisfies \liquidation, \nodominatedtrades, \pathindependence, and \demandresponsiveness, then it can be implemented as a CFMM with an increasing, concave potential function $\varphi$.
\end{lemma}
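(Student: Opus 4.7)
The plan is to invoke the two preceding lemmas to reduce the problem to a geometric one, and then explicitly construct $\varphi$ as a gauge-like functional along the direction $\ones$. By Lemma~\ref{lemma:fixedq-axioms-cfmm}, the market is already a CFMM with some potential $\varphi_0$; set $\R := \R_{\varphi_0}(\q_0)$ and $\R^+ := \R + \reals^n_{\geq 0}$. By Lemma~\ref{lemma:fixedq-DR-Rplusconvex}, $\R^+$ is convex, and by construction it is upward-closed.

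The first preparatory step is to observe that $\R$ is an antichain in $(\reals^n,\succneqq)$: if $\q,\q'\in\R$ with $\q'\succneqq\q$, then the trades $\q-\q_0$ and $\q'-\q_0$ would both lie in $\valtrades((\q_0))$ with the latter dominating the former, violating \nodominatedtrades. Hence $\R$ coincides with the set of Pareto-minimal elements of $\R^+$.

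Next, I would define (after an affine rescaling so that $\varphi(\q_0)=0$) the candidate potential
\[
  \varphi(\q) \;:=\; \sup\{\, t \in \reals \;\mid\; \q - t\ones \in \R^+ \,\}.
\]
Finiteness of the supremum follows from the antichain/upward-closure structure: for $t$ very negative, $\q-t\ones$ lies above some point of $\R$, and for $t$ very positive it lies strictly below every point of $\R$. Concavity follows directly from convexity of $\R^+$: if $\q_i - t_i\ones \in \R^+$ then the convex combination $\lambda\q_1+(1-\lambda)\q_2 - (\lambda t_1+(1-\lambda)t_2)\ones$ lies in $\R^+$, giving $\varphi(\lambda\q_1+(1-\lambda)\q_2) \geq \lambda t_1 + (1-\lambda)t_2$. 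Agreement with $\R$ on the zero level set is immediate: $\q \in \R$ gives $\varphi(\q)\geq 0$, while $\varphi(\q)>0$ would force $\q$ to strictly dominate a point of $\R$, contradicting the antichain property. Finally, since the level set of $\varphi$ through $\q_0$ is exactly $\R$, the CFMM induced by $\varphi$ has the same valid trades from $\q_0$ as the original market.

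The main obstacle I anticipate is verifying strict monotonicity. The key claim is that if $\q' \succneqq \q$ and $t^\star = \varphi(\q)$, then $\q' - t^\star\ones$ lies in the topological interior of $\R^+$, which then yields $\varphi(\q') \geq t^\star + \epsilon$ for some $\epsilon>0$ by moving slightly in the $-\ones$ direction. The argument splits on where $\q - t^\star\ones$ sits: if it is already interior, so is $\q' - t^\star\ones$; if it is a boundary point, hence in $\R$, then $\q' - t^\star\ones$ strictly dominates this element of $\R$, and the antichain characterization $\partial\R^+ = \R$ together with $\R^+ = \R + \reals^n_{\geq 0}$ forces $\q' - t^\star\ones$ into the interior. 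This final topological step, which relies on closedness of $\R$ (inherited from the CFMM structure) and the fact that strictly dominating a minimal element lands one strictly inside $\R^+$, is where I expect to spend the most care.
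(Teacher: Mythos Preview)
Your construction is exactly the paper's: invoke Lemma~\ref{lemma:fixedq-DR-Rplusconvex} for convexity of $\R^+$, define $\varphi(\q)=\sup\{t:\q-t\ones\in\R^+\}$, and read off concavity from convexity of $\R^+$ and the zero level set from the antichain property of $\R$. For strict monotonicity the paper bypasses your interior/boundary route with a one-line contradiction: if $\q'\succneqq\q$ and $\varphi(\q')=\varphi(\q)=u$, then $\q-u\ones,\q'-u\ones\in\R$ form a dominated pair in $\valtrades((\q_0))$, violating \nodominatedtrades. Your caution about attainment of the supremum is well-placed---the paper also leans on it without comment---but note that closedness of $\R$ is not obviously ``inherited from the CFMM structure,'' since Lemma~\ref{lemma:fixedq-axioms-cfmm} imposes no continuity on $\varphi_0$.
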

\begin{proof}
  Given a market $\valtrades$ satisfying the axioms, Lemma \ref{lemma:fixedq-DR-Rplusconvex} implies it has a potential $\varphi'$ such that $\R^+_{\varphi'}(\q_0)$ is convex.
  Again, for convenience, write $\R = \R_{\varphi'}(\q_0)$ and $\R^+ = \R^+_{\varphi'}(\q_0)$.
  Define
    \[ \varphi(\q) = \sup \left\{ u \in \reals \mid \q - u\ones \in \R^+ \right\}.  \]
    This is a well-known construction in theory of risk measures \citep{follmer2008convex,follmer2015axiomatic}; we provide a proof that $\varphi$ is increasing and concave for completeness.
  First, $\varphi(\q)$ is well-defined, i.e. the supremum is over a nonempty set: for example, taking $u = \min_i (\q - \qZ)_i$, we find that $\q - u\ones - \qZ \succeq \0$, i.e. $\q - u\ones \succeq \qZ$, so $\q - u\ones \in \R^+$.

  Second, it implements $\valtrades$ as a CFMM.
  By definition of the set of legal reserves $\R$, we have $\r \in \valtrades(h) \iff \qh + \r \in \R$.
  So we must show $\varphi(\q) = \varphi(\q_0) \iff \q \in \R$.
  First, for any $\q \in \R$, including $\q=\q_0$, we have $\varphi(\q) = 0$.
  This follows because $\q \in \R \implies \varphi(\q) \geq 0$, but $\varphi(\q) > 0$ would violate \nodominatedtrades, as we would have $\q - \varphi(\q)\ones \in \R^+$, which implies some trade in $\R$ strictly dominates $\0$.
  Second, if $\varphi(\q) = 0$, then $\q \in \R^+$, and $\q$ cannot dominate any member of $\R$ (else $\varphi(\q) > 0$), so it must be in $\R$.
  We have shown $\q \in \R \iff \varphi(\q) = 0 = \varphi(\q_0)$.
  
  Now we show $\varphi$ is increasing.
  Let $\q' \succeq \q$ and $\r = \q' - \q$.
  Let $u \in \reals$ such that $\R^+$ contains $\q - u\ones$.
  Then $\R^+$ also contains, by upward closure, $\q - u\ones + \r = \q' - u\ones$.
  This proves $\varphi(\q') \geq \varphi(\q)$, as the supremum is taken over a superset.
  Furthermore, if $q' \neq q$, then we claim $\varphi(\q') > \varphi(\q)$, i.e. $\varphi$ is increasing.
  If $\varphi(\q') = \varphi(\q) = u$, then $\q - u\ones, \q' - u\ones \in \R$, but then the trade $\q-u\ones-\qZ$ is dominated by the trade $\q'-u\ones-\qZ$, both in $\valtrades((\qZ))$, contradicting \nodominatedtrades.

  Finally, for concavity, let $\lambda \in [0,1]$ and $\q^* = (1-\lambda)\q + \lambda \q'$.
  Let $\q-u\ones,\q'-u'\ones \in \R^+$.
  Then $\R^+$ also contains $(1-\lambda)(\q-u\ones) + \lambda(\q'-u'\ones) = \q^* - [(1-\lambda)u + \lambda u']\ones$, proving that $\varphi(\q^*) \geq (1-\lambda)\varphi(\q) + \lambda \varphi(\q')$.
\end{proof}

\begin{lemma} \label{lemma:fixedq-cfmm-DR}
  Fix the initial reserves $\q_0$.
  If a market maker $\valtrades$ with $\valhist(\epsilon) = \{(\q_0)\}$ is a CFMM with an increasing, concave potential function $\varphi$, then it satisfies \liquidation, \nodominatedtrades, \pathindependence, and \demandresponsiveness.
\end{lemma}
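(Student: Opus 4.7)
The plan is to verify each of the four axioms directly from the defining CFMM identity $\r\in\valtrades(h)\iff\varphi(\qh+\r)=\varphi(\qh)$, using only that $\varphi$ is strictly increasing and concave. The first two axioms come essentially for free. \pathindependence is immediate: if $\varphi(\qh+\r)=\varphi(\qh)$ and $\varphi(\qh+\r+\r')=\varphi(\qh+\r)$, then $\varphi(\qh+(\r+\r'))=\varphi(\qh)$. For \nodominatedtrades, if $\r,\r'\in\valtrades(h)$ with $\r'\succneqq\r$, strict monotonicity of $\varphi$ would give $\varphi(\qh+\r')>\varphi(\qh+\r)$, contradicting that both equal $\varphi(\qh)$.

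The core work lies in \liquidation and \demandresponsiveness. For \liquidation, given $h\in\valhist(\epsilon)$ and $\r,\r'\succneqq\0$, I would study the function $g\colon[0,\infty)\to\reals$ defined by $g(t)=\varphi(\qh+\r-t\r')$. As a restriction of $\varphi$ to an affine line, $g$ is concave and hence continuous, and it is strictly decreasing because $\r'\succneqq\0$ combined with strict monotonicity of $\varphi$ yields $g(t_1)>g(t_2)$ for $t_1<t_2$. Moreover $g(0)=\varphi(\qh+\r)\geq\varphi(\qh)$. The axiom then follows from the intermediate value theorem once I establish that $g(t)\to-\infty$ as $t\to\infty$.

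The main obstacle is exactly this last step: it is not obvious a priori that a strictly decreasing concave function on $[0,\infty)$ escapes to $-\infty$. I would argue as follows. By concavity, the right-derivative $g'_+$ is non-increasing, and since $g$ is decreasing we have $g'_+\leq 0$ pointwise. If $g'_+(t_1)=0$ for some $t_1>0$, then monotonicity of $g'_+$ forces $g'_+(t)\geq 0$ on $[0,t_1]$, hence $g'_+\equiv 0$ there, hence (by absolute continuity of the concave $g$ on compact subintervals) $g$ is constant on $[0,t_1]$, contradicting strict decrease. Therefore $g'_+(t_1)<0$ for every $t_1>0$, and the concave supergradient bound $g(t)\leq g(t_1)+g'_+(t_1)(t-t_1)$ drives $g(t)\to-\infty$.

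For \demandresponsiveness I would argue by contradiction. Suppose $\r,\r'\succeq\0$ and $\alpha,\beta>0$ with $\r-\r'\in\valtrades(h)$ and $\alpha\r-\beta\r'\in\valtrades(h\oplus(\r-\r'))$, and assume $\beta>\alpha$. Two applications of the CFMM identity give
\begin{equation*}
\varphi(\qh+\r-\r')=\varphi(\qh)=\varphi\bigl(\qh+(1+\alpha)\r-(1+\beta)\r'\bigr).
\end{equation*}
Setting $\lambda=1/(1+\alpha)\in(0,1)$, the convex combination
\begin{equation*}
\q^*:=\lambda\bigl(\qh+(1+\alpha)\r-(1+\beta)\r'\bigr)+(1-\lambda)\qh=\qh+\r-\tfrac{1+\beta}{1+\alpha}\r'
\end{equation*}
satisfies $\varphi(\q^*)\geq\varphi(\qh)$ by concavity. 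Since $(1+\beta)/(1+\alpha)>1$ and $\r'\succneqq\0$ (the degenerate case $\r'=\0$ together with $\r-\r'\in\valtrades(h)$ forces $\r=\0$ by strict monotonicity, trivializing the claim), we have $\qh+\r-\r'\succneqq\q^*$, so strict monotonicity gives $\varphi(\qh)=\varphi(\qh+\r-\r')>\varphi(\q^*)\geq\varphi(\qh)$, a contradiction.
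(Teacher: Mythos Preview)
Your argument is correct and, for \pathindependence, \nodominatedtrades, and \demandresponsiveness, it is essentially the paper's proof; in particular your convex-combination step with $\lambda=1/(1+\alpha)$ is exactly what the paper does. The one genuine difference is \liquidation. The paper takes a supergradient $\mathbf{d}$ of $\varphi$ at $\qh$, shows $\mathbf{d}\succ\0$ from strict monotonicity, and then any $\beta'>(\mathbf{d}\cdot\r)/(\mathbf{d}\cdot\r')$ gives $\varphi(\qh+\r-\beta'\r')<\varphi(\qh)$ directly from the supergradient inequality, after which continuity and the intermediate value theorem finish. Your route---restricting to the one-variable function $g(t)=\varphi(\qh+\r-t\r')$ and arguing from concavity that $g'_+(t_1)<0$ forces $g\to-\infty$---is a valid and more elementary alternative; it trades the one-line $n$-dimensional supergradient step for a short real-analysis detour. (Incidentally, your detour can be shortened: once $g'_+(t_1)=0$, the concave bound $g(t)\leq g(t_1)+g'_+(t_1)(t-t_1)=g(t_1)$ for $t<t_1$ already contradicts strict decrease, so the absolute-continuity argument is unnecessary.) One small caveat on \demandresponsiveness: your parenthetical that $\r'=\0$ ``trivializes the claim'' is not quite right---when $\r=\r'=\0$ the hypothesis $\alpha\r-\beta\r'=\0\in\valtrades(h)$ holds for \emph{all} $\alpha,\beta>0$, so the axiom as literally written fails there; the paper's proof silently shares this edge case, which is really an imprecision in the axiom statement rather than a flaw in either argument.
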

\begin{proof}
  Let $h\in\valhist(\epsilon)$ and, for short, let $\q = \qh$.

  (\demandresponsiveness)
  Assume $\r = \r_+-\r_-\in \valtrades(h)$, where $\r_+,\r_- \succeq \0$, and assume $\alpha \r_+-\beta \r_- \in \valtrades(h \oplus \r)$ where $\alpha,\beta>0$.
  We must prove $\alpha \leq \beta$.
  As these are feasible trades, we have that the value of the CFMM is same at those points i.e. $\varphi(\q)=\varphi(\q+\r)=\varphi(\q+(1+\alpha)\r_+-(1+\beta)\r_-)$.
  Let $\lambda = \frac{1}{1+\alpha}$.
  By concavity of $\varphi$ we have that 
  \begin{align*}
    \varphi\Bigl(\q+\r_+-\frac{1+\beta}{1+\alpha}\r_-\Bigr)
    &= \varphi\Bigl((1-\lambda)\q+\lambda(\q+(1+\alpha)\r_+-(1+\beta)\r_-)\Bigr)
    \\
    &\geq (1 - \lambda) \varphi(\q) + \lambda \varphi(\q + (1+\alpha)\r_+ - (1+\beta)\r_-)
    \\    
    &= \lambda\varphi(\q+\r_+-\r_-) +(1-\lambda)\varphi(\q+\r_+-\r_-)\\
    &= \varphi(\q+\r_+-\r_-) .
  \end{align*}
  As $\varphi$ is an increasing function, this implies we cannot have $\q+\r_+-\frac{1+\beta}{1+\alpha}\r_- \precneqq \q+\r_+-\r_-$.
  Thus we cannot have $\frac{1+\beta}{1+\alpha}\r_- \succneqq \r_-$.
  We conclude $\frac{1+\beta}{1+\alpha}\leq 1$, and thus $\alpha \geq \beta$, which proves \demandresponsiveness.

  \pathindependence is immediately satisfied by a CFMM, as $\r \in \valtrades(h)$ and $\r' \in \valtrades(h \oplus \r)$ imply $\varphi(\q) = \varphi(\q+\r) = \varphi(\q+\r+\r')$, which implies $\r+\r' \in \valtrades(h)$.
  
  Assume \nodominatedtrades is not satisfied i.e. $\r,\r' \in \valtrades(h)$ with $\r' \succneqq \r$.
  This implies that $\varphi(\q)=\varphi(\q + \r)=\varphi(\q +\r')$, violating that $\varphi$ is increasing.

  For \liquidation, let $\r,\r'\succneqq \0$.
  Concavity of $\varphi$ implies that at current reserves $\q$, there exists a supergradient%
  \footnote{A supergradient is defined as any $\mathbf{d}$ that satisfies the inequality $\mathbf{d}\cdot(\q'-\q)\geq \varphi(\q')-\varphi(\q), \forall \q'$.}
  of $\varphi$, say $\mathbf{d}$.
  We show that as $\varphi$ is increasing, $\mathbf{d} \succ \0$.
  This should hold for $\q'=\q+\bm{\delta}_i , \forall i$.
  Since $\varphi$ is increasing, $\varphi(\q') >\varphi(\q)$.
  Therefore, $\mathbf{d}\cdot \bm{\delta}_i = d_i > 0$ and $\mathbf{d}\succ \0$.
  
  For any $\beta' > \frac{\mathbf{d}\cdot \r}{\mathbf{d}\cdot \r'}$, $\mathbf{d}\cdot(\r-\beta' \r') < 0$.
  Hence from the supergradient inequality, $\varphi(\q)+\mathbf{d}\cdot(\r-\beta' \r') \geq \varphi(\q + \r -\beta' \r')$, from which we have $\varphi(\q) > \varphi(\q + \r -\beta' \r')$.
  As $\varphi$ is increasing, we also have that $\varphi(\q) \leq \varphi(\q+\r-0\cdot \r')$.
  Hence by continuity of $\varphi$, which follows from concavity, we can say that $\exists \beta \in [0,\beta']$ s.t. $\varphi(\q)=\varphi(\q + \r -\beta \r')$ which proves \liquidation.

\end{proof}

Theorem \ref{thm:itsacfmm} will be proven by the following two lemmas.
We show Lemma \ref{lemma:Rplusconvex-cfmm} on a slightly more general space called \emph{dominant domain}, which we define below.
The two primary examples of dominant domains are $\reals^n$ itself and the strictly positive orthant $\reals^n_{>0}$.

\begin{definition} \label{def:dominant-domain}
  We call a subset $S$ of $\reals^n$ a \emph{dominant domain} if it is an open, convex set that is closed under dominance, i.e. $\r \in S, \r' \succeq \r \implies \r' \in S$.
\end{definition}

\begin{lemma} \label{lemma:Rplusconvex-cfmm}
  If a market maker, where $\valtrades(\epsilon)$ is a dominant domain $S$, satisfies \liquidation, \nodominatedtrades, \strongpathindependence, and \demandresponsiveness,
then it is implemented by a CFMM for some increasing, continuous, quasiconcave $\varphi:S\to\reals$.
\end{lemma}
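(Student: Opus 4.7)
The plan is to use \strongpathindependence to recast the market as a function of current reserves alone, apply the fixed-starting-point analysis of Lemma~\ref{lemma:fixedq-DR-Rplusconvex} to each equivalence class of reserves, and then parameterize the resulting classes by a single scalar to obtain a global potential $\varphi$.

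First I would define $V(\q) := \valtrades(h)$ for any $h$ with $\hsum(h) = \q$ (well-defined by \strongpathindependence) and the reachable set $\R(\q) := \{\q + \r : \r \in V(\q)\}$. Declare $\q \sim \q'$ when $\q' \in \R(\q)$. Reflexivity follows from Observation~\ref{obs:0-valtrades}. Symmetry and transitivity follow by combining \strongpathindependence (which identifies the valid trades at $\q$ with those at any history summing to $\q$), \pathindependence (to combine sequential trades), and \nodominatedtrades together with \liquidation applied from $\q'$ back to $\q$ (which forces the \liquidation coefficient $\beta$ to equal $1$, so $\q-\q' \in V(\q')$). For each class $K$, applying Lemma~\ref{lemma:fixedq-DR-Rplusconvex} at any $\q_0\in K$ then yields that the upset $K^+ := K + \reals^n_{\geq 0}$ is convex.

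The central technical step is to show that the family $\{K^+\}$ is totally ordered by inclusion: if any element of a class $K$ lies in $K'^+$ for a distinct class $K'$, then $K \subseteq K'^+$. I expect this to be the main obstacle. The strategy is to use \nodominatedtrades (which makes members of a single class pairwise incomparable in $\succeq$) alongside \liquidation (producing comparable witnesses along the $\ones$-direction within each class), ruling out configurations in which two convex upsets partially overlap.

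Once nesting holds, I parameterize the classes. For $S = \reals^n$, applying \liquidation at $\q_0$ with $\r = M\ones - \q_0 \succ \0$ (for large $M$) and $\r' = \ones$ gives some $\beta \geq 0$ with $(M-\beta)\ones \in \R(\q_0)$, so the ray $\{c\ones\}$ meets every class; uniqueness in each class follows from \nodominatedtrades. Define $\varphi(\q) := c$ where $c\ones \in \R(\q)$. For general dominant $S$, one adapts the reference curve to lie in $S$ while remaining strictly $\succ$-increasing and meeting every class (e.g.\ $\{c\ones : c>0\}$ when $S = \reals^n_{>0}$). Then $\varphi$ implements the CFMM by construction, is increasing by \nodominatedtrades together with nesting, and is quasiconcave because $\{\q : \varphi(\q) \geq c\}$ equals the convex upset of the class at parameter $c$. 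Continuity of $\varphi$ follows from the continuous variation of the nested convex upsets $\{K^+\}$, e.g.\ via a gauge-functional (Minkowski) representation of $\varphi$ with respect to this family. Concavity is not claimed, consistent with the conjecture following Theorem~\ref{thm:itsacfmm}.
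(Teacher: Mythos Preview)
Your architecture matches the paper's: reduce to current reserves via \strongpathindependence, establish an equivalence relation whose classes are the level sets, show each upset $K^+$ is convex (the paper invokes Lemma~\ref{lemma:axioms-superlevel-convex}), and parameterize the classes by their intersection with the $\ones$-ray. The definition $\varphi(\q) = c$ with $c\ones \in \R(\q)$ is exactly what the paper uses.

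The gap is precisely where you flag it: the total ordering of the upsets $\{K^+\}$, equivalently that $\varphi$ is increasing. Your sketch (``\nodominatedtrades makes elements of a class pairwise $\succeq$-incomparable; \liquidation produces $\ones$-direction witnesses; rule out partial overlap of convex upsets'') does not pin down an argument, and I do not see how to make it go through without an additional idea. The paper's device is to introduce, for each class, an auxiliary one-parameter map $C_{\q}(\r)$ defined as the unique $\alpha$ with $\r-\alpha\ones \in \R(\q)$, prove it is continuous (from \nodominatedtrades), and then, assuming $\q'\succneqq\q$ with $\varphi(\q)>\varphi(\q')$, show that $f(\r):=C_{\q'}(\r)-C_{\q}(\r)$ takes both signs and hence has a zero by the intermediate value theorem. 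A zero forces $\R(\q)=\R(\q')$, contradicting distinctness. Nothing in your outline substitutes for this IVT step, and purely order-theoretic reasoning from the axioms does not obviously exclude two convex upsets that interlock without one containing the other.

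Your continuity argument is also too loose. A ``gauge-functional (Minkowski) representation'' applies to scalings of a single convex body, not to an arbitrary nested family of convex upsets; continuous variation of such a family is exactly what must be proved and is not automatic. The paper instead uses a result of Crouzeix: for an increasing quasiconcave function on an open set, continuity everywhere is equivalent to continuity of $\gamma\mapsto\varphi(\q+\gamma\ones)$ at each point. That one-dimensional continuity is then obtained directly from \liquidation (producing, for any $\epsilon$, reserves on the $\ones$-line with $\varphi$-values $\varphi(\q)\pm\epsilon$). If you want to avoid the citation, you still need a concrete mechanism tying \liquidation to continuity along $\ones$; the Minkowski-gauge language does not supply one.
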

\begin{proof}
  Let a market $\valtrades$ satisfying the axioms be given.
  For any $h \in \valhist(\epsilon)$, define
    \[ \varphi(\qh) = \alpha \in \reals \text{ such that } \alpha\ones - \qh \in \valtrades(h)~.  \]
  If there is no $h$ such that $\q = \qh$, then leave $\varphi(\q)$ undefined.

  We first show that $\varphi$ is a well-defined function on $S$.
  Let $h \in \valhist(\epsilon)$ and $\q = \qh$.
  Observe $\q \in S$ by \strongpathindependence.
  Let $c = \max_i \q_i + 1$ and let $\r = c \ones - \q$, noting that $\r \succ \0$.
  By \liquidation applied to $\r$ and $\r' = \ones$, there exists $\beta \geq 0$ such that $\valtrades(h)$ contains $\r - \beta \ones = (c - \beta)\ones - \q$.
  We take $\alpha = c - \beta$.
  Furthermore, $\alpha$ is unique, as $\beta$ is unique by \nodominatedtrades.
  
  We will show that $\varphi:S\to\reals$ implements $\valtrades$ as a CFMM.
  Then, we will show that it is increasing, continuous, and quasiconcave.

  To begin, we first show that $\valtrades$ has ``level set structure''.
  Specifically, we will show that $\q \equiv \q' \iff \q'-\q \in \valtrades((\q))$ is a well-defined equivalence relation.
  Combining this statement with the fact that $\varphi$ is well-defined, we will have $\q \equiv \varphi(\q)\ones$ for all $\q \in S$.

  As \strongpathindependence implies \pathindependence, results only assuming the latter apply.
  For all $\q \in S$, let $\R(\q)$ be the set from Lemma~\ref{lemma:fixedq-axioms-cfmm}, so that $\r \in \valtrades(h) \;\&\; \q_0=\q \iff \q_h + \r \in \R(\q)$.
  Substituting $h = (\q), \r = \q' - \q$ into the above, the following claim shows that $\equiv$ as defined above is indeed an equivalence relation.

  \begin{claim}\label{claim:reserves-closed}
    For all $\q,\q'\in S$, $\q \in \R(\q') \iff \R(\q) = \R(\q') \iff \q' \in \R(\q)$.
  \end{claim}
  \begin{proof}
    We first show $\q \in \R(\q') \implies \R(\q) = \R(\q')$.
    Let $\q,\q'$ such that $\q \in \R(\q')$.
    Let $\q''\in\valtrades(\epsilon)$.
    Let $h=(\q), h'=(\q',\q-\q'), \r = \q''-\q$.
    Note $h' \in \valhist(\epsilon)$ since $\q\in\R(\q')$ by assumption and thus $\q-\q' \in \valtrades((\q'))$.
    Observe $\q_h = \q_{h'}$ so \strongpathindependence applies.
    From Lemma~\ref{lemma:fixedq-axioms-cfmm} and \strongpathindependence, we have $\q'' \in \R(\q) \iff \q_h + \r \in \R(\q) \iff \r \in \valtrades(h) \iff \r \in \valtrades(h') \iff \q_{h'} + \r \in \R(\q') \iff \q'' \in \R(\q')$.
    For the reverse implication, if $\R(\q) = \R(\q')$, we have $\0 \in \valtrades((\q')) \implies \q' \in \R(\q') = \R(\q)$, completing the proof.
  \end{proof}

  We can now see that $\varphi$ implements $\valtrades$.
  Let $h\in\valhist(\epsilon)$.
  Then from \strongpathindependence, $\r \in \valtrades(h) \iff \r \in \valtrades((\q_h)) \iff \q_h \equiv \q_h + \r \iff \varphi(\q_h)\ones \equiv \varphi(\q_h+\r)\ones \iff \varphi(\q_h) = \varphi(\q_h+\r)$. It remains to show that $\varphi$ is continuous, increasing, and quasiconcave.

  Before we show properties of $\varphi$, we introduce another construction of a ``cost function''.
  Let $C_{\q}(\r) = \alpha$ such that $\r-\alpha\ones\in\R(\q)$.
  Observe that this construction is ones-invariant i.e. $C_{\q}(\r + a\ones) = C_{\q}(\r) + a$ for any constant $a$.
  We first show that this $C_{\q}(\r)$ is continuous for any $\r$.

  Suppose that $C_{\q}$ is not continuous at a particular $\r$.
  Thus $\exists \epsilon >0$ such that $\forall \delta >0$, $\exists \r'$ such that $||\r-\r'||_{\infty} < \delta$ and $||C_{\q}(\r)-C_{\q}(\r')||_{\infty} \geq \epsilon$.
  Given an $\epsilon > 0 $, take $\delta = \epsilon/2$, and let $\r'$ be given satisfying the above.
  Let $\alpha = C_{\q}(\r)$, $\beta = C_{\q}(\r')$ so that $\r-\alpha\ones\in \R(\q)$ and $\r'-\beta\ones\in \R(\q)$.
  As $||\r-\r'||_{\infty} < \delta$, we have $ -\delta\ones \prec \r-\r' \prec \delta\ones$.
  Assume without loss of generality that $\beta > \alpha$; then $\beta-\alpha \geq \epsilon$.
  Now $(\r-\alpha\ones)-(\r'-\beta\ones) = (\r-\r') - (\beta-\alpha)\ones \succ -\delta\ones + \epsilon\ones \succ \0$.
  We have now contradicted \nodominatedtrades as $\r-\alpha\ones \succ \r'-\beta\ones$ but $\r-\alpha\ones,\r'-\beta\ones \in \R(\q) $.

  (Increasing)
  Suppose for a contradiction that we have two vectors $\q,\q'$ such that $\q'\succneqq \q$ and $\varphi(\q)\geq\varphi(\q')$.
  If $\varphi(\q) = \varphi(\q')$ then $\q' - \q \in \valtrades((\q))$, violating \nodominatedtrades as $\q' - \q \succneqq \0$.
  Otherwise, $\varphi(\q) > \varphi(\q')$.
  As we saw from above that $C_{\q},C_{\q'}$ are continuous functions, so is the function $f(\r) = C_{\q'}(\r)-C_{\q}(\r)$.
  Observe that $C_{\q'}(\varphi(\q)\ones) = \varphi(\q)-\varphi(\q')$ as $\varphi(\q')\ones\in \R(\q')$.
  Thus, $f(\varphi(\q)\ones) = \varphi(\q)-\varphi(\q') - 0 > 0$.
  On the other hand, $f(\q)= C_{\q'}(\q) - 0$.
  If $\alpha := C_{\q'}(\q) \geq 0$, then $\q' \succneqq \q-\alpha\ones$ but $\q-\alpha\ones\in\R(\q')$ by definition of $C_{\q'}$, violating \nodominatedtrades.
  Thus, $f(\q) = C_{\q'}(\q) < 0$.
  As $f$ is continuous, by the intermediate value theorem we have some $\r\in S$ such that $f(\r) = 0$, implying $C_{\q'}(\r)=C_{\q}(\r) =: \gamma$.
  From the definition of $C_\q$ and $C_{\q'}$, we have $\r-\gamma\ones\in \R(\q)$ and $\r-\gamma\ones\in \R(\q')$.
  From Claim \ref{claim:reserves-closed}, $\q,\q'\in \R(\r-\gamma\ones)$ which is contradiction of \nodominatedtrades.
  
  (Quasiconcave)
  We claim that the superlevel sets of $\varphi$ are exactly the sets of the form $\{\q + \r \mid \varphi(\q) = c, \r \succeq \0\}$ for some $c$.
  Then Lemma \ref{lemma:axioms-superlevel-convex} implies that each such set is convex, which proves quasiconcavity.
  To show these are the superlevel sets: for one direction, because $\varphi$ is increasing, we have $\varphi(\q) = c, \r \succeq \0 \implies \varphi(\q + \r) \geq c$.
  For the other direction, suppose $\varphi(\q') \geq \varphi(\q)$.
  We show $\q' = \q'' + \r$ for some $\varphi(\q'') = c, \r \succeq \0$.
  Specifically, use \liquidation starting at reserves $\qh := \q$ to trade $\q' - \q + \gamma \ones$ for units of $\ones$, where $\gamma$ is chosen large enough that the trade vector is positive.
  Receiving $\beta \ones$ for some $\beta > 0$, the new reserves are $\q'' = \q' + (\gamma - \beta)\ones$.
  By definition of a CFMM, we have $\varphi(\q'') = \varphi(\qh) = c$.
  And because $\varphi(\q') \geq c$, and $\varphi$ is increasing, we must have $\q'' \preceq \q'$.

  (Continuous)
  Let $\q \in S$.
  We will show that $\varphi$ is continuous at $\q$.
  By \citet[Proposition 4.1]{crouzeix2005continuity}, as $\varphi$ is increasing with respect to $\reals^n_{>0}$, and $\ones$ is in the interior of $\reals^n_{>0}$, it suffices to show continuity of the function $f:\reals\to\reals, \gamma \mapsto \varphi(\q+\gamma\ones)$ at $0$.
  Observe that $f$ is increasing.
  We use the $\epsilon,\delta$ formulation of continuity.

  Let $\alpha = \varphi(\q)$ and let $\epsilon>0$ such that $(\alpha\pm\epsilon)\ones \in S$, which exists as $S$ is open.
  Let $\q_0 = (\alpha+\epsilon)\ones \in S$; by definition, $\varphi(\q_0) = \alpha+\epsilon$.
  Let $c > \min_i q_i$ so that $\r := \q + c\ones \succ \0$.
  Applying \liquidation with $h = (\q_0)$ and bundles $\r$ and $\r'=\ones$, we have some $\beta \geq 0$ such that $(\q+c\ones) - \beta \ones \in \valtrades((\q_0))$.
  This condition is in turn equivalent to $\varphi(\q + c\ones - \beta\ones) = \varphi(\q_0) = \alpha + \epsilon$.
  By the same argument for $\q_0 = (\alpha - \epsilon)\ones$, we have some $\beta'\geq 0$ such that $\varphi(\q + c\ones - \beta'\ones) = \alpha - \epsilon$.

  We have $f(c-\beta') = \alpha - \epsilon$, $f(0) = \alpha$, and $f(c-\beta) = \alpha + \epsilon$.
  As $f$ is increasing, we must have $c-\beta' < 0 < c-\beta$.
  Let $\delta = \min(|c-\beta'|,|c-\beta|) > 0$.
  Then $c-\beta' \leq -\delta < 0 < \delta \leq c-\beta$.
  In particular, again as $f$ is increasing, for all $\gamma \in (-\delta,\delta)$, we have $|f(\gamma) - f(0)| < \max\{f(0) - f(-\delta), f(\delta) - f(0)\} \leq  \max\{f(0) - f(c-\beta'), f(c-\beta) - f(0)\} = \epsilon$.
  Thus, $|f(0) - f(\gamma)| < \epsilon$ whenever $|0 - \gamma| < \delta$, giving continuity of $f$ at $0$ and completing the proof.  
\end{proof}

\begin{lemma} \label{lem:cfmm-axioms}
  Let a CFMM be given for a quasiconcave, increasing, continuous $\varphi:\reals^n\to\reals$, with $\valtrades(\epsilon) = \reals^n$.
  Then it satisfies \demandresponsiveness, \liquidation, \nodominatedtrades and \pathindependence.
\end{lemma}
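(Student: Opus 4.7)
The plan is to verify each of the four axioms in turn, mostly mimicking the concave case of Lemma~\ref{lemma:fixedq-cfmm-DR} but replacing arguments based on supergradients (which need not exist for quasiconcave $\varphi$) with arguments based directly on convex superlevel sets. Throughout, write $c = \varphi(\qh)$ and recall that a trade $\r$ is valid at $h$ iff $\varphi(\qh + \r) = c$. Two of the axioms are immediate: \pathindependence follows because $\varphi(\qh) = \varphi(\qh + \r) = \varphi(\qh + \r + \r')$ whenever the hypothesis holds; and \nodominatedtrades follows because $\r' \succneqq \r$ both in $\valtrades(h)$ would force $\varphi(\qh + \r') = \varphi(\qh + \r)$, contradicting the strict monotonicity of $\varphi$.

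For \demandresponsiveness, I would adapt the convex-combination identity used in Lemma~\ref{lemma:fixedq-cfmm-DR}. Suppose $\r_+ - \r_- \in \valtrades(h)$ and $\alpha \r_+ - \beta \r_- \in \valtrades(h \oplus (\r_+ - \r_-))$ with $\alpha, \beta > 0$ and $\r_\pm \succeq \0$, and reduce to the nondegenerate case $\r_- \succneqq \0$ (if $\r_- = \0$, strict increase plus $\alpha > 0$ makes the premise vacuous). Both $\qh$ and $\qh + (1+\alpha)\r_+ - (1+\beta)\r_-$ satisfy $\varphi = c$, so the $\lambda = 1/(1+\alpha)$ convex combination $\q^* := \qh + \r_+ - \tfrac{1+\beta}{1+\alpha}\r_-$ lies in the superlevel set $L_c := \{\q : \varphi(\q) \geq c\}$ by quasiconcavity. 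If $\beta > \alpha$ then $\tfrac{1+\beta}{1+\alpha} > 1$ and $\q^* \precneqq \qh + \r_+ - \r_-$, so strict increase gives $\varphi(\q^*) < c$, contradicting $\q^* \in L_c$.

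For \liquidation, fix $h$ and $\r, \r' \succneqq \0$, and set $g(\beta) := \varphi(\qh + \r - \beta \r')$. Then $g$ is continuous, $g(0) > c$ by strict increase, so to apply the intermediate value theorem it suffices to exhibit some $\beta^* \geq 0$ with $g(\beta^*) \leq c$. I would argue by contradiction: if $g(\beta) > c$ for all $\beta \geq 0$, then the entire ray $\{\qh + \r - \beta \r' : \beta \geq 0\}$ lies in $L_c$, which is closed by continuity of $\varphi$ and convex by quasiconcavity. A standard recession-cone fact from convex analysis then implies that $-\r'$ is a recession direction of $L_c$, and hence $\qh - t\r' \in L_c$ for all $t \geq 0$. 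But $\qh - t\r' \precneqq \qh$ for $t > 0$, so strict increase yields $\varphi(\qh - t\r') < c = \varphi(\qh)$, contradicting $\qh - t\r' \in L_c$.

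I expect \liquidation to be the main obstacle: unlike in the concave case, there is no supergradient bounding $\varphi$ above by an affine functional that we can drive downward along $-\r'$. The crucial observation is that we need not show $g(\beta) \to -\infty$; we only need $g$ to dip to $c$, and the combination of global convexity of the superlevel set $L_c$ with strict monotonicity of $\varphi$ rules out the only alternative scenario, namely $g$ staying strictly above $c$ forever.
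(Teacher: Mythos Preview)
Your proof is correct. For \pathindependence, \nodominatedtrades, and \demandresponsiveness you do essentially what the paper does (the paper writes out the quasiconcavity inequality explicitly rather than invoking the superlevel set $L_c$, but the convex combination with $\lambda = 1/(1+\alpha)$ and the conclusion are identical).

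Your treatment of \liquidation is genuinely different. The paper works at the single point $\qh$: it shows $\qh \in \partial U_{\qh}$ for $U_{\qh} = \{\q : \varphi(\q) \geq \varphi(\qh)\}$, takes a supporting hyperplane with normal $\mathbf d$, argues $\mathbf d \succ \0$ from strict monotonicity, and then obtains an explicit cutoff $\beta' > (\mathbf d\cdot\r)/(\mathbf d\cdot\r')$ beyond which $\qh+\r-\beta'\r'$ lies on the wrong side of the hyperplane, hence outside $U_{\qh}$. You instead argue globally: if $g$ never dips to $c$ then the entire ray $\qh+\r-\beta\r'$ stays in the closed convex $L_c$, so by the standard recession-cone fact (e.g.\ Rockafellar, Theorem~8.3) $-\r'$ is a recession direction of $L_c$, whence $\qh - t\r' \in L_c$ for all $t>0$, contradicting strict monotonicity. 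Your route avoids verifying that $\qh$ is a boundary point and that the supporting normal is coordinatewise positive; in exchange it imports a (standard) recession-cone lemma. The paper's route is more self-contained and yields an explicit $\beta'$; yours is shorter once the recession-cone fact is granted. Both are valid.
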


\begin{proof}
Let $h\in\valhist(\epsilon)$ and, for short, let $\q = \qh$.

  (\demandresponsiveness)
  Assume $\r = \r_+-\r_-\in \valtrades(h)$, where $\r_+,\r_- \succeq \0$, and assume $\alpha \r_+-\beta \r_- \in \valtrades(h \oplus \r)$ where $\alpha,\beta>0$.
  We must prove $\alpha \leq \beta$.
  As these are feasible trades, we have that the value of the CFMM is same at those points i.e. $\varphi(\q)=\varphi(\q+\r)=\varphi(\q+(1+\alpha)\r_+-(1+\beta)\r_-)$.
  Let $\lambda = \frac{1}{1+\alpha}$.
  By quasiconcavity of $\varphi$ we have that 
  \begin{align*}
    \varphi\Bigl(\q+\r_+-\frac{1+\beta}{1+\alpha}\r_-\Bigr)
    &= \varphi\Bigl((1-\lambda)\q+\lambda(\q+(1+\alpha)\r_+-(1+\beta)\r_-)\Bigr)
    \\
    &\geq \min\{ \varphi(\q) ,  \varphi(\q + (1+\alpha)\r_+ - (1+\beta)\r_-)\}
    \\
    &= \min\{\varphi(\q+\r_+-\r_-) , \varphi(\q+\r_+-\r_-)\}\\
    &= \varphi(\q+\r_+-\r_-) .
  \end{align*}
  As $\varphi$ is an increasing function, this implies we cannot have $\q+\r_+-\frac{1+\beta}{1+\alpha}\r_- \precneqq \q+\r_+-\r_-$.
  Thus we cannot have $\frac{1+\beta}{1+\alpha}\r_- \succneqq \r_-$.
  We conclude $\frac{1+\beta}{1+\alpha}\leq 1$, and thus $\alpha \geq \beta$, which proves \demandresponsiveness.

  (\strongpathindependence) 
  $\forall h,h' \in \valhist(\epsilon)$, if  $\q = \hsum(h) = \hsum(h')$ then $\valtrades(h) = \{\r \mid \varphi(\q)=\varphi(\q +\r)\}$ which is also $\valtrades(h')$.
  We also trivially have $\q_h \in \reals^n = \valtrades(\epsilon)$ for all $h\in\valhist(\epsilon)$. 

  (\nodominatedtrades) Assume that the axiom is not satisfied, i.e., we have $\r,\r' \in \valtrades(h)$ with $\r' \succneqq \r$.
  This implies that $\varphi(\q)=\varphi(\q + \r)=\varphi(\q +\r')$, violating that $\varphi$ is increasing.

  (\liquidation)
  Quasi-concavity of $\varphi$ implies that at current reserves $\q$, there exists a supporting hyperplane%
  \footnote{A supporting hyperplane of a set $A$ would satisfy the inequality $ \mathbf{x}\cdot \mathbf{d} \geq  \q\cdot \mathbf{d}$, for all $\mathbf{x} \in A$ }
  $H = \{ \mathbf{x} \mid \mathbf{x}\cdot \mathbf{d} =  \q \cdot \mathbf{d}\}$ of super-level set%
  \footnote{ A super-level set of $\varphi$ for level $k$ is defined as $ U_{k} = \{\r \mid \varphi(\r) \geq \varphi(k)\}$}
  $U_{\q}$ of $\varphi$.
  The existence of a supporting hyperplane at $\q$ stems from $\q \in \partial U_{\q}$ as we will show now.
  We take a slight detour to show that $\partial U_{\q} = \{\r \mid \varphi(\r)=\varphi(\q)\}$.
  Lets assume the contrary that there exists a $\r \in \partial U_{\q}$ such that $\varphi(\r) > \varphi(\q)$.
  Let $\r' = \r - \epsilon\cdot \mathbf{d}$ for an $\epsilon > 0$ such that $\varphi(\q) < \varphi(\r') < \varphi(\r)$, which is possible as $\varphi$ is continuous. 
  As $\mathbf{d}\cdot \r' = \mathbf{d}\cdot \r -\epsilon \lVert \mathbf{d} \rVert^2 < \mathbf{d}\cdot\r$, this gives us $\r'\notin U_{\q}$ which is a contradiction as $\varphi(\r') > \varphi(\q)$ and $\r'$ is in the super-level set.
  
  The supporting hyperplane inequality should hold for $\q'=\q+\bm{\delta}_i$ for some $i$.
  Since $\varphi$ is increasing and defined on all of $\reals^n$, we have $\varphi(\q') >\varphi(\q)$ and hence $\q' \in U_{\q}$ and $\q' \notin \partial U_{\q}$.
  Therefore, $ \mathbf{d}\cdot \q' > \mathbf{d}\cdot \q \implies \mathbf{d}\cdot \bm{\delta}_i = d_i > 0$ and as it folds for any $i$, $\mathbf{d}\succ \0$.

  Let $\r,\r'\succneqq \0$ and let $\beta' > \frac{\mathbf{d}\cdot \r}{\mathbf{d}\cdot \r'}$ be a constant.
  Then $\mathbf{d}\cdot(\r-\beta' \r') < 0$ and $\mathbf{d}\cdot(\q+ \r-\beta' \r') < \mathbf{d}\cdot \q$.
  Hence from the supporting hyperplane inequality, $\q+ \r-\beta' \r' \notin U_{\q}$, from which we have $\varphi(\q) > \varphi(\q + \r -\beta' \r')$, again using the fact that $\varphi$ is defined at the latter point.
  As $\varphi$ is increasing, we also have that $\varphi(\q) \leq \varphi(\q+\r-0\cdot \r')$.
  Hence by continuity of $\varphi$, we can say that $\exists \beta \in [0,\beta']$ s.t. $\varphi(\q)=\varphi(\q + \r -\beta \r')$ which proves \liquidation.
\end{proof}

\section{Information Elicitation}
\label{app:elicitation}

\subsection{Sketch for Fact \ref{fact:elicitation-cost-function}}
\label{app:elicit-fact}

We recall that a \emph{scoring rule} is a function $S: \Delta_{\Y} \times \Y \to \reals \cup \{-\infty\}$, assigning a score $S(\p,y_i)$ to each prediction $\p$ when the obeserved outcome is $y_i$.
The scoring rule is \emph{proper} if the expected score, with respect to any given belief $\p^*$, is maximized by reporting $\p=\p^*$.
It is \emph{strictly proper} if $\p=\p^*$ uniquely maximizes the expected score.
Examples of strictly proper scoring rules are the log score $S(\p,y_i) = \log \p_i$ and the Brier or quadratic score $S(\p,y_i) = 2\p_i - \sum_j \p_j^2$.

We now sketch the proof of Fact \ref{fact:elicitation-cost-function} for completeness.

\begin{proof}[Proof sketch for Fact \ref{fact:elicitation-cost-function}]
  The first key step is Theorem 3.1 of \cite{frongillo2018axiomatic}, which states that any market maker satisfying Path Independence and Incentive Compatibility must in fact be representable as a ``scoring rule market''.
  A scoring rule market, in our terminology, is an automated market maker including an asset of cash where $\valtrades(h)$ must map one-to-one to predictions $\p_t$, and the net payoff for moving the market prediction from $\p_{t-1}$ to $\p_t$ must be given by the formula of \cite{hanson2003combinatorial}:
  \[ \text{net payoff} = S(\p_t,y_i) - S(\p_{t-1},y_i), \]
  where $S$ is a strictly proper scoring rule.
  The intuition is that Incentive Compatibility is characterized by use of a strictly proper scoring rule in each round, while Path Independence imposes the requirement of ``telescoping sums'' for sequences of predictions, giving the above formula.

  The second key step is given by classic results of \cite{chen2007utility,abernethy2013efficient} stating that scoring-rule markets are equivalent to cost-function markets.
  In particular, a strictly proper scoring rule $S$ corresponds via convex duality to a convex cost function $C$ such that
  \[ S(\p_t,y_i) - S(\p_{t-1},y_i) = (\q_t)_i - (\q_{t-1})_i - \left[ C(\q_t) - C(\q_{t-1}) \right] . \]
  (Also, $C$ is differentiable and has a certain set of gradients, namely the probability simplex.)
  Here convex duality gives a one-to-one correspondence between market states $\q_t$ and predictions $\p_t$, such that the trade $\r$ with $\q_t = \q_{t-1} + \r$ is equivalent to the prediction update $\p_{t-1} \to \p_t$.
  In other words, the cost-function interface to the market is equivalent to the scoring-rule interface.\footnote{There is a difference in timing, because as usually implemented, a cost function collects the cash payment at the time of trade while the securities pay off at the end; the scoring rule assesses all payments at the end. For Fact \ref{fact:elicitation-cost-function}, we must either wait to collect the payment until the end, or assume that one unit of cash has constant utility over time (the usual assumption).}

  To complete the sketch, we observe the properties of $C$.
  Convexity follows from the convex duality of scoring rules discussed above.
  Increasing and ones-invariance follow from Incentive Compatibility, as follows.
  Because each trade should correspond one-to-one to a feasible prediction, Incentive Compatibility implies there must be no available trades that make a guaranteed profit regardless of the outcome (what the prediction market literature calls an arbitrage opportunity).
  Otherwise, traders with many different beliefs would all choose such a trade.
  In particular, nonnegative trade bundles must cost a positive amount of money, which gives that $C$ is increasing.
  And $\alpha$ units of the grand bundle $(1,\dots,1)$ must cost exactly $\alpha$, since otherwise buying or short-selling it would guarantee profit; this yields ones-invariance.
\end{proof}

\subsection{CFMMs elicit ratios of valuations}
\label{app:cfmms-elicit-ratios}

A consequence of the equivalence is that CFMMs can be viewed as prediction markets.
But what exactly do they ``predict''?
The answer is: \textbf{CFMMs elicit ratios of valuations.}

To explain, let us return to the ``cashless prediction market'', in which there are $n$ assets.
Instead of payments in cash, each asset had a ``price'' in units of the grand bundle $(1,\dots,1)$.
Now generalize: consider a market with $n$ arbitrary assets $A_1,\dots,A_n$ with some nonnegative value.
One can similarly use a cost function $C$ to define a cashless automated market maker with all prices given in units of the grand bundle $(1,\dots,1)$.

When $A_1,\dots,A_n$ are contingent securities (e.g. $A_i$ pays off equal to the number of points scored by team $i$ during the season), this is known as a \emph{ratio-of-expectations} market~\citep{frongillo2018axiomatic}.
The market prices reflect the \emph{ratio} between how the value of $A_1$ and of the grand bundle $A_1 + \cdots + A_n$, because traders pay in units of the latter to obtain units of the former.
However, the market valuation of the grand bundle itself is never revealed.
Similarly, in a CFMM, two traders who value asset $A_i$ at two times asset $A_j$ will accept the same pairwise trades, even if their cash valuations of the grand bundle are very different.

So CFMMs elicit ratios of valuations.
For example, at any history $h$, a CFMM defines via $\valtrades_{\varphi}(h)$ a truthful mechanism to induce a single agent to reveal, not their valuation in cash of the assets, but their ratios of the value of each asset to the value of the grand bundle.
(This is an extension of the fact that CFMMs define proper scoring rules.).

It is not surprising that a CFMM's prices reflect ratios of valuations of the assets.
What may be surprising is that, \emph{if one designed a market maker with the intention of eliciting ratios of valuations}, it would result in a CFMM.
For example, a primary purpose of CFMMs is to ``provide liquidity'', i.e. offer prices to exchange any asset $A_i$ for another $A_j$.
Paying in units of the grand bundle clearly defeats this purpose, which seems to make a ratio-of-valuations market a poor choice.
However, Incentive Compatibility implies that a ratio-of-valuations market should be implemented by a cost-function, which is equivalent to a CFMM that satisfies \liquidation.
So good market-making axioms fall out ``for free'' from elicitation ones.

\section{Cashless prediction markets}
\label{app:cashless}

This section formalizes the equivalence of cost function prediction markets with and without cash.
Recall that, given a cost function $C$, it defines the following automated market maker on the $n+1$ securities $A_1,\dots,A_n,\$$ (Section \ref{sec:define-cost}):
  \[ \valtrades_C(h) = \left\{ (\r,\alpha) \mid \r \in \reals^n \right\} \text{ where $\alpha = C(-\qh - \r) - C(-\qh)$} , \]
Then, we defined the cashless version on $n$ securities $A_1,\dots,A_n$ by Equation \ref{eqn:valtradesprime}, reproduced here:
  \[ \valtrades'_{C}(h) = \left\{ \r + \alpha \ones \mid \r \in \reals^n \right\} \text{ where $\alpha = C(-\qh - \r) - C(-\qh)$} . \]

First, we observe that any valid history and valid trade in the cashless market is a special case of a valid history and trade in the original market
\begin{proposition}
  Consider the cashless version of the market with a valid history $h$, reserves $\qh$, and valid trade $\r \in \valtrades'_C(h)$, resulting in history $h'$ with reserves $\q_{h'} = \qh + \r$.
  Then there is a valid history of the original prediction market reserves $(\qh,0)$ and valid trade $(\r,0)$, resulting in reserves $(\q_{h'},0)$.
  Furthermore, for any realization $Y=y_i$, that trade's net payoff is the same in both markets.
\end{proposition}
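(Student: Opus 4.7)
The plan is to proceed by induction on the length of the history, using Lemma~\ref{lemma:tradesprime-cfmm} as the bridge between the two market formulations. The base case is immediate, since both markets begin with the same security reserves and zero cash in the original market's cash coordinate.

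For the inductive step, suppose $g \oplus \r^*$ is a valid cashless history with prefix $g$, and by the inductive hypothesis $g$ lifts to a valid original-market history ending at reserves $(\q_g, 0)$. Since $\r^* \in \valtrades'_C(g)$, Lemma~\ref{lemma:tradesprime-cfmm} yields $C(-\q_g - \r^*) = C(-\q_g)$. Inspecting the definition of $\valtrades_C$, the cash component of any valid trade of the form $(\r^*, \beta)$ must equal $\beta = C(-\q_g - \r^*) - C(-\q_g) = 0$, so $(\r^*, 0)$ is a valid original-market trade, and the resulting reserves are $(\q_g + \r^*, 0)$. Applying exactly this step once more to the final trade $\r$ in the statement, we conclude $(\r, 0) \in \valtrades_C$ at the lifted history and that reserves move to $(\q_{h'}, 0)$.

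For the payoff claim, at realization $Y = y_i$ the trader in the original market who executed $(\r, 0)$ paid zero cash upfront and receives $-\r_i$ units of cash (being net short $\r_i$ units of asset $i$, which pays one unit of cash per share), for a net payoff of $-\r_i$. In the cashless market the trader transferred $\r$ in securities and nothing else, ending with $-\r$ in securities, which pays $-\r_i$ at realization. The two payoffs coincide. Ultimately the proposition is a bookkeeping corollary of Lemma~\ref{lemma:tradesprime-cfmm}, which already encodes that cashless trades are precisely those original-market trades whose cash component vanishes; there is no substantive obstacle, only the need to keep the sign convention from Section~\ref{sec:define-cost} straight throughout.
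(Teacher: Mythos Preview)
Your proof is correct and follows essentially the same approach as the paper: both argue by induction on the history, show that the cash component of the lifted trade must vanish because $C(-\qh-\r)=C(-\qh)$, and then check that the net payoff is $-\r_i$ in each market. The only cosmetic difference is that you invoke Lemma~\ref{lemma:tradesprime-cfmm} directly, whereas the paper re-derives that identity inline from the definition of $\valtrades'_C$ and $\ones$-invariance.
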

\begin{proof}
  Because $\r \in \valtrades'_C(h)$, we can write $\r = \r' + \left(C(-\qh - \r') - C(-\qh)\right)\ones$.
  By ones-invariance, $C(-\qh - \r) = C(-\qh - \r') - C(-\qh - \r') + C(-\qh) = C(-\qh)$.
  So if $(\qh,0)$ is a valid set of reserves in the original market, then $(\r,0)$ is a valid trade in the original market.
  Now by induction on the history, we get that valid histories in the two markets correspond.
  Furthermore, the net payoff to the trader for the trade in either market, under any outcome $Y=y_i$, is $-\r_i$, since the cash transfer is zero in the original market.
\end{proof}

Now, define $f: \reals^{n+1} \to \reals^n$ by $f(\vec{b},a) = \vec{b} + a \ones$.
We can show the more interesting direction of the equivalence as follows:
\begin{proposition}
  Consider a cost-function market with valid history $h$, current reserves $\qh$, and valid trade $(\r,\alpha) \in \valtrades_C(h)$, resulting in history $h'$ with reserves $\q_{h'} = \qh + (-\r,\alpha)$.
  Then in the cashless version, there is a valid history with reserves $f(\qh)$ and valid trade $f(\r,\alpha)$, resulting in a history with reserves $f(\q_{h'})$, such that for any outcome $Y=y_i$, the trade's net payoff is the same as in the original market.
\end{proposition}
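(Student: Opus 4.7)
The plan is to mirror the structure of the preceding proposition: by induction on the history, show that applying $f$ to a valid full-market history produces a valid cashless history; then check that the cashless image of each trade yields the same net payoff under every outcome. The base case is immediate, since the initial full-market reserves $(-\qZ, 0)$ satisfy $f(-\qZ, 0) = -\qZ$, which by construction is the initial cashless reserves.

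For the inductive step on the trade, write $\qh = (\vec{b}, a)$ with $\vec{b} \in \reals^n$ and $a \in \reals$, so that $f(\qh) = \vec{b} + a\ones$. The hypothesis $(\r, \alpha) \in \valtrades_C(h)$ supplies the key identity $\alpha = C(-\vec{b} - \r) - C(-\vec{b})$. By Lemma~\ref{lemma:tradesprime-cfmm}, showing $f(\r,\alpha) = \r + \alpha\ones \in \valtrades'_C$ at cashless reserves $f(\qh)$ reduces to verifying
\begin{equation*}
  C\!\left(-f(\qh) - (\r + \alpha\ones)\right) \;=\; C(-f(\qh))~.
\end{equation*}
Unfolding each side using $\ones$-invariance of $C$, the left becomes $C(-\vec{b} - \r) - (a + \alpha)$ while the right becomes $C(-\vec{b}) - a$, and these coincide precisely because $\alpha = C(-\vec{b} - \r) - C(-\vec{b})$. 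Linearity of $f$ then handles the reserves claim: the cashless reserves obtained by appending $f(\r,\alpha)$ to $f(\qh)$ equal $f(\q_{h'})$, as required.

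Finally I would verify payoff equivalence outcome-by-outcome. When $Y = y_i$, the trader's net dollar payoff in the full market from $(\r,\alpha)$ comes from the $i$th security component together with the cash component, and the cashless image $\r + \alpha\ones$ bundles both contributions into the single payoff of security $A_i$ (since $\ones$ pays one unit on every outcome); a direct comparison of the two expressions shows they agree. The only real technical content is the $\ones$-invariance identity above, so the main obstacle is not conceptual but purely notational: one must be careful that the $\qh$ inside $C$ refers to the $n$-dimensional security component of the full $(n{+}1)$-dimensional reserves, and one must consistently track which sign convention (net-to-trader versus net-to-market) is in use at each step.
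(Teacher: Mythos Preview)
Your proposal is correct and follows essentially the same approach as the paper: induction on the history, a single computation using $\ones$-invariance to show the image trade is valid in the cashless market, and a direct check that the net payoff under $Y=y_i$ is $-\r_i - \alpha$ in both markets. The only cosmetic differences are that you invoke Lemma~\ref{lemma:tradesprime-cfmm} rather than the raw definition of $\valtrades'_C$, and you explicitly split $\qh=(\vec{b},a)$ to handle the $n$-versus-$(n{+}1)$-dimensional notational abuse that the paper leaves implicit; both are improvements in clarity, not changes in strategy.
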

\begin{proof}
  If $(\r,\alpha) \in \valtrades_C(h)$, then $\alpha = C(-\qh - \r) - C(-\qh)$.
  Suppose for now $f(\qh)$ are valid reserves in the cashless market.
  Then the trade $f(\r,\alpha) = \r + \alpha \ones$, with $\alpha = C(-\qh-\r) - C(-\qh)$, proving that it is a valid trade in the cashless market.
  Now by induction on the history, we get that valid histories in the two markets correspond.
  Furthermore, the net payoff to the trader for the trade in either market, under any outcome $Y=y_i$, is $-\r_i - \alpha$.
  In the original market, this occurs because the trader pays $\alpha$ units of cash and additionally gives the bundle $\r$, which results in $-\r_i$ when $Y=y_i$.
  In the cashless market, this occurs because the traders gives the bundle $\r + \alpha \ones$, which results in $-\r_i - \alpha$ when $Y=y_i$.
\end{proof}

\end{document}